\documentclass[a4paper]{article}
\usepackage[a4paper, total={6in, 8.5in}]{geometry}


\newif\ifarXiv
\arXivtrue

\usepackage{amsthm}
\usepackage[table]{xcolor}
\usepackage{auto-pst-pdf}
\usepackage{amssymb}
\usepackage{amsmath}
\usepackage{mathrsfs}
\usepackage{stmaryrd}
\usepackage{times}
\usepackage{latexsym}
\usepackage{hhline}
\usepackage{enumerate} 
\usepackage{calc} 
\usepackage{graphicx} 
\usepackage{ifthen} 
\usepackage{pst-poly} 
\usepackage{multido} 
\usepackage[normalem]{ulem} 
\usepackage[textwidth=4cm,textsize=footnotesize]{todonotes}

\usepackage{enumitem}
\usepackage{eucal} 
\usepackage{arydshln}
\usepackage{multirow}
\usepackage{relsize}
\usepackage[table]{xcolor}

\newcommand{\codeX}[1]{ \llbracket #1 \rrbracket_X}
\newcommand{\code}[2]{ \llbracket #2 \rrbracket_{#1}}

\newcommand{\ktile}[2]{B_{#2,#2}\left(#1\right)}
\newcommand{\tile}[1]{\ktile{#1}{2}}
\newcommand{\tset}{{\mathsf T}}
\newcommand{\tsetM}{{\mathsf M}}

\newcommand{\kblock}[2]{P_{#2,#2}\left(#1\right)}
\newcommand{\mytile}[4]{{ \scalebox{0.8}{
\begin{tabular}{|p{0.25cm}p{0.25cm}|}\cline{1-2}
$#1$ & $#2$ \\ $#3$ & $#4$\\ \cline{1-2}
\end{tabular}
}}}

\newcommand{\mytileDynamic}[4]{{ \scalebox{0.8}{
\begin{tabular}{|ll|}\cline{1-2}
$#1$ & $#2$ \\ $#3$ & $#4$\\ \cline{1-2}
\end{tabular}
}}}
\newcommand{\myLargeTile}[9]{{ \scalebox{0.8}{
\begin{tabular}{|lll|}\cline{1-3}
$#1$ & $#2$ & $#3$ \\ $#4$ & $#5$ & $#6$ \\ $#7$ & $#8$ & $#9$ \\ \cline{1-3}
\end{tabular}
}}}
\newcommand{\Mod}[1]{\ (\mathrm{mod}\ #1)} 
\theoremstyle{plain}
\newtheorem{definition}{Definition}

\newtheorem{proposition}{Proposition}
\newtheorem{lemma}{Lemma}
\newtheorem{theorem}{Theorem}
\newtheorem{corollary}{Corollary}

\theoremstyle{remark}
\newtheorem{example}{Example}
\newtheorem{remark}{Remark}



\makeatletter
\DeclareFontFamily{OMX}{MnSymbolE}{}
\DeclareSymbolFont{MnLargeSymbols}{OMX}{MnSymbolE}{m}{n}
\SetSymbolFont{MnLargeSymbols}{bold}{OMX}{MnSymbolE}{b}{n}
\DeclareFontShape{OMX}{MnSymbolE}{m}{n}{
	<-6> MnSymbolE5
	<6-7> MnSymbolE6
	<7-8> MnSymbolE7
	<8-9> MnSymbolE8
	<9-10> MnSymbolE9
	<10-12> MnSymbolE10
	<12-> MnSymbolE12
}{}
\DeclareFontShape{OMX}{MnSymbolE}{b}{n}{
	<-6> MnSymbolE-Bold5
	<6-7> MnSymbolE-Bold6
	<7-8> MnSymbolE-Bold7
	<8-9> MnSymbolE-Bold8
	<9-10> MnSymbolE-Bold9
	<10-12> MnSymbolE-Bold10
	<12-> MnSymbolE-Bold12
}{}

\let\llangle\@undefined
\let\rrangle\@undefined
\DeclareMathDelimiter{\llangle}{\mathopen}%
{MnLargeSymbols}{'164}{MnLargeSymbols}{'164}
\DeclareMathDelimiter{\rrangle}{\mathclose}%
{MnLargeSymbols}{'171}{MnLargeSymbols}{'171}
\makeatother

\newcommand{\subd}[4]{(#1,#2\ ;\ #3,#4)}

\def\rec{\text{\rm{REC}}} 

 {\setlength\celldim{#1}%
 \setlength\extraheight{\celldim - \fontheight}%
 \setcounter{sqcolumns}{#2 - 1}%
 \begin{tabular}
 {|S|*{ \value{sqcolumns} }{Z|}}
 \hline}
 {\end{tabular}}

\begin{document}
\ifarXiv \title{
Reducing the local alphabet size in tiling systems by means of 2D comma-free codes}
\author{Stefano {Crespi Reghizzi}$^1$ \and Antonio Restivo$^2$ \and Pierluigi {San Pietro}$^1$}
\date{
	$^1$Dipartimento di Elettronica Informazione e Bioingegneria, \\Politecnico di Milano,\\Piazza Leonardo da Vinci 32, Milano, Italy 
	\\ \texttt{\{stefano.crespireghizzi, pierluigi.sanpietro\}@polimi.it}\\%
	$^2$Dipartimento di Matematica e Informatica, \\Universit\`{a} di Palermo  \\ \texttt{antonio.restivo@unipa.it}\\[2ex]%
}

\maketitle
\else
\title{
Reducing the local alphabet size in tiling systems by means of 2D comma-free codes
\tnoteref{t1}\tnoteref{t2}}
\tnotetext[t2]{Part of this work was published preliminarily in~\cite{Crespi-ReghizziRestivoSanPietro21} and presented at the conference \emph{DLT 2021}.}

\author[pdm,cnr]{Stefano {Crespi Reghizzi}}
\ead{stefano.crespireghizzi@polimi.it}
\author[unipa]{Antonio Restivo}
\ead{antonio.restivo@unipa.it}
\author[pdm,cnr]{Pierluigi {San Pietro}}
\ead{pierluigi.sanpietro@polimi.it}

\address[pdm]{Dipartimento di Elettronica Informazione e Bioingegneria, Politecnico di Milano,\\Piazza Leonardo da Vinci 32, Milano, Italy}
\address[cnr]{CNR IEIIT-MI, Milano, Italy}
\address[unipa]{Dipartimento di Matematica e Informatica, Universit\`{a} di Palermo }

\fi

\begin{abstract}
A recognizable picture language is defined as the projection of a local picture language defined by a set of two-by-two tiles, i.e. by a strictly-locally-testable (SLT) language of order 2. 
The family of recognizable picture languages is also defined, using larger $k$ by $k$ tiles, $k>2$, by the projection of the corresponding SLT language. 
A basic measure of the descriptive complexity of a picture language is given by the size of the SLT alphabet using two-by-two tiles, more precisely by the so-called alphabetic ratio of sizes: 
SLT-alphabet / picture-alphabet. We study how the alphabetic ratio changes moving from tiles of size two to tiles of larger size, and we obtain the following result: 
any recognizable picture language over an alphabet of size $n$ is the projection of an SLT language over an alphabet of size $2n$. 
 Moreover, two is the minimal alphabetic ratio possible in general. 
{ The proof  relies on a new family of comma-free picture codes, for which a lower bound on  numerosity is established; and on the relation of languages of encoded pictures with SLT languages.} 
Our result reproduces in two dimensions a similar property (known as Extended Medvedev's theorem) of the regular word languages, 
concerning the minimal alphabetic ratio needed to define a language by means of a projection of an SLT word language.
 
\end{abstract}
\ifarXiv 
\else \maketitle 
\fi
\pagenumbering{arabic}

\section{Introduction}\label{sect:introd}
 The \emph{tiling recognizable}, for short \rec, \ picture languages~\cite{GiammRestivo1997} are one of the best known extensions of the formal languages from one to two dimensions, i.e., from words to digital pictures (see~\cite{DBLP:books/ems/21/Crespi-ReghizziGL21} for a recent survey). An element of a picture language is a rectangular array of cells each one containing a symbol, called pixel, from a finite alphabet.
\par
To see how the formal definition of \rec~extends the definition of regular word languages (for short REG) we have to consider the classic characterization by the projection (in the sense of a letter-to-letter morphism) of local word languages. 
This definition is also known as Medvedev's theorem~\cite{Medvedev1964,Eilenberg74}, for short {MT}.
A local language is simply defined by the set words of length two that may occur as subtrings. Then, Medvedev's theorem says that, for each regular language $R$ over an alphabet $\Sigma$, there exists a local language $L$ over an alphabet $\Lambda$ and a letter-to-letter projection $h:\,\Lambda^* \to \Sigma^*$ such that $R=h(L)$. The two alphabets $\Sigma$ and $\Lambda$ are respectively called terminal and local. 
A word $z$ of the local language is called a pre-image of the terminal word $h(z)$. 
\par
An analogous statement is the primary definition of the \rec~languages. It differs with respect to the local languages used in the definition that now are the local picture languages. 
These are defined by the set of two-by-two pictures they may contain; each such small square picture contains therefore four pixels and is called a \emph{tile}. 
As a matter of fact, the formal definition of \rec~is based on a so-called \emph{tiling system} (TS) consisting of a set of tiles over the local alphabet and of a projection from the local alphabet to the terminal one. 
\par
Like the REG family, the \rec~family too can be defined by different formal approaches in particular by a sort of 2D cellular automaton. 
But such alternative definitions do not have the same fundamental role for pictures as, say, the finite-state model for words.
\par
Quite a few formal properties of REG languages continue to hold for \rec,~in particular the closure properties under projection, concatenation, union and intersection. 
The present research adds to the list of such properties a new one that pertains to the MT definition of both REG and \rec~languages, and we may call the ``extended Medvedev's theorem with alphabetic ratio two''. It is easier to explain what this property is starting from its formulation for REG and then lifting it to \rec.
\par
Local word languages, which as we mentioned are the pre-image $L$ in  
Medvedev's statement $R=h(L)$, 
are located at the lowest level of an infinite language family hierarchy, indexed by $k=2, 3, \ldots$, called $k$-\emph{strictly locally testable} ($k$-SLT)~\cite{McPa71}. 
Each level $k$ is defined by the set of $k$-factors that may occur in a sentence. The SLT family is the union of all the hierarchy members.
Ii is obvious that the identity stated in MT continues to hold if, instead of a 2-SLT (i.e., local) language, we use a language that is $k$-SLT for $k \geq 2$. 
For clarity we refer to such formulation as the \emph{Extended} Medvedev's Theorem (EMT). 
\par
Four integer parameters of EMT are here relevant: 
\begin{enumerate}
	\item the size of the state set $Q$ of a minimal finite automaton (FA) recognizing $R$;
	\item the cardinalities of the local $\Lambda$ and terminal $\Sigma$ alphabets, or, better, the ratio $\frac{|\Lambda|}{|\Sigma|}$ that we call \emph{alphabetic ratio};
	\item  the value of parameter $k$ that determines the \emph{order} of strict local testability. 
\end{enumerate}
\par
About the local alphabet, Medvedev's statement~\cite{Eilenberg74} (for $k=2$) says that the local alphabet has size\footnote{In the original statement the value was $ |\Sigma| \cdot |Q|^2$ but the lower value can be easily found.} $|\Lambda|= |\Sigma| \cdot |Q|$.
Differently said, MT states that every regular word language of state complexity $|Q|$ is the projection of a 2-SLT language over an alphabet that is $|Q|$ times larger than the terminal alphabet, 
i.e. the alphabetic ratio is $|Q|$ and therefore depends on the language state complexity. 
\par
Much later, the analysis of the alphabetic ratio in the extended MT has obtained the following property~\cite{DBLP:journals/ijfcs/Crespi-ReghizziP12} that is the starting point for the present work.
For every regular language $R$ there is a value $k$ such that $R$ is the projection of a $k$-SLT language over an alphabet of size $2\cdot |\Sigma|$, i.e., with alphabetic ratio 2. 
Moreover, such ratio is minimal. The third parameter, the  order $k$, is in $\mathcal{O}(\log |Q|)$.

\par
The main question we address and solve here is whether an extended Medvedev's theorem holds for \rec~languages and what are the values of the alphabetic ratio and of the order parameter $k$. 
 Concerning the latter, we need to say what in 2D corresponds to the family of $k$-SLT word languages.
Recalling that for $k=2$ the local picture languages are defined by a set of 2 by 2 tiles over $\Lambda$, for any finite $k$ the family of $k$-SLT picture languages is defined by the set of $k$ by $k$ tiles ($k$-tiles) that may occur in a picture. A tiling systems that uses $k$-tiles is called a $k$-tiling system.
\par
We anticipate the main result (Theorem~\ref{thm-main}): for any picture language $R$ in \rec~there exist 
	$k\ge 4$ and a $k$-tiling system with alphabetic ratio 2, recognizing $R$.
Moreover, if $n$ is the size of the local alphabet of a tiling system recognizing $R$, then the value $k$ is $\mathcal{O}(\lg n$).
\par
It may help understanding to fix a specific case; imagine that $R$ is a black and white picture language, then it is the projection of an SLT picture language over a four letters alphabet.
\par 
Next we outline the articulation of the proof, mentioning on one hand how it compares with the EMT proof for REG and, on the other hand, the new lemmas that we have formalized, some of them of independent interest. 
\par
The proof of EMT for REG in~\cite{DBLP:journals/ijfcs/Crespi-ReghizziP12} involves the following concepts. 
\begin{enumerate}
\item We sample with rate $k$ an accepting run of a minimal finite automaton (FA) recognizing language $R$ thus identifying the states at distance $k$ steps.
\item We encode each sampled FA state by means of a binary code-word of length $k$ taken from a comma-free code dictionary. The code bits are then distributed on the following $k$ steps. The property of self-synchronization of comma-free codes permits to avoid erroneous decoding.
\item It is known that the concatenation closure of a comma-free code dictionary is a $2k$-SLT language if the code-word length is $k\ge 2$. This ensures that the pre-image is an SLT language.
\item We handle the final steps of the run when they fall short of a complete sample $k$ and thus they would not match a whole code-word.
\end{enumerate}
 We mention that a similar approach, {using unary rather than } binary codes, was already applied in~\cite{DBLP:journals/jcss/Thomas82} to prove a result on the logical definition of REG. In a more general setting of local functions, another proof of EMT is in~\cite{DBLP:conf/cai/Crespi-Reghizzi19}.
\par
The proof significantly changes moving from words to pictures along the following lines; they may be difficult to understand at first reading but they hopefully convey some useful intuition of the paper technical content.
\begin{enumerate}
	\item  Given a tiling recognizable picture language $R$ over the alphabet $\Sigma$ (i.e. $R$ is the projection of local language $L$ over an alphabet $\Gamma$), instead of sampling an FA run, as in the string case, we tessellate the pre-image in $L$ of a given picture of $R$ into square subpictures of sike $k$, that completely pave the picture, assuming for now that its sides are multiple of k.
	\item In each $k$ by $k$ square we place a binary 2D comma-free code-picture that encodes the periphery of the pre-image square, i.e. a 1-thin square ring of side $k$, to be called a frame. 
	 (the frame information suffices to determine which compositions of such $k$-tiles are elements of $L$). This allows to code the $k$ by $k$ squares obtained by the above tessellation with a comma-free code over the alphabet $\Sigma \times \{0,1\}$.
\par
For the preceding step, we had to design a new family of comma-free 2D codes that has a numerosity suitable for our purposes. 
This family should be a worth addition to the currently limited knowledge of 2D comma-free codes.
\par
	 We prove that the set of pictures tessellated by comma-free codes of size $k$ is a $2k$-SLT languages, and then we derive, by means of the notion of picture morphism, that the language $R$ is projection of a $2k$-SLT language over an alphabet of size $2 |\Sigma|$.
	{Such a result is known for word languages, but its extension to pictures is new.}
	\item The case of pictures with a side not multiple of $k$ is handled by a new padding technique that enlarges the picture.
\end{enumerate}
We stress just one subtle aspect of such an articulated proof. At item (2) we encoded just the frame and not the whole $k$ by $k$ pre-image square. This permits, for sufficiently large $k$, to use the $k^2$ bits of a comma-free code-picture to encode the ring of $4\cdot k$ terminal symbols, i.e., the pre-image frame. A conservative computation of the numerosity of our new comma-free code dictionary shows that any \rec~picture can be adequately encoded.
\medskip
\par
Sections and contents. Section~\ref{s-preliminaries} contains the basic notions of picture languages. Section~\ref{s-kTS} deals with $k$-SLT picture languages and their use in tiling systems. Section~\ref{s-comma-free} introduces the comma-free codes for pictures and a new code family, then proves that the closure of such codes is in SLT. 
Section~\ref{s-main}  proves the minimality of the alphabetic ratio two; then, it introduces the padded languages, whose pictures have sides multiple of a given $k\ge 2$; 
the EMT is then first proved for padded languages; finally, the padding is eliminated, proving the EMT for all REC languages. 
The Conclusion raises a general question about the possibility of similar results for other families of languages different from REG and \rec.

\section{Preliminaries}\label{s-preliminaries}
All the alphabets to be considered are finite.
The following concepts and notations for picture languages follow mostly~\cite{GiammRestivo1997}.
\begin{definition}[picture and picture language]\label{def:pictureLanguage}
A {\em picture} is a rectangular array of letters over an alphabet.
Given a picture $p$, $|p|_{row}$ and $|p|_{col}$ denote the number of rows and columns, respectively; $|p|=\left(|p|_{row},|p|_{col}\right)$ denotes the \emph{picture size}. Two pictures of identical size are called \emph{isometric}.
The set of all pictures over $\Sigma$ of size $(m,n)$ is denoted by $\Sigma^{m, n}$ 
 and the set of all finite pictures over $\Sigma$ is denoted by $\Sigma^{++}$.
\par\noindent
A {\em picture language} over $\Sigma$ is a subset of $\Sigma^{++}$.
\end{definition}
 In the following, the term ``language'' always stands for picture language, and word languages are qualified as such.
\par	
 For a picture $p$ of size $(m,n)$ over an alphabet $\Gamma$ we also use the short notation 
$p_{i,j}\in \Gamma$ to stand for the pixel at position $(i,j)$ in the picture:
\begin{center}
	\scalebox{1}{ 
	$ p = 
		\myLargeTile{p_{1,1}}{\dots}{p_{1,n}}{\dots}{\dots}{\dots}{p_{m,1}}{\dots}{ p_{m,n}}
		$
 	}
\end{center}

\paragraph{Concatenations} Let $p, q$ be two pictures. 
The {\em horizontal} (or {\em column}) {\em concatenation} $p\obar q$
is intuitively defined when $|p|_{row}= |q|_{row}$ as: 
\def\bBox#1#2{\makebox[#1]{#2}} 
\scalebox{0.8}{
\psframebox{\bBox{0.8cm}{$p$}} \psframebox{\bBox{0.8cm}{$q$}}
}
. 
The {\em vertical} (or {\em row}) {\em concatenation} $p \ominus q$ is similarly
defined when $|p|_{col}= |q|_{col}$ as:
\scalebox{0.8}{
\begin{tabular}{c}\psframebox{\bBox{0.8cm}{$p$}}\\ \psframebox{\bBox{0.8cm}{$q$}}
\end{tabular}
}
.
Concatenations are extended to languages in the obvious way.
 We also need the power of the two concatenations, respectively denoted by $p^{\ominus h}$ and $p^{\obar h}$, $h\geq 1$.
\par
The notations $\Sigma^{m, n}$ and $\Sigma^{++}$ are immediately extended by considering, instead of an alphabet $\Sigma$, a finite set $F \subseteq \Sigma^{k, k}$, $k \geq 1$ of (isometric) pictures.\footnote{In the literature the notation $F^{++}$ is defined for any set $F$ of pictures, but we do not need such greater generality.} 
$F^{++}$ denotes the closure under horizontal and vertical concatenations of the pictures in $F$.
We also need the closure under horizontal concatenation and the closure under vertical concatenation, denoted respectively as $F^{\obar +}$ and $F^{\ominus +}$.
\par
Since the symbols on the boundary of picture often play a special role for recognition, it is convenient to surround them by a strip of width one, called \emph{frame}, of the reserved symbol $\sharp$ that may not occur within a picture. A picture $p$ with its frame is called \emph{bordered} and denoted by $\widehat{p}$; 
it has size $(|p|_{row}+2)$, $(|p|_{col}+2)$. We extend the notation to a language by writing $\widehat L= \{\widehat p \mid p \in L\}$.
\par
A \emph{subpicture of $p$}, denoted by $ p_{\subd i j {i'}{j'}}$, is the portion of $p$ defined by the top-left coordinates $(i,j)$ and by the bottom right coordinates
$({i'}, {j'})$, with $1\le i \le i'\le |p|_{row}$, and $1\le j \le j'\le |p|_{col}$.
The set of all subpictures of size $(2,2)$ (if any) of a picture $p$, called {\em tiles}, is denoted as $\ktile{p}{2}$.

\paragraph{Tiling recognition}\label{s-tiling}
We recall the classical definition of tiling recognizable language as the image under an alphabetic projection of a local language.
\par
A language $L\subseteq \Sigma^{++}$ is \emph{local} if there exists a finite set $\tset_2$ of {\em tiles } 
	in $(\Sigma\cup \{\#\})^{2,2}$ such that 
	$L=\{p\in \Sigma^{++} \mid \ktile{\widehat p}{2}\subseteq \tset_2 \}$; 
	we also write $L = L(\tset_2)$.
\par
Let $\Gamma$ and $\Sigma$ be alphabets.
 Given a mapping $\pi:\Gamma\rightarrow \Sigma$, to be termed {\em projection}, we extend it to pairs of isometric pictures $p'\in \Gamma^{++}$, $p \in \Sigma^{++}$
by: 
\[
p= \pi(p') \text{ such that } p_{i,j}=\pi(p'_{i,j}) \text{ for all }(i,j) \in \, 1\ldots |p'|_{row} \, \times\, 1\ldots |p'|_{col}.
\] 
Then,
$p'$ is called a \emph{pre-image} of $p$. 
\begin{definition}[tiling system]
	\label{def:TS}\label{def:REC}
	A {\em tiling system }(TS) is a quadruple $(\Sigma,\Gamma,\tset,\pi)$ where $\Sigma$ and $\Gamma$ are respectively the \emph{terminal} and the \emph{local} alphabets, 
	$\tset\subseteq(\Sigma\cup \{\#\})^{2,2}$ is the {\em tile set} and $\pi:\Gamma\rightarrow \Sigma$ is a projection.
	A language $L\subseteq \Sigma^{++}$ is recognized by such a TS 
	if 	$L=\pi(L(\tset))$. The family of all {\em tiling recognizable} languages is denoted by \rec.
\end{definition}
It is worth observing that the above definition includes as a special case the family of regular word languages. It suffices to view a word $x\in \Sigma^+$ of length $n$ as an one-row picture $x'$ of size $(1,n)$. The bordered version of $x'$ is
\[
\widehat x' = \sharp^{n+2} \ominus \left(\sharp \obar x' \obar \sharp\right)\ominus\sharp^{n+2}\,.
\] 
Then, the definition of a \rec~language $L'$ of one-row pictures as $L'=\pi(L'(\tset))$ immediately corresponds to the definition of the regular word language $L$ as the projection of the local word language defined by the rectangular ``tiles'' 
of size $(1,2)$ occurring in the set $\tset$. Such a way of definining a regular word language corresponds to the historical definition known as Medvedev's theorem~\cite{Medvedev1964,Eilenberg74}. 
Notice that in the case of word languages, a pre-image of a word $w$ in the tiling system matches  the notion of an accepting run of $w$ using a finite automaton recognizing the same language.

\section{Tradeoff between alphabet cardinality and tile size}\label{s-kTS}
In this section we consider the role and interdependence of two basic parameters present in a TS, the local alphabet cardinality and the size of the tiles. The latter was fixed to $(2,2)$ in Definition~\ref{def:REC}, but here we allow larger tiles. 
The hierarchies of language families induced by the two parameters are stated. We finish with an example showing how tile enlargement may permit to reduce the local alphabet cardinality. 

\paragraph{Local alphabet cardinality}
We first formalize the folklore fact that the local alphabet size needed to define a \rec~language is a measure of its complexity, so that such a parameter induces an infinite hierarchy on the \rec~family. 
\par
 Let $\ell\ge 1$ be the cardinality of the local alphabet $\Gamma$ in Definition~\ref{def:REC}.
\begin{proposition}\label{prop-recHierarchy}
For every $\ell\ge 1$, let $\rec_\ell$ be the family of languages recognized by tiling systems with a local alphabet of cardinality at most $\ell$. Then, the following inclusion holds, 
$\rec_\ell \subsetneq \rec_{\ell+1}$.
\end{proposition}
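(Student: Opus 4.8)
The plan is to establish the strict inclusion in two parts: the non-strict inclusion $\rec_\ell \subseteq \rec_{\ell+1}$ is immediate, since any tiling system with a local alphabet of size at most $\ell$ is in particular a tiling system with a local alphabet of size at most $\ell+1$ (one may simply ignore the extra letter). The substance of the claim is therefore the strictness: for every $\ell \ge 1$ there must exist a language in $\rec_{\ell+1}$ that is not in $\rec_\ell$. The approach is to exhibit, for each $\ell$, an explicit witness language.

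First I would look for a separating family parameterized by the number of distinct pixel values that a picture "needs to carry through" a local constraint. A natural candidate is a language of one-row pictures (words), because for words the TS machinery coincides with Medvedev's theorem and the local alphabet size is tied to the number of states of the minimal automaton. Concretely, consider for each $m$ a finite word language $R_m$ whose minimal DFA has exactly $m$ states and which is "state-minimal" in a strong sense — e.g. $R_m = \{a^i b a^{m-1-i} \mid 0 \le i \le m-1\}$ over $\Sigma=\{a,b\}$, or the language of words of length $m$ over a one-letter alphabet, or a cyclic/counting language mod $m$. Viewed as one-row picture languages, these lie in $\rec$. The key step is a lower-bound argument: any tiling system recognizing $R_m$ (as one-row pictures, with the standard $\sharp$-border) must use a local alphabet of size at least some increasing function $f(m)$, because in a pre-image the tile set can only "see" a $2\times 2$ window, so the local letter sitting over position $j$ together with its horizontal neighbor must already encode enough information (essentially the reachable/co-reachable state pair) to decide consistency — a pumping/fooling-set argument on pairs of positions that must receive different local letters. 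Then, choosing $m$ large enough that $f(m) > \ell$, we get a language in $\rec$ needing more than $\ell$ local letters; truncating to a language actually recognizable with exactly $\ell+1$ letters (or adjusting $m$ so that $f(m)=\ell+1$ on the nose) places the witness in $\rec_{\ell+1}\setminus\rec_\ell$.

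The cleanest route may actually avoid words and use a genuinely two-dimensional counting device — for instance the language of $n\times n$ squares, or pictures in which some boundary row must equal some other row, where a diagonalization / information-theoretic counting shows that fewer than $\ell+1$ local symbols cannot distinguish the required pre-image configurations. In either formulation, the main obstacle is the lower bound: proving that a given witness language is \emph{not} in $\rec_\ell$. Upper bounds (membership in $\rec_{\ell+1}$) are routine constructions, but ruling out all tiling systems with $\le \ell$ local letters requires a robust combinatorial argument — typically a fooling-set or Myhill–Nerode-style lemma adapted to tiles, showing that distinct "contexts" in the picture force distinct local letters, and counting how many such contexts the witness language imposes. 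I would isolate this as a separate lemma ("any TS recognizing $R_m$ has $|\Gamma| \ge f(m)$"), prove it by contradiction via a counting/pumping argument on the pre-images, and then the Proposition follows by instantiating $m$ so that $f(m) = \ell+1$.
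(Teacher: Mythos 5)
Your primary route is essentially the paper's: the paper also separates the levels with a one-row (word) witness, namely $R_\ell=\{a^{\ell-1}b\}^+$, and proves the lower bound by exactly the mechanism you gesture at --- if the local alphabet has fewer than $\ell$ letters, then in a pre-image of $a^{\ell-1}b$ some local letter occurs twice (pigeonhole), and duplicating the segment between the two occurrences produces a picture with the same set of $2$-tiles, hence accepted, whose projection is a word with a single $b$ of the wrong length, contradicting $R_\ell=\pi(L(\tset))$. Your ``counting language mod $m$'' candidate and your description of the argument (repeated local letters in a pre-image force acceptance of a pumped picture) are this proof in all but execution.

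Two caveats. First, your proposal stops exactly where the content of the proposition lies: the lemma ``any TS recognizing the witness has $|\Gamma|\ge f(m)$'' is announced, with several hedges about which witness and which technique (fooling set, Myhill--Nerode, counting), but never carried out. As written it is a plan rather than a proof; the paper's argument shows the missing step is a three-line pumping argument, so the plan is completable, but you should commit to one witness and actually run the pigeonhole-plus-pumping argument (and note, as you do, that membership of the witness in $\rec_{\ell+1}$ is routine, e.g.\ by counting positions modulo $\ell+1$). Second, your suggested ``cleanest route'' via a genuinely two-dimensional witness such as the language of $n\times n$ squares cannot work: that language is recognized by a tiling system over a local alphabet of small constant size (a diagonal construction, in the spirit of Example~\ref{ex:k-localLanguages}), so it lies in $\rec_c$ for a fixed $c$ and cannot separate $\rec_\ell$ from $\rec_{\ell+1}$ for $\ell\ge c$. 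Any separating family must impose an unbounded counting requirement on the local alphabet, which is precisely what the word-language witnesses do.
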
 
\begin{proof} 
Let $\ell\ge 1$ and consider a TS $(\{a, b\}, \Gamma, \tset,  \pi)$ accepting the word language $R_\ell=\{a^{\ell -1} b\}^+$. 
We claim that $|\Gamma|\ge \ell$. 
\par\noindent
By contradiction, assume that there is a TS recognizing $R_\ell$ such that 
$\Gamma = \{1, 2, \dots, j\}$ for some $j<\ell$. Let $\alpha = i_1 i_2 \dots i_j, \dots i_\ell \in L(\tset)$ (whose projection $\pi$ is $a^{\ell -1} b$), for suitable $i_1, i_2, \dots, i_j, \dots, i_\ell \in \Gamma$. 
Therefore, the tiles of the form $\mytileDynamic{\#}{\#}{i_h}{i_{h+1}}$, for all $1\le h<\ell$ must be in $\tset $. 
Since $j<\ell$, there exist $m,n$, with $1\le m<n\le \ell$ such that $\alpha= i_1 i_2 \dots i_m i_{m+1} \dots i_{n-1} i_n \dots i_\ell$ with $i_m = i_n$.
\par\noindent Therefore, the picture $\beta= i_1 i_2 \dots i_m i_{m+1} \dots i_n i_{m+1} \dots i_{n-1} i_n \dots i_\ell$ has the same tiles of $\alpha$,
hence also $\beta \in L(T)$,
with $\pi(\beta) = a^{\ell-1+m-n} b$, a contradiction. 
\qed
\end{proof}

\paragraph{Tiles and strict local testability}
 We consider the second parameter of interest to us, namely the tile size.
We lift from one to two dimensions a well-known approach for defining word languages, in order to introduce a language family characterized by the subpictures of size $k$. 
\newcommand{\dhat}[1]{{\widehat{\widehat{ #1\,}}}}
\par
Given $k \geq 2$, we denote by $\ktile{p}{k}$ the set of all subpictures of size $(k,k)$, if any, that occur in picture $p$. 
The set of all subpictures of size $(k,k)$ of a language $L$ is defined as $\ktile{L}{k}= \bigcup_{p \in L}\ktile{ p}{k}$.
	The elements of the sets $\ktile{p}{k}$ and $\ktile{L}{k}$ are called the {\em $k$-tiles} of $p$ and respectively of $L$. 

In the definition of local languages, the membership of a picture $p$ in the language is determined by the set $\ktile{\widehat p}{2}$,
i.e., the set of subpictures of size $(2,2)$ of $\widehat p$. 
Even in the case $p$ is composed by a single letter, $\ktile{\widehat p}{2}$ is well defined. 
The subpictures in $\ktile{\widehat p}{k}$ are, however, well defined only if $p$ has size at least $(k-1,k-1)$. 
To obviate the issue, we enlarge the border containing the reserved symbol $\#$.

\par
For any $k \ge2$, for any picture $p \in \Sigma^{++}$, we denote by $\dhat{\, p}$ the picture with thicker border, obtained by surrounding $\widehat p$ with the minimum number of rows of \#’s at the bottom of $p$ and with the minimum number of columns of \#’s at the right of $p$
such that the subpictures of $\dhat{\, p}$ of size $(k,k)$ are defined. 
Remark that, if a picture $p$ has size $(m,n)$ with $m,n \ge k-1$, then the picture with thicker border $\dhat{\, p} = \widehat p$. 
In particular, if $k = 2$, for every picture $p$, $\dhat{\, p} = \widehat p$.

	\begin{definition}[strict local testability]\label{def:SLT}
		Given $k\ge 2$, a language $L\subseteq \Sigma^{++}$ is {\em $k$-strictly-locally-testable} ($k$-SLT) if there exists a finite set 
		$\tset_k\subseteq \left(\Sigma\cup\{\#\}\right)^{k,k}$
		such that 
		$L=\left\{p\in \Sigma^{++} \mid \ktile{\dhat{p\ }}{k}\subseteq \tset_k \right\}$; 
		we also write $L = L(\tset_k)$. The value $k$ is called the order of $L$. A language is called strictly-locally-testable (SLT) if it is $k$-SLT for some $k \ge 2$. 
	\end{definition}
	In other words, to check that a picture $p$ belongs to a $k$-SLT language $L$, we check that each subpicture
	of size $(k, k)$ of the picture  with thicker border $\dhat{p}$ is included in a given set of $k$-tiles. 
	In particular, a local language is the special case $k=2$, i.e., a 2-SLT language.
	\par

\par
Since $k$-SLT (picture) languages include as a special case $k$-SLT word languages, the following proposition derives immediately from a known language hierarchy (e.g. in~\cite{McPa71}).
\begin{proposition}\label{prop:k-hierarchy}
For every $k \geq 2$, the family of $k$-SLT languages over a terminal alphabet of cardinality $|\Sigma|>1$ is strictly included in the family of $(k+1)$-SLT languages. 
\end{proposition}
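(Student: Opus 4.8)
The statement has two halves, which I would handle separately: that every $k$-SLT language is $(k{+}1)$-SLT, and that this inclusion is proper. For the inclusion I would use the standard argument that enlarges the tile size from $k$ to $k{+}1$, adapted to pictures. Let $L=L(\tset_k)$ with $\tset_k\subseteq(\Sigma\cup\{\#\})^{k,k}$; without loss of generality $\tset_k$ is \emph{tight}, i.e. $\tset_k=\bigcup_{p\in L}\ktile{\dhat{p}}{k}$, since replacing $\tset_k$ by this smaller set leaves $L(\tset_k)$ unchanged. I would then take $\tset_{k+1}$ to be the set of all $(k{+}1,k{+}1)$ pictures over $\Sigma\cup\{\#\}$ that (a) may occur as a subpicture of $\dhat{p}$ for some $p\in\Sigma^{++}$ and (b) all of whose $(k,k)$ subpictures that may occur as a subpicture of some $\dhat{p'}$ lie in $\tset_k$, and prove $L(\tset_{k+1})=L$. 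The verification rests on two elementary geometric facts: for every picture $q$, the thick-bordered picture $\dhat{q}$ formed for order $k$ is the top-left subpicture of the one formed for order $k{+}1$; and, inside the order-$(k{+}1)$ version, every $(k,k)$ window is contained in some $(k{+}1,k{+}1)$ window. Together these make the family of $(k{+}1)$-tiles of $\dhat{q}$ determine the family of its $(k,k)$-tiles, so that, writing $\dhat{q}$ for the thick border of the appropriate order in each case, $\ktile{\dhat{q}}{k+1}\subseteq\tset_{k+1}$ becomes equivalent to $\ktile{\dhat{q}}{k}\subseteq\tset_k$.

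The delicate point --- and the only place where the two-dimensional argument is not a literal transcription of the word case --- is the \emph{thick-border} convention: when $q$ has a side shorter than $k-1$, the border added to form $\dhat{q}$ is genuinely thicker for order $k{+}1$ than for order $k$, so the order-$(k{+}1)$ picture carries extra all-$\#$ rows or columns, hence extra $(k,k)$ sub-windows, compared with the order-$k$ one. I expect the main work to lie in checking that each such extra window contains two or more consecutive all-$\#$ border rows or columns in a position that prevents it from occurring inside any $\dhat{p'}$, so that clause (b) discards precisely these windows: no genuine constraint of $\tset_k$ is then lost and no spurious one imposed. Managing this bookkeeping over the short-sided pictures is the real obstacle; the rest is routine.

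For properness I would transfer the one-dimensional hierarchy. Identifying a nonempty word $w$ over $\Sigma$ with the one-row picture $w'$ of size $(1,|w|)$, put $\widetilde{W}=\{w'\mid w\in W\}$ for a word language $W\subseteq\Sigma^+$. If $\widetilde{W}$ is $k$-SLT, say $\widetilde{W}=L(\tset_k)$, then, because $\dhat{w'}$ has a rigid vertical $\#$-structure, its $(k,k)$-tiles are in bijection with the length-$k$ factors of the one-symbol-padded word $\#w\#$; consequently $w'\in L(\tset_k)$ translates into a $k$-testability condition on $w$ (an allowed $(k{-}1)$-prefix, allowed $k$-factors, and an allowed $(k{-}1)$-suffix), so $W$ is locally $k$-testable in the sense of McNaughton and Papert. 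I would then invoke their strict hierarchy of locally $k$-testable word languages over an alphabet of size $>1$~\cite{McPa71}: some $W\subseteq\Sigma^+$ is $(k{+}1)$-testable but not $k$-testable. Such a $W$ is easily seen to be $(k{+}1)$-SLT as a picture language as well --- encode its defining factors as one-row tiles and add the $\#$-only tiles forced by the fixed vertical pattern --- whereas the reduction just described shows $\widetilde{W}$ is not $k$-SLT. Thus $\widetilde{W}$ separates level $k$ from level $k{+}1$, completing the plan.
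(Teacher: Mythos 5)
Your plan is correct and follows essentially the same route as the paper, which disposes of this proposition in one line by observing that $k$-SLT word languages embed as one-row $k$-SLT picture languages and invoking the known strict hierarchy of strictly locally testable word languages from~\cite{McPa71}; your transfer of the 1D witness via the $\#$-row correspondence, and your tile-enlargement argument for the inclusion (which the paper leaves implicit), both check out, including the border bookkeeping you flag. Only make sure to say \emph{strictly} locally $k$-testable rather than locally $k$-testable when citing the word-language hierarchy, since that is the notion your factor/prefix/suffix reduction actually yields.
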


\par
It is quite natural to 
 generalize Definition~\ref{def:TS} from a system of 2-tiles to the case of larger tiles.
\begin{definition}[$k$-tiling system]
	\label{def:k-TS}
Let $k\geq 2$ be the tile size. 	A {\em $k$-tiling system }($k$-TS) is a quadruple $(\Sigma,\Gamma,T_k,\pi)$ where the alphabets $\Sigma$, $\Gamma$ and the projection $\pi$ are as in Definition~\ref{def:TS}, and
	$\tset _k\subseteq(\Gamma \cup \{\#\})^{k,k}$.
	A language $L\subseteq \Sigma^{++}$ is recognized by such a $k$-TS 
	if 	$L=\pi(L(\tset _k))$. 
		\par\noindent 	The \emph{alphabetic ratio} of a $k$-TS is defined as the quotient $\frac{|\Gamma|}{|\Sigma|}$.
\end{definition}
Note on terminology: 
we keep using the terms pre-image and local alphabet as in Definition~\ref{def:TS}.
It is worth observing that, for a given $k\ge 2$, the alphabetic ratio may be considered a measure of the complexity of a $k$-tiling system.
\par
From the preceding definition and from Proposition~\ref{prop:k-hierarchy}, a natural question arises: whether the family of $k$-recognizable languages strictly includes \rec. The known answer is negative and follows from the next property. If we apply a projection to $k$-SLT languages, the hierarchy of Proposition~\ref{prop:k-hierarchy} collapses; this result is proved in~\cite{GiammarresiRestivo1992,DBLP:journals/iandc/GiammarresiRST96}, and we restate it to prepare the concepts needed later.

\begin{theorem}\label{thm-klocal}
Given a $k$-SLT language $L\subseteq \Sigma^{++}$ defined as $L=L(\tset_k)$ (where $\tset_k$ is a set of $k$-tiles), there exists an alphabet $\Gamma$, a local language $L'\subseteq\Gamma^{++}$ and a projection $\pi: \Gamma\to\Sigma$ such that $L=\pi(L')$.
\end{theorem}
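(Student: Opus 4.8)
The plan is to simulate the "window of size $k$" that a $k$-tile represents by a single letter of a new local alphabet, so that consecutive windows overlap consistently. This is the standard technique for collapsing the $k$-SLT hierarchy under projection, lifted from words to pictures.

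\medskip

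\textbf{Construction of the new local alphabet.} First I would let $\Gamma$ be (a suitable subset of) the set of $k$-tiles in $\tset_k$, i.e.\ each new letter is a $(k,k)$-picture over $\Sigma\cup\{\#\}$ that is allowed to occur in $\dhat{p}$. The intended meaning of a pre-image letter at position $(i,j)$ is: ``the $(k,k)$-window of $\dhat{p}$ whose top-left corner is at $(i,j)$''. The projection $\pi:\Gamma\to\Sigma$ reads off the content of a fixed cell of the window — say, the top-left cell — so that $\pi$ of the window anchored at $(i,j)$ equals $p_{i,j}$ when that cell is inside $p$ proper. (One has to be a little careful near the border: the windows whose anchor cell carries $\#$ do not correspond to any pixel of $p$, so one either restricts to the windows anchored inside $p$, using the standard trick of shifting, or one adds those border-windows to $\Gamma$ as ``junk'' letters that are handled purely by the tile set $\tset_2'$ and never survive to contribute a pixel; the former is cleaner.)

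\medskip

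\textbf{The local tile set.} Next I would define the $2$-tile set $\tset_2'\subseteq(\Gamma\cup\{\#\})^{2,2}$ that enforces overlap consistency: a $2\times 2$ arrangement $\mytileDynamic{A}{B}{C}{D}$ of windows $A,B,C,D\in\Gamma$ is allowed iff the $(k,k)$-pictures $A,B,C,D$ agree on their overlaps, i.e.\ $A$ shifted one column right coincides with $B$ on the common $(k,k-1)$-region, $A$ shifted one row down coincides with $C$, and likewise $D$ is the common corner — equivalently the four windows are the four $(k,k)$-subwindows of a single $(k+1,k+1)$-picture over $\Sigma\cup\{\#\}$. The border tiles (those involving $\#$) are chosen so that a $\#$ in the new bordered picture $\widehat{p'}$ sits exactly where the window would reach outside $\dhat{p}$; this is where the ``minimum number of extra rows/columns of $\#$'' in the definition of $\dhat{\,\cdot\,}$ is used, to make the bookkeeping match.

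\medskip

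\textbf{Correctness.} Finally I would check the two inclusions. If $p\in L(\tset_k)$, then $\dhat{p}$ has all its $(k,k)$-subpictures in $\tset_k$; tiling $p'$ by placing at each cell the corresponding window of $\dhat{p}$ gives a picture over $\Gamma$ with $\pi(p')=p$, and by construction every $2$-tile of $\widehat{p'}$ is in $\tset_2'$, so $p'\in L'$ and $p\in\pi(L')$. Conversely, if $p'\in L'$, the overlap-consistency tiles force the windows placed at all cells to glue into one globally consistent labelling of a $(k,k)$-extended picture; reading off that picture recovers a picture $q$ with $\dhat{q}$ having all $(k,k)$-subpictures among the letters of $\Gamma\subseteq\tset_k$, hence $q\in L(\tset_k)$, and $\pi(p')=q$, so $\pi(L')\subseteq L$.

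\medskip

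\textbf{Main obstacle.} The routine part is the overlap-consistency tiles; the genuinely fiddly part is the treatment of the border. In one dimension the analogous $\#$-bookkeeping is trivial, but in two dimensions $\dhat{p}$ can carry extra rows of $\#$ at the bottom and extra columns at the right (unequal in number), and a window of size $k$ anchored near the true boundary of $p$ straddles both real pixels and $\#$'s. Getting the border tiles of $\tset_2'$ to encode precisely ``this window touches the frame on such-and-such sides'' — and to do so consistently for corners, where windows touch the frame on two sides at once, while keeping $\pi$ well defined on the surviving letters — is the step that needs care. Everything else follows the one-dimensional Medvedev/SLT-collapse argument essentially verbatim.
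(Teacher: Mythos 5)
Your overall strategy---letters of the new local alphabet are $k\times k$ windows, $2$-tiles enforce overlap consistency, and the projection reads off a distinguished cell of the window---is the standard route; note that the paper itself does not prove Theorem~\ref{thm-klocal} but refers to the constructions in~\cite{GiammarresiRestivo1992,DBLP:journals/iandc/GiammarresiRST96}, which use the pair alphabet $\Gamma=\Sigma\times\tset_k$ (cf.\ Remark~\ref{rem-sizeGamma}). The genuine gap in your sketch is precisely the border step you defer. With your convention (the letter at cell $(i,j)$ is the window whose reading-off corner, the top-left one, is the pixel $(i,j)$), the pixels in the last $k-2$ rows and last $k-2$ columns of $p$ carry no letter at all: for a picture of size at least $(k-1,k-1)$ the thick-bordered picture coincides with $\widehat p$, which has only one row of $\#$ below and one column of $\#$ to the right of $p$, so a $k\times k$ window whose top-left corner is such a pixel overflows it (already for $k=3$ on the last row and column). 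A counting argument shows this is not an artifact of the chosen corner: a picture of size $(m,n)$ has $mn$ cells but only $(m+3-k)(n+3-k)$ window positions, so for $k\ge 4$ no fixed-offset anchoring can give every cell a window from which its own pixel is read off at a fixed relative position, and if the offset is allowed to vary with the cell then the letter alone no longer determines the pixel, so a letter-to-letter $\pi$ is not well defined.

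Your two proposed escapes do not close this. ``Junk'' letters that ``never survive to contribute a pixel'' are unavailable here: pre-images are isometric to $p$ and $\pi$ is letter-to-letter, so every letter of the pre-image must project to a pixel of $p$. The vague ``trick of shifting'' points in the right direction but, by the counting above, must put extra information into the letters rather than merely re-anchor. The standard repair---and the one used in the references the paper cites---is to take $\Gamma=\Sigma\times\tset_k$ (equivalently, a window together with the relative position of the cell inside it), let $\pi$ be the projection onto the $\Sigma$-component, and let the $2$-tiles enforce both overlap consistency of the window components and agreement of the $\Sigma$-component with the corresponding entry of the window, with the border tiles pinning down where the $\#$-frame of the glued picture may occur. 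With that modification your overlap-consistency gluing and the two inclusions go through essentially as you outline; as written, however, the labelling (and hence the projection) is undefined on the bottom-right band of every sufficiently large picture as soon as $k\ge 3$.
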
	

\begin{remark}\label{rem-sizeGamma}
Both proofs in~\cite{GiammarresiRestivo1992,DBLP:journals/iandc/GiammarresiRST96} consider a local alphabet $\Gamma$ of size $|\Gamma|=|\Sigma|\cdot |\tset_k|$.
Since $\tset_k$ is a subset of $(\Sigma\cup \{\#\})^{k,k}$, one has $|\tset_k|\le (|\Sigma|+1)^{k^2}$ and 
 $|\Gamma| \le (|\Sigma|+1)^{k^2 + 1}$.
\end{remark}

It follows that the family of SLT languages is strictly included in \rec~\cite{GiammRestivo1997} and that the use of larger tiles does not enlarge the \rec~family.
\begin{corollary}\label{cor-kAsTwoTiles}
The family of $k$-TS recognizable languages coincides with the family \rec~of TS recognizable languages. 
\end{corollary}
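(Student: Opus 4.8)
The plan is to prove the two inclusions between the families separately, for an arbitrary but fixed tile size $k\ge 2$; all the substance lies in the direction ``$k$-TS recognizable $\implies\rec$'', which will drop out of Theorem~\ref{thm-klocal} once one notes that alphabetic projections compose.

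First I would handle $\rec\subseteq\{\text{languages recognized by some }k\text{-TS}\}$. Take $L\in\rec$ and write $L=\pi(L')$, where $L'$ is a local language over an alphabet $\Gamma$ and $\pi\colon\Gamma\to\Sigma$ is a projection. A local language is $2$-SLT, and hence $k$-SLT by the (non-strict) inclusion contained in Proposition~\ref{prop:k-hierarchy}, iterated from order $2$ up to order $k$; so there is a finite set $\tset_k\subseteq(\Gamma\cup\{\#\})^{k,k}$ with $L'=L(\tset_k)$. Then $L=\pi(L(\tset_k))$, so $L$ is recognized by the $k$-TS $(\Sigma,\Gamma,\tset_k,\pi)$.

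For the converse I would start from a language $L$ recognized by a $k$-TS $(\Sigma,\Gamma,\tset_k,\pi)$, so that $L=\pi(L(\tset_k))$ with $L(\tset_k)\subseteq\Gamma^{++}$ a $k$-SLT language. Applying Theorem~\ref{thm-klocal} to $L(\tset_k)$ --- with $\Gamma$ playing the role of the terminal alphabet there --- yields an alphabet $\Gamma'$, a local language $L'$ over $\Gamma'$ defined by a tile set $\tset_2'\subseteq(\Gamma'\cup\{\#\})^{2,2}$, and a projection $\rho\colon\Gamma'\to\Gamma$ with $L(\tset_k)=\rho(L')$. It then remains to observe that $\pi\circ\rho\colon\Gamma'\to\Sigma$ is again an alphabetic projection and that its pointwise extension to pictures coincides with the composition of the extensions of $\rho$ and $\pi$; hence $L=\pi(\rho(L'))=(\pi\circ\rho)(L')$, so $(\Sigma,\Gamma',\tset_2',\pi\circ\rho)$ is a tiling system recognizing $L$, i.e.\ $L\in\rec$.

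I do not expect a genuine obstacle: the technical weight is carried entirely by Theorem~\ref{thm-klocal}, and what remains --- that local ($2$-SLT) languages occupy the bottom of the $k$-SLT hierarchy, and that alphabetic projections are closed under composition and commute with their extension to pictures --- are routine verifications. If one additionally wishes to track alphabet sizes, Remark~\ref{rem-sizeGamma} applied in the converse direction gives $|\Gamma'|=|\Gamma|\cdot|\tset_k|\le|\Gamma|\cdot(|\Gamma|+1)^{k^2}$ for the local alphabet produced, a bound immaterial to the present statement but worth recording.
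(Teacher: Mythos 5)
Your proof is correct and follows essentially the same route the paper intends: the corollary is stated as a direct consequence of Theorem~\ref{thm-klocal}, i.e.\ the nontrivial inclusion is obtained by factoring the $k$-SLT pre-image through a local language and composing the two projections, while the easy inclusion is just the observation that local languages sit at the bottom of the $k$-SLT hierarchy. Nothing further is needed.
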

\par\medskip
\paragraph{Role of the local alphabet size}
We have seen that any \rec~language over $\Sigma$ can be obtained both as a projection of a local language over alphabet $\Gamma_2$, 
and as a projection of a $k$-SLT language (with $k>2$) over alphabet $\Gamma_k$.
However, if we use 2-tiles instead of $k$-tiles, 
{then it often happens that the alphabet $\Gamma_2$ is larger than $\Gamma_k$.}
 In other words, $k$-tiling systems typically exhibit a trade-off between the two parameters: tile size and local alphabet size. 
\par
The next example illustrates Corollary~\ref{cor-kAsTwoTiles}, Proposition~\ref{prop-recHierarchy} and the trade-off between the two parameters.

\newcolumntype{C}{>{$}p{10pt}<{$}} 

\begin{figure}[tbh]\label{fig-double-squares}
\begin{center}
\setlength{\tabcolsep}{10pt}
\renewcommand{\arraystretch}{1.5}
\scalebox{0.8}{
\begin{tabular}{cc}
\emph{Pre-image $r_3$ with 3 symbols} &\emph{ Pre-image $r_2$ with 2 symbols} 
\\
$
\begin{array}{|C|C|C|C|C|C|C|C|}
\hline
 \cellcolor{yellow}\ssearrow&\cellcolor{yellow}n&\cellcolor{white}n&\cellcolor{yellow}n&\cellcolor{yellow}n&\cellcolor{white}n&\cellcolor{yellow}n&\cellcolor{yellow}\nnearrow \\\hline
 \cellcolor{yellow}n&\cellcolor{yellow}\ssearrow&n&\cellcolor{yellow}n&\cellcolor{yellow}n&n&\cellcolor{yellow}\nnearrow&\cellcolor{yellow}b \\\hline
 n&n&\ssearrow&\cellcolor{cyan}n&\cellcolor{cyan}n&\nnearrow&n&n \\\hline
 n&n&n&\cellcolor{cyan}\ssearrow&\cellcolor{cyan}\nnearrow&n&n&n \\\hline
\end{array}
$

& 
$
\begin{array}{|C|C|C|C|C|C|C|C|}
\hline
 \rightarrow&n&n&n&n&n&n&\rightarrow \\\hline
 n&\rightarrow&n&n&n&n&\rightarrow&n \\\hline
 n&n&\rightarrow&n&n&\rightarrow&n&n \\\hline
 n&n&n&\rightarrow&\rightarrow&n&n&n \\\hline
\end{array}
$
\end{tabular}
}
\end{center}
\hspace{0.5cm} 
\begin{center} 
\setlength{\tabcolsep}{10pt}
\renewcommand{\arraystretch}{1.5}

\scalebox{0.8}{\emph{Pre-image $x$ of illegal picture}}\\
\scalebox{0.8}{$
\begin{array}{|C|C|C|C|C|C|C|C|C|C|C|}%
\hline
 \rightarrow&n&n&n&n&n&n&n&n&\rightarrow \\\hline
 n&\rightarrow&n&\cellcolor{pink}n&\cellcolor{pink}n&\cellcolor{pink}n&n&n&\rightarrow&n \\\hline
 n&n&\rightarrow&\cellcolor{pink}n&\cellcolor{pink}n&\cellcolor{pink}n&n&\rightarrow&n&n \\\hline
 n&n&n&\cellcolor{pink}\rightarrow&\cellcolor{pink}\rightarrow&\cellcolor{pink}\rightarrow&\rightarrow&n&n&n \\\hline
\end{array}
$
}
\end{center}
\caption{Top, left. Pre-image $r_3$ over $\Gamma_3$ of the picture $a^{4,8}$ in language $R$ of Example~\ref{ex:k-localLanguages}; one instance of the 2-tiles of $\ktile{r_3}{2}$ is highlighted. 
Top, right. Pre-image $r_2$ over $\Gamma_2$ of the picture $a^{4,8}$. 
Bottom. Pre-image $x$ over $\Gamma_2$ of the illegal picture $a^{4,10}$: 
the highlighted 3-tile is not present in $\ktile{r_2}{3}$, but all tiles in $\ktile{\widehat{x}}{2}$ are included in $\ktile{\widehat{r_2}}{2}$.}
\label{fig:example}
\end{figure}

\begin{example}	\label{ex:k-localLanguages}
As a simple running example, we take the unary language $R\subseteq \{a\}^{++}$ such that for any $p\in R$, $|p|_{col}=2\cdot|p|_{row}$. We show its definition by means of two tiling systems, first using 2-tiles and then using 3-tiles on a smaller alphabet.
\par\noindent
 The first TS comprises the 2-tiles $\tset _2\subseteq {\Gamma_3}^{2,2}$ with $\Gamma_3= \{n,\ssearrow,\nnearrow\}$ that are visible in the pre-image $r_3$ in Figure~\ref{fig:example}, top left, plus the 2-tiles (not shown) coming from the same picture but bordered, $\widehat{r_3}$, which contain $\sharp$ symbols.
The obvious projection is: for all $c \in \Gamma_3, \pi(c)=a$.
\par\noindent
Next, we merge together the symbols $\ssearrow$ and $\nnearrow$ into the symbol $\rightarrow$ obtaining the local alphabet $\Gamma_2 = \{n,\rightarrow\}$, with the projection $\forall c \in \Gamma_2, \pi(c)=a$. The corresponding pre-image, $r_2$, of the same picture $a^{4,8}$ is shown in Figure~\ref{fig:example}, top right; let $\tset '_2$ be the tiles of $\widehat{r_2}$. But now also the illegal picture $a^{4,10}$ having the pre-image $x$ shown in Figure~\ref{fig:example}, bottom, would be tiled using a subset of $\tset '_2$, hence $\pi(L(\tset '_2))\supset R$. Therefore, 
 the TS $(\{a\},\Gamma_2,\tset'_2,\pi)$ fails to define language $R$.
\par\noindent
To exclude such spurious pictures from the language, still using the same local alphabet $\Gamma_2$, we need larger tiles. We leave to the reader to check that the 3-TS based on the 3-tiles $\ktile{\widehat{r_2}}{3}$ (Figure~\ref{fig:example}, top right) correctly defines language $R$. 
\end{example}

\section{{Comma-free} picture codes and local testability}\label{s-comma-free}
We introduce in this section some notions and results that are used in the proof of our main result in Section~\ref{s-main}.
\par
Given a tiling system $(\Sigma, \Gamma, \tset, \pi)$ recognizing a picture language $R \subseteq \Sigma^{++}$, we consider the local language $L(\tset )$ over the alphabet $\Gamma$.
For a given integer $k \geq 2$, we first reduce the main problem to the case in which the pictures of $R$ (and then also of $L(\tset )$) have size multiple of $k$, i.e., $L(\tset ) \subset \left(\Gamma^{k,k}\right)^{++}$. 
Then the pictures of $L(\tset )$ can be decomposed (tessellated) into subpictures of size $k \times k$. 
This set of subpictures is denoted by $\kblock{L(\tset )}{k}$ and can be considered from the information theory perspective as a two-dimensional code, to be called a \emph{picture code}. The definition of  $\kblock{p}{k}$  follows for a picture $p$ such that $|p|_{row}\Mod k =0$ and $|p|_{col}\Mod k =0$ :
\begin{equation}
\kblock{p}{k} = \left\{p_{\subd i j {i+k\,}{j+k}} \mid i\Mod k = 1,\, j\Mod k =1  \right\}.
\label{eq-defP_kk}
\end{equation}
The definition is naturally extended to a set $L$ of pictures of suitable dimensions. Clearly $\kblock{p}{k}\subset \ktile{p}{k}$.
\par
We proceed to introduce codes for pictures by means of a morphism, then we define the type of codes, called comma-free, needed in later proofs. 
Loosely speaking, the essential property of a comma-free code is that, in a picture tessellated by codes, any subpicture of size $k$ that occurs in a position misaligned with the $k \times k$ grid, cannot be a code-picture. 
\par
Comma-free codes are a classical topic for words, but are less studied for pictures. 
For the latter we introduce a novel family of binary comma-free codes and we compute their numerosity as a function of size. 
The evaluation of such a numerosity, in Proposition~\ref{prop-commafree-family}, is essential in the proof of the main result. 
Our definition of comma-free picture codes is based on square pictures of fixed size; variable-size comma-free picture codes are defined in~\cite{ANSELMO20174}, but finding bounds on the numerosity in that case is still an open problem.
\par
The section finishes with two statements: 
Proposition~\ref{prop-X-k-local} states that the set of pictures tessellated by comma-free codes of size $k$ is strictly locally testable (precisely $2k$-SLT); 
Theorem~\ref{th-SLTCode} states that the image of a local language under a one-to-one morphism, mapping each symbol to a comma-free code-picture of size $k$, is $2k$-SLT. 
These statements, that extend to two dimensions similar properties proved in the case of words, have an independent interest.
 \par
Let us outline how the properties mentioned will permit in Section 5  to build a $2k$-tiling system, 
recognizing the picture language $R \subseteq \Sigma^{++}$, using a local alphabet of size $2\cdot |\Sigma|$.
\par
Given the set $\kblock{L(\tset )}{k}$, defined at Eq.~\eqref{eq-defP_kk}  above,
we associate a pair $(f(r), \pi(r))$ to each $r \in \kblock{L(\tset )}{k}$, where $f(r)$ is the ``frame'' of $r$, and $\pi(r)$ is the projection of $r$ on the terminal alphabet $\Sigma$. 
We denote by $B_k$ the set of such pairs when $r$ runs through the set $\kblock{L(\tset )}{k}$. 
We then define a new tiling system recognizing $R$, having $B_k$ as a local alphabet (\emph{vi}. Lemma~\ref{lemma-main}).
\par
By selecting a sufficiently large $k$, 
the numerosity of a comma-free code of size $k$ in our family is greater 
than the cardinality of the set of ``frames'' of elements in $\kblock{L(\tset )}{k}$ 
So we can encode such frames with a binary comma-free code of size $k$. 
Then, one derives that the symbols of $B_k$ can be encoded with a comma-free code $Z$ of size $k$ over the alphabet $\Sigma \times \{0,1\}$. This allows to define, by using Theorem~\ref{th-SLTCode}, a $2k$-tiling system over an alphabet of size $2\cdot|\Sigma|$ recognizing $R$.

\subsection{Picture morphisms and picture codes}\label{sect-pict.morphismsAndCodes}
 \begin{definition}[picture morphism]
 	\label{def:pictureMorphism}
 	Given two alphabets $\Gamma, \Lambda$, 
 		a (\emph{picture}) \emph{morphism} is a mapping
 $ 	\varphi: \Gamma^{++} \to \Lambda^{++}$ such that, for all 
 $p, q \in \Gamma^{++}$ : 
 
 \[\left\{\begin{array}{ll}
 	i)&\varphi(p \obar q)=\varphi(p) \obar \varphi(q)
 	\\
 	ii)&\varphi(p \ominus q)=\varphi(p) \ominus \varphi(q)
 	\end{array}
 	\right. 
 	\] 
	\end{definition}
 	This implies that the images by $\varphi$ of the elements of alphabet $\Gamma$ are isometric, 
 	i.e., for any $x,y \in \Gamma$, $|\varphi(x)|_{row}=|\varphi(y)|_{row}$ and $|\varphi(x)|_{col}=|\varphi(y)|_{col}$. 
\par
Notice that, unlike the case of words, a picture morphism $\varphi: \Gamma^{++} \to \Lambda^{++}$ is one-to-one if the restriction to $\Gamma$, namely the mapping $\varphi: \Gamma \to \Lambda^{++}$, is one-to-one.

\paragraph{Code-words}
It may help to recall the basic notion of uniform (i.e., fixed-length) code for words. Given two alphabets $\Gamma, \Lambda$ and a one-to-one morphism
 	$\varphi: \Gamma^{+} \to \Lambda^{+}$, such that for all $x\in \Gamma$ the image $\varphi(x)$ is a word of length $k\geq 2$, the set $X=\varphi(\Gamma)$ is a code, and each of its elements is a code-word. 
 	It follows that any word in $X^+$ admits exactly one encoding into code-words.

	\par
	We also recall the definition of comma-free code~\cite{GolombEtAl58,BerPerReu09}. 
	A code $X\subseteq \Lambda^k$, $k\ge 1$, is \emph{comma-free}\footnote{The term ``comma-free'' suggests that such codes do not use a reserved character 
	(the ``comma'') or a reserved substring to mark the separation between code-words.} if $X^2\cap y X z = \emptyset$ for all $y, z \in \Lambda^+$.

\par
In our research, the use of comma-free codes is motivated by their preserving the SLT property~\cite{HashiguchiHonda1976,Restivo1974}, i.e,. for any such comma-free code $X=\varphi(\Gamma)$, the language $X^+$ 
is SLT and if $L$ is a SLT language
 over the alphabet $\Gamma$, 
 then $\varphi(L)$ is also SLT. 
\par
For instance, given the comma-free code of length 5 $Y = \{00111, 00001, 10001\}$, it is immediate to notice that $Y^+$ is $10$-SLT. 
We are going to prove that also in two dimensions the comma-free picture codes anlogous properties.

\paragraph{Picture codes} 
We proceed to define the picture codes and the comma-free ones, finishing with their strict local testability property.
 \begin{definition}[picture code]
 	\label{def:pictureCode}
 	Given two alphabets $\Gamma, \Lambda$ and a one-to-one morphism
 	$\varphi: \Gamma^{++} \to \Lambda^{++}$, 
 	the set $X= \varphi(\Gamma) \subseteq \Lambda^{++}$
		is called a (uniform) \emph{picture code}; its elements are called \emph{code-pictures}. 
		 	For convenience, the morphism ``$\varphi$'' will be also denoted with $\codeX{-}:\Gamma^{++}\to \Lambda^{++}$. 
		\par\noindent
 	For $\gamma\in \Gamma^{++}$, 
 	the picture $\codeX{\gamma}\in \Lambda^{++}$ is called the {\em encoding} of $\gamma$. 
 	
 \end{definition}

The set $X^{++}$ is defined as $\varphi\left(\Gamma^{++}\right)$, i.e., the set of all pictures over the alphabet $\Lambda$ 
defined as (horizontal/vertical) concatenations of the code-pictures of $X$. 

\paragraph{Tessellation}
 A useful concept when dealing with encodings is the tessellation (or "paving"). 
	Given a value $k>0$ and a picture $p \in \left(\Gamma^{k,k}\right)^{++}$, let the $k$-\emph{tessellation} 
	be defined as the unique decomposition of $p$ into square subpictures of size $k \times k$.
\par\noindent
We recall the notation ${\kblock{p}{k}}$ (in Eq.~\eqref{eq-defP_kk} above) to define the set of all pictures in the $k$-tessellation of $p$. 
The notation is extended to a language $L$ of pictures with both rows and columns multiple of $k$ as: 
${\kblock{L}{k}} = \left\{r \in {\kblock{q}{k}} \mid q \in L\right\}$. Remark that $L \subseteq ({\kblock{L}{k}})^{++}$.
\begin{figure}
\setlength{\tabcolsep}{10pt}
\renewcommand{\arraystretch}{1.0}
\begin{center}
\scalebox{0.7}{
$p= 
\begin{array}{|ccc: ccc: ccc: ccc: ccc|} \hline
 a &a &a &a &a &a &a &a &a &a &a &a &a &a &\$\\
 a &a & a &a &a &a &a&a &a&a & a &a & a &a &\$
\\
 a &a &a & a &a &a &a& a &a&a &a &a &a &a &\$
\\ \hdashline
 a &a &a &a& a &a &a& a &a &a &a &a& a &a & \$
\\
 a &a &a &a &a& a &a &a&a &a &a &a &a & a &\$
\\
 a &a &a &a &a &a&a &a&a &a &a &a &a & a &\$
\\ \hdashline
 a &a &a &a &a &a &a&a &a &a &a &a & a & a &\$
\\ \$ & \$& \$ & \$& \$ &\$& \$&\$& \$ & \$ & \$ &\$ & \$ & \$ &\$
\\ \$ & \$ & \$ & \$ & \$ &\$& \$&\$& \$ & \$ & \$ &\$& \$ & \$ &\$
\\
\hline
\end{array}
$
}
\\ \vspace{0.5cm}
\scalebox{0.7}{
$
\kblock{p}{3}= \left\{ \begin{array}{|ccc|} \hline
				a & a & a \\
				a & a&a \\
				a & a & a \\ \hline
			\end{array}, 
			\begin{array}{|ccc|} \hline
				a & a & a \\
				\$ & \$& \$ \\
				\$ & \$& \$ \\ \hline
			\end{array}, 
			\begin{array}{|ccc|} \hline
				a & a & \$ \\
				a & a&\$ \\
				a & a & \$ \\ \hline
			\end{array},
			 \begin{array}{|ccc|} \hline
				a & a & \$ \\
				\$ & \$& \$ \\
				\$ & \$& \$ \\ \hline
			\end{array}\right\}
$
}
\end{center}
\caption{The 3-tesselation of a picture $p$ and the subpictures of its $3$-tesselation.}\label{fig-tesselation} 
\end{figure}

An immediate consequence of the definition is that if a picture $p \in \Lambda^{++}$ is the encoding of a picture $\gamma$, i.e., $p=\codeX{\gamma}$, then 
the $k$-tessellation of $p$ exclusively includes  as subpictures the code-pictures of  $X$, i.e.,
$\kblock{p}{k} \subseteq X$. 

For instance, in Figure~\ref{figCommaFreeFamily}, left part, of Section~\ref{sect-newFamilyCodes}, the picture $p$ has a two-by-two tessellation with $k=5$; 
the set ${\kblock{p}{5}}$ for picture $p$ includes just three 5-tiles, since those at positions n.e. and s.w. are identical.
\par\noindent

\

\par
We are ready to generalize the notion of comma-free code from words to pictures. 
Let $p$ be a picture of size $(r,c)$; a subpicture $p_{(i,j;\,n,m)}$, such that 
$1<i\leq n<r$ and $1<j\leq m<c$ is called \emph{internal}.
\par
 Given a set $X\subseteq \Lambda^{k,k}$, consider $X^{2,2}$, i.e., the set
 of all pictures $p$ of size $(2k,2k)$ of the form $(X\obar X)\ominus (X\obar X)$.
\begin{definition}[comma-free picture code]\label{def-commafree}
Let $\Lambda$ be an alphabet and let $k\ge 2$. A picture code $X\subseteq \Lambda^{k,k}$ is 
{\em comma-free } if, for all pictures $p \in X^{2,2}$, there is no internal subpicture $q\in\Lambda^{k,k}$ of $p$ such that $q \in X$. 
\end{definition}
It should be clear that the above is the natural transposition in two dimensions of the classical concept of comma-free code. A schematic example is in Figure~\ref{fig-CommaFreeDef}.

\begin{figure}
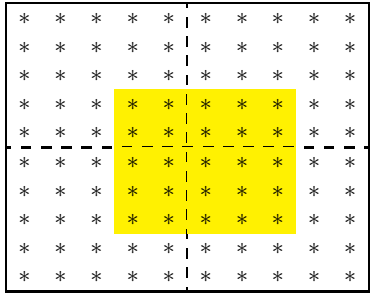
\begin{center}\scalebox{0.9}{
$
\begin{array}{|lllll:}
			\hline
			*&*&*&*&* \\
			 *& *& *& *& * \\
			*&*&*&*&*\\
						\cline{4-5}
			 *& *& * &\cellcolor{yellow} *&\cellcolor{yellow} *\\
			 *& *& *&\cellcolor{yellow} *&\cellcolor{yellow} *\\
						\hdashline
			*&*&*&\cellcolor{yellow}*&\cellcolor{yellow}* \\
			 *& *& *&\cellcolor{yellow} *&\cellcolor{yellow} *\\
			*&*&*&\cellcolor{yellow}*&\cellcolor{yellow}*\\
			 *& *& *& *& *\\
			 *& *& *& *& *\\
			\hline
		\end{array}
\begin{array}{lllll|}
						\hline
			*&*&*&*&*  \\
			 *& *& *& *& * \\
			*&*&*&*&* \\
			\cellcolor{yellow} *& \cellcolor{yellow} *&\cellcolor{yellow} *& *& *\\
			\cellcolor{yellow} *&\cellcolor{yellow} *& \cellcolor{yellow} *& *& * \\
			\hdashline 
			\cellcolor{yellow}*&\cellcolor{yellow}*&\cellcolor{yellow}*&*&*  \\
			\cellcolor{yellow} *&\cellcolor{yellow} *&\cellcolor{yellow} *& *& * \\
			\cellcolor{yellow}*&\cellcolor{yellow}*&\cellcolor{yellow}* &*&*  \\
			 *& *& *& *& *\\
			 *& *& *& *& *
			\\ \hline
		\end{array}
		$
}		
\end{center}
\caption{A picture in $X^{2,2}$, where $X$ is a picture code. An asterisk stands for any symbol. 
The picture highlights a generic internal position where the presence of a code-picture in $X$ is forbidden if $X$ is comma-free.}\label{fig-CommaFreeDef}
\end{figure}		
\subsection{A new family of comma-free picture codes}\label{sect-newFamilyCodes}
 Very few (if any) examples of comma-free picture codes are available
 such that their \emph{numerosity} (meaning the number of code-pictures) is known. 
 An upper bound on the numerosity of a comma-free picture code can immediately be derived by Theorem 12 of~\cite{GAMARD201758}, which computes the number of primitive pictures of a given size, while a lower bound can be derived by 
 considering the family of non-overlapping picture codes~\cite{DBLP:journals/tcs/AnselmoGM17}, a bound we used in the initial version of this research~\cite{Crespi-ReghizziRestivoSanPietro21}. 
 \par
 We present here a new family of comma-free picture codes that exploits the property of comma-free code-words on carefully selected rows and columns; 
 we also compute a lower bound on its numerosity, greater than in the case of non-overlapping picture codes.
 Such a family may be of some interest of its own apart from its instrumental use in later proofs.

\par
We need a few simple definitions.
Let $w=w_1 \ldots w_n \in \Lambda^+$, with each $w_i \in \Lambda$, be a word; let $col(w)$ denote the picture $col(w)=w_1 \ominus \ldots \ominus w_n$; the notation is naturally extended to set of words.
The {\em $i$-left-rotation} of $w$
is the word $w_{i+1} w_{i+2} \dots w_n w_1 \dots w_i$.
 An $i$-left-rotation is thus a circular permutation. 
\begin{definition}[Obligation word] An \emph{obligation word} is a non-empty Boolean string over $w \in \{t, f \}^+$ such that 
for every circular permutation $\widetilde w$ of $w$ the bit-by-bit logical product $w \,\wedge \,\widetilde w \neq f^k$. 

\end{definition}

Thus, if $w=w_1 \dots w_n$ is an obligation word and $\widetilde w$ is one of its circular permutations, then there is at least 
one position $i$, $1 \leq i \leq n$, such that $w_i = \widetilde{w}_i = t$. For example, $tffff$ is
 not an obbligation word since its 1-left-rotation $fffft$ is such that $tffff \wedge fffft =fffff$; on the other hand, 
 it is easy to check that $w=ftftt$ is an obligation word since for every circular permutation $\widetilde{ftftt}$, $\widetilde{ftftt} \wedge ftftt \neq fffff$.

\begin{definition}[Family of comma-free picture codes]\label{def-family-comma-free}
Let $Y_{hor}$ and $Y_{vert}\subseteq \Lambda^k$ be comma-free word codes of length $k\ge 3$, respectively referred to as \emph{horizontal} and \emph{vertical}.
Let $w \in \{t, f \}^k$ be an obligation word.
\par\noindent
 Define the finite, uniform, language substitution 
 $\sigma: \{t, f \}^+ \to 2^{\Lambda^+}$ by means of 
\[
\sigma(t)=Y_{hor} \quad \text{ and } \quad \sigma(f)=\Lambda^k
\]
($\sigma$  can obviously operate also on a one-column picture.)
\par\noindent
The code $X\subseteq \Lambda^{k,k}$ is the set of pictures in $\Lambda^{k,k}$ meeting the condition: 
\begin{equation}
\sigma\left(col(w)\right) \, \cap \, \left(col(Y_{vert}) \obar  \Lambda^{k, k-1}\right)
\end{equation}
\end{definition}
Less formally, given an obligation word $w$ located in a one-column picture $col(w)\in \{t, f \}^{k\ominus}$, a comma-free code-picture $x$ 
has a vertical code-word in the first column and a horizontal code-word in every row $1\leq i \leq k$ where the presence of a code is obligated by the occurrence of $t$ in the position $i$ of the obligation word. 
\par
In later use, it happens  that $Y_{hor}=Y_{vert}$, i.e, the vertical comma-free code used for a column, is the same horizontal code used for the rows --in this case, both are referred to as $Y$. 

\begin{example}\label{ex-commafree-family}
Reconsider the obligation word $ftftt$. Let $Y=Y_{vert}=Y_{hor}$ be the comma-free binary code of length $k=5$: 
\[
Y =\{{\red 00111}, \green 00001, {\blue 10001}\}.
\] 
Let $X$ be a picture code as in Definition~\ref{def-family-comma-free}.  Figure~\ref{figCommaFreeFamily} shows a code-picture $p$ in $X^{2,2}$.

\begin{figure}[htbp]
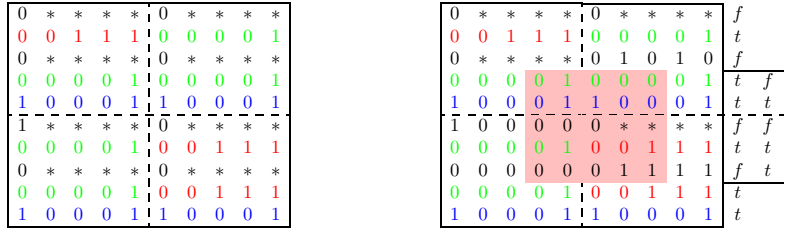

\setlength{\tabcolsep}{40pt}
\begin{center}\scalebox{0.7}{
	\begin{tabular}{ll}
		$ 
		\begin{array}{|lllll:}
			\hline
			0&*&*&*&* \\
			\red 0&\red 0&\red 1&\red 1&\red 1\\
			0&*&*&*&*\\
			\green 0&\green 0&\green 0&\green 0&\green 1\\
			\blue 1&\blue 0&\blue 0&\blue 0&\blue 1\\
						\hdashline
			1&*&*&*&*\\
			\green 0&\green 0&\green 0&\green 0&\green 1\\
			0&*&*&*&*\\
			\green 0&\green 0&\green 0&\green 0&\green 1\\
			\blue 1&\blue 0&\blue 0&\blue 0&\blue 1\\
			\hline
		\end{array}
		\begin{array}{lllll|}
 			\hline
			0&*&*&*&* \\
			\green 0&\green 0&\green 0&\green 0&\green 1\\
			0&*&*&*&*\\
			\green 0&\green 0&\green 0&\green 0&\green 1\\
			\blue 1&\blue 0&\blue 0&\blue 0&\blue 1\\
			\hdashline
			0&*&*&*&* \\
			\red 0&\red 0&\red 1&\red 1&\red 1\\
			0&*&*&*&*\\
			\red 0&\red 0&\red 1&\red 1&\red 1\\
			\blue 1&\blue 0&\blue 0&\blue 0&\blue 1\\\hline
		\end{array}
		$

		&

		$ 
		\begin{array}{|lllll:}
			\hline
			0&*&*&*&* \\
			\red 0&\red 0&\red 1&\red 1&\red 1 \\
			0&*&*&*&*\\
						\cline{4-5}
			\green 0&\green 0&\green 0 &\cellcolor{pink}\green 0&\cellcolor{pink}\green 1\\
			\blue 1&\blue 0&\blue 0&\cellcolor{pink}\blue 0&\cellcolor{pink}\blue 1\\
						\hdashline
			1&0&0&\cellcolor{pink}0&\cellcolor{pink}0 \\
			\green 0&\green 0&\green 0&\cellcolor{pink}\green 0&\cellcolor{pink}\green 1\\
			0&0&0&\cellcolor{pink}0&\cellcolor{pink}0\\
			\green 0&\green 0&\green 0&\green 0&\green 1\\
			\blue 1&\blue 0&\blue 0&\blue 0&\blue 1\\
			\hline
		\end{array}
		\begin{array}{lllll|c c}
						\cline{1-5}
			0&*&*&*&* & f \\
			\green 0&\green 0&\green 0&\green 0&\green 1 & t\\
			0&1&0&1&0 & f \\
									\cline{1-3} \cline{6-7} 
			\cellcolor{pink}\green 0& \cellcolor{pink}\green 0&\cellcolor{pink}\green 0&\green 0&\green 1& t & f\\
			\cellcolor{pink}\blue 1&\cellcolor{pink}\blue 0& \cellcolor{pink}\blue 0&\blue 0&\blue 1& t& t \\
			\hdashline 
			\cellcolor{pink}0&\cellcolor{pink}*&\cellcolor{pink}*&*&* & f & f \\
			\cellcolor{pink}\red 0&\cellcolor{pink}\red 0&\cellcolor{pink}\red 1&\red 1&\red 1& t& t \\
			\cellcolor{pink}0&\cellcolor{pink}1&\cellcolor{pink}1 &1&1 & f & t \\
					\cline{6-7} 
			\red 0&\red 0&\red 1&\red 1&\red 1& t\\
			\blue 1&\blue 0&\blue 0&\blue 0&\blue 1& t
			\\\cline{1-5}
		\end{array}
		$
	\end{tabular}
	}
\end{center}
\caption{A picture $p$ in $X^{2,2}$. An asterisk stands for a bit not obliged to occur in a code-word. 
The right picture highlights a generic internal position where the presence of a code-picture (having the same obligation word) is impossible.}\label{figCommaFreeFamily}
\end{figure}
\par\noindent
Picture $p$ is obliged to have code-words at rows $ 2, 4, 5, 7, 9, 10$. 
It is not difficult to verify that any code-picture in $X$ may not occur as an internal subpicture of $p$, otherwise at least one of its rows (or column) holding a code-word would overlap a row (or a column) of $p$ containing two concatenated code-words.  
For instance, focus in the right picture on the $(5\times 5)$ subpicture highlighted, overlapping code-words at rows $ 4, 5, 7$ of $p$. 
Imagine to replace the subpicture with a code-picture in $X$, which by obligation has a code-word at row 5 (of $ p$): clearly such a code-word would cross two code-words of $p$ thus contradicting Definition~\ref{def-family-comma-free}.

\par\noindent
Let $Y_0\subset Y$ be the subset of code-words beginning with 0, and similarly for $Y_1$. 
Thus, any row $i$ is in $Y_0$ if the pixel $p_{i,1} = 0$, it is in $Y_1$ otherwise. 
It can be easily verified that the code numerosity is $|X| = \left( |Y_0|\cdot |Y_1|^2 +2|Y_0|^2\cdot |Y_1|\right)\cdot 2^8 = 2560$.
\medskip
\par
The second example is a comma-free code having the minimal value $k=3$ permitted by Definition~\ref{def-family-comma-free}.
	We define the code $X_3$ with $Y_{hor}= \{110, 100\}$, $Y_{vert}=\{011\}$ and obligation word $ftt$. 
	There are 16 code-pictures with those choices, represented as:
		\[ 
		\begin{array}{|ccc|} \hline
				0 & * & * 
				\\
				1 & 1&0 
				\\
				1 & 1 &0 
				\\ \hline
			\end{array}\,\, 
			\begin{array}{|ccc|} \hline
				0 & * & * 
				\\
				1 & 0&0 
				\\
				1 & 0 &0 
				\\ \hline
			\end{array}\,\,
			\begin{array}{|ccc|} \hline
				0 & * & * 
				\\
				1 & 1&0 
				\\
				1 & 0 & 0 
				\\ \hline
			\end{array}\,\,
			\begin{array}{|ccc|} \hline
				0 & * & * 
				\\
				1 & 0&0 
				\\
				1 & 1 & 0 
				\\ \hline
			\end{array}
	\]
where the pixels in subpicture $p_{\subd 1 2 {1}{3}}$ (marked with asterisk) may take any value. 
\end{example}
\medskip

\begin{proposition}\label{prop-commafree-family}
For every $k\geq 3$, the family of Definition~\ref{def-family-comma-free} exclusively includes comma-free picture codes. 
Moreover, if $k$ is prime, then there is a comma-free picture code in $\{0,1\}^{k,k}$ of numerosity at least: 
\[
\frac{2^{k^2-k}}{{(k+1)}^{2\sqrt{k}}}
\]
\end{proposition}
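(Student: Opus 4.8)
The plan is to prove the two assertions separately. The claim that every code in the family of Definition~\ref{def-family-comma-free} is comma-free is essentially a structural argument: fix $X\subseteq\Lambda^{k,k}$ built from a vertical comma-free word code $Y_{vert}$, a horizontal one $Y_{hor}$, and an obligation word $w$. Take any $p\in X^{2,2}$ and suppose, for contradiction, that some internal $(k,k)$ subpicture $q$ of $p$ at position $(i,j)$ with $1<i,j$ and $i,j$ not both congruent to $1\Mod k$ lies in $X$. First I would reduce to the case $j\not\equiv 1\Mod k$: if $j\equiv 1\Mod k$ but $i\not\equiv 1\Mod k$, then the first column of $q$ is a vertical shift, inside a single column of $p$, of a concatenation of two $Y_{vert}$-codewords; since $q\in X$, that column is a $Y_{vert}$-codeword, contradicting that $Y_{vert}$ is comma-free as a word code. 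So assume $j\not\equiv 1\Mod k$. Now the key point is the obligation word: because $w$ is an obligation word, for the vertical alignment of $q$ (shift $i-1 \bmod k$), there is a row $r$ of $q$ at which \emph{both} $q$ and the code-picture(s) of $p$ occupying that row of $p$ are obligated to carry a horizontal $Y_{hor}$-codeword. In the picture $p$, that row of $p$ is a concatenation of two $Y_{hor}$-codewords; the corresponding row of $q$ is a horizontally shifted (by $j-1\bmod k\ne 0$) window of it, and is a $Y_{hor}$-codeword. This contradicts the comma-free property of $Y_{hor}$ as a word code. Hence no such $q$ exists and $X$ is comma-free. The definition of ``obligation word'' was tailored precisely so that a nontrivial vertical shift still meets an obligated row, which is what makes this work; I expect writing this overlap bookkeeping carefully (the two boundary cases $i\equiv 1$, $j\equiv 1$, and the generic case) to be the only delicate part of the first half.

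For the numerosity bound with $k$ prime, the plan is to instantiate the construction with a concrete choice of ingredients over the binary alphabet and then count. Take $Y=Y_{vert}=Y_{hor}$ to be a binary comma-free word code of length $k$; when $k$ is prime there is a classical comma-free word code of size roughly $2^k/k$ (Golomb--Gordon--Welch): partition the $2^k-2$ aperiodic binary words of length $k$ into necklace classes, each of size exactly $k$ since $k$ is prime, and pick one representative per class; one checks this representative set is comma-free. So $|Y|\ge (2^k-2)/k\ge 2^k/(k+1)$ for $k\ge 3$. Next I would choose a binary obligation word $w$ of length $k$ with as few $t$'s as possible — concretely, $w$ with exactly $\lceil\sqrt{k}\,\rceil$ occurrences of $t$, suitably spaced so that the obligation property holds (a set of positions forming, e.g., a perfect difference-set-like pattern, or simply the first $\sqrt k$ positions together with every $\sqrt k$-th position, guarantees that every circular shift hits a $t\wedge t$). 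Let $m=\#\{i: w_i=t\}\le \sqrt{k}+1$. Then by Definition~\ref{def-family-comma-free}, a code-picture is determined by: a choice of vertical codeword in column~1 (some element of $Y$), a choice of horizontal codeword in each of the $m$ obligated rows consistent with the bit already forced in column~1 by the vertical codeword, and a free choice of bit in each remaining cell. Counting conservatively: there are at least $2^{k^2}$ cells total; the $m$ horizontal codewords constrain $m\cdot k$ cells, the vertical codeword constrains $k$ cells (with overlaps), so the number of \emph{free} cells is at least $k^2 - (m+1)k \ge k^2 - (\sqrt k + 2)k$; and in each obligated row, given the forced first bit, the number of compatible $Y$-codewords is at least $|Y|/2 \ge 2^k/(2(k+1)) \ge 2^{k}/(k+1)^2$ (crudely), while the first column has at least $|Y|\ge 2^k/(k+1)$ choices.

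Assembling the count: $|X|\ \ge\ \dfrac{2^k}{k+1}\cdot\Bigl(\dfrac{2^{k}}{(k+1)^2}\Bigr)^{m}\cdot 2^{\,k^2-(m+1)k}$. I would then simplify the right-hand side, absorbing the $2^{k}$ factors from codewords into the $2^{k^2}$-type term, so that the binary powers combine to at least $2^{k^2-k}$ (the codeword factors contribute $2^{k(m+1)}$, exactly cancelling the $2^{-(m+1)k}$ loss from the forced cells), leaving the denominator $(k+1)^{1+2m}$. Since $m\le \sqrt k + 1$, we get $1+2m\le 3+2\sqrt k \le 2\sqrt k\cdot(\text{small const})$, and a clean bookkeeping — or a slightly sharper choice $m=\lceil\sqrt k\rceil$ — yields $(k+1)^{1+2m}\le (k+1)^{2\sqrt k}$ for $k$ large, hence $|X|\ge 2^{k^2-k}/(k+1)^{2\sqrt k}$, with the small cases $k=3,5$ checked directly (or the constant folded in). The main obstacle here is pinning down an obligation word of length $k$ with only $O(\sqrt k)$ true positions and rigorously verifying its obligation property for \emph{every} circular shift; a Singer/difference-set construction, available since we are free to take $k$ prime (or prime power), is the cleanest route, and the rest is arithmetic.
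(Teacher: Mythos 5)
Your first half (comma-freeness of every code in the family) follows essentially the paper's own argument: the aligned-column case is killed by comma-freeness of $Y_{vert}$, and the misaligned case by the obligation word, which guarantees an obligated row of the internal window that would sit as a proper internal factor of two concatenated $Y_{hor}$ code-words. That part is fine and needs no further comment.

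The counting half, however, has two genuine gaps. First, the per-obligated-row factor ``at least $|Y|/2$ code-words compatible with the forced first bit'' is unjustified: the forced bit is whatever the chosen vertical code-word places in that row, and nothing guarantees that half of $Y$ begins with that particular bit (in the worst case almost none does, e.g.\ if the vertical word forces a $1$ and nearly all code-words start with $0$). The paper avoids this by normalizing $Y_{hor}$ so that at least $\nu/2$ code-words begin with $0$ and by taking $Y_{vert}$ to be the single word $0^{q}10^{k-q-1}$, which forces a $0$ in every obligated row; you need that (or an equivalent) normalization. Second, the assumption $m\le\sqrt{k}+1$ obligated positions is not available for every prime $k$: the obligation condition says exactly that the difference set $Q-Q$ covers $\mathbb{Z}_k$, so $|Q|\ge\sqrt{k}$, and size-$\lceil\sqrt{k}\,\rceil$ perfect difference sets (Singer) exist only for special moduli of the form $q^2+q+1$, whereas the proposition quantifies over \emph{all} primes $k$ (your aside ``we are free to take $k$ prime'' misreads the quantifier; the freedom to choose $k$ appears only later, in Proposition~\ref{prop-commafreeEnough}). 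Your own fallback construction (first $\sqrt{k}$ positions plus every $\sqrt{k}$-th) has about $2\sqrt{k}$ true positions, contradicting $m\le\sqrt{k}+1$; and with the realistic $m\approx 2\sqrt{k}$, your looser bookkeeping (free cells $k^2-(m+1)k$, crude per-row factor $2^k/(k+1)^2$, an extra factor $|Y|$ for the first column) yields a bound that beats $2^{k^2-k}/(k+1)^{2\sqrt{k}}$ only for rather large $k$ — the deficit $(k+1)^{2m+1-2\sqrt{k}}\le 2^k$ fails for all primes up to the order of a few hundred — so ``check the small cases directly'' conceals a substantial verification. The paper's tighter accounting (singleton vertical code, $k-1$ free bits in each of the $k-2q$ free rows, factor $\nu/2\ge 2^{k-1}/(k+1)$ per obligated row) gives exactly $2^{k^2-k}/(k+1)^{2q}$ with $2q\le 2\sqrt{k}$, with no asymptotic slack; adopting it closes both gaps. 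A minor point: picking an arbitrary representative of each necklace class is not automatically comma-free — the value $(2^k-2)/k$ for prime $k$ is Eastman's theorem, which should simply be cited, as the paper does.
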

\proof 
\par\noindent{\em Part 1. }
The family only includes comma-free picture codes. 
Consider a picture $p= \mytileDynamic{x_{11}}{x_{12}}{x_{21}}{x_{22}}\in X^{2,2}$, with $x_{i,j}\in X$, and let $x \in X$. 
We show that $x$ cannot be an internal subpicture of $p$.
By contradiction, suppose that $x$ is an internal subpicture of $p$, with top left corner in $x_{11}$.
We consider just two cases for the coordinates $(i,j)$ of the top left corner of $x$ in $p$, since the remaining cases are symmetrical: 
\begin{enumerate}
	\item\label{en-case1} $1<i\le k$ and $j=1$;
	\item\label{en-case2} $1<i\le k$ and $1<j\le k$.
\end{enumerate} 
In case~\ref{en-case1}, the leftmost column of $x$ (a comma-free code of $Y_{vert}$) overlaps with the concatenation of the leftmost column of $x_{11}$ 
and the leftmost column of $x_{21}$ (both also being vertical codes), a contradiction with respect to the definition of comma-free code for words. 

\par\noindent In case~\ref{en-case2}, given the obligation word $w$, the comma-free 
code containing rows of $x_{11}$ and $x_{21}$ correspond to the values $t$ in $ww$. The rows of $x=p_{\subd i j {i+k}{j+k}}$ correspond in $p$ to the $(i-1)$-left-rotation of $w$. By definition of obligation word, 
there exists a row $h$ of $x$ that overlaps the concatenation either of two rows $h'$ of $x_{11}$ and $x_{12}$, 
or of two rows $h'$ of $x_{21}$ and $x_{22}$, which again contradicts the definition of comma-free code on words. 
\medskip
\par\noindent{\em Part 2. }
We prove a lower bound on the numerosity of a comma-free picture code of size $(k,k)$.
\par\noindent
Let $\Lambda=\{0,1\}$. We recall from~\cite{DBLP:journals/tit/Eastman65,DBLP:journals/dm/PerrinR18} that the numerosity of a binary comma-free word code $Y$ of length $k$, for $k$ prime, is 
\begin{equation}\label{eqNumerosityCFwordCode}
\nu= \frac{2^k-2}{k}
\end{equation}
a value to be later used.
We assume that at least $\frac{\nu}{2}$ of the codes in $Y_{hor}$ begin with $0$ (otherwise just exchange 0 and 1 in $Y_{hor}$). 
We define a word $w$ in terms of a set $Q\subset \{1, 2 , \ldots, k \}$ 
such that the $i$-th letter is $w_i= t$ if, and only if, $i \in Q$.
For a given $q<k/2$,
define the set $ Q\subset \{1, 2 , \ldots, k \}$ as follows:

\begin{equation}\label{eqShift}
\begin{array}{ll}
Q= 
 & \left\{i \mid 1\le i\le q \right\} \;\cup\, \\
 &\left\{i \mid q<i \le k \, \land \, i \Mod q =0 \right\}
\end{array}
\end{equation}
Hence, $w_i=t$ when $i=1, 2, \dots, q$ and when $i= 2q, 3q, \dots jq$, for $j$ such that $k-q<jq\le k$. 
It is easy to see by simple arithmetic considerations that the choice of $Q$ in~\eqref{eqShift} makes $w$ an obligation word. 
From the definition~\eqref{eqShift} of set $Q$, it follows that in any $p \in X$
the number of rows having a horizontal code is $q+k/q$. 
\par\noindent
It is immediate to notice that the number of such rows is minimal 
for $q=\lfloor \sqrt{k}\rfloor $, hence, their number is 
$\lfloor \sqrt{k}\rfloor+k/\lfloor \sqrt{k}\rfloor = 2\lfloor \sqrt{k}\rfloor =2q$. 
\par\noindent
To simplify the following computation, we choose a singleton vertical code $Y_{vert}=\{0^q10^{k-q-1}\}$, 
thus fixing the leftmost column of every code-picture in $X$. 
\par\noindent
Using the value $\nu$ from~\eqref{eqNumerosityCFwordCode}, the total number of possibilities using comma-free codes in $2q$ rows of a picture code is:  
\begin{equation}\label{eq-commafree-rows}
(\frac{\nu}{2})^{2q} = \left(\frac{2^{k}-2}{2k}\right)^{2q} = \left(\frac{2^{k-1}-1}{k}\right)^{2q}>\left(\frac{2^{k-1}}{k+1}\right)^{2q} = \frac{2^{2kq-2q}}{(k+1)^{2q}} .
\end{equation}
The number of rows free from horizontal codes in a code-picture is $(k-2q)$, 
each row containing $k-1$ free bits, for a total number of possibilities: 
\begin{equation}\label{eq-rowsWithoutCode}
\left(2^{k-1} \right)^{k-2q} =2^{k^2-k+2q-2qk}
\end{equation}
Multiplying~\eqref{eq-commafree-rows} by~\eqref{eq-rowsWithoutCode}, we obtain the following lower bound on the number of the picture codes in $X$: 
\begin{equation}
|X|\ge \frac{2^{2kq-2q}}{(k+1)^{2q}}\cdot 2^{k^2-k+2q-2qk}
 =  \frac{2^{k^2-k}}{{(k+1)}^{2q}}
\end{equation}
Substituting $\lfloor \sqrt{k}\rfloor$ for $q$ in the denominator we obtain: 
$(k+1)^{2q}= (k+1)^{2\lfloor \sqrt{k}\rfloor} \le (k+1)^{2\sqrt{k}}$, i.e.,

\begin{equation}\label{eq-pictureCodeLowerBound}
|X|\ge  \frac{2^{k^2-k}}{{(k+1)}^{2\sqrt{k}}}
\end{equation}
\qed
\medskip
\par
The lower bound of Proposition~\ref{prop-commafree-family} has been computed under simplifying but pessimistic assumptions, 
in particular that $|Y_{ver}|=1$, thus fixing the leftmost column of code-pictures. This significantly reduces the number of possible code-pictures for small values of $k$. 
For instance, with $k=5$ the value according to the lower bound in Eq.~\eqref{eq-pictureCodeLowerBound} is $2^3$, in contrast with 
the numerosity of Example~\ref{ex-commafree-family} where a comma-free vertical/horizontal code of just 3 elements yields 2560 code-pictures.

\par
To finish, it is obvious that the family of Definition~\ref{def-family-comma-free} does not exhaust all possible comma-free picture codes.
For instance, the (singleton) set: 
$$
X= 	\left\{\begin{array}{|cccc|} \hline
 1 & 0 & 0 & 0 
\\
 0 & 0& 0 & 0 
\\
 0 & 0 & 0 & 0 
\\
 0 & 0 & 0 & 0 
\\ \hline
 \end{array}\right\}
 $$  
is a comma-free picture code, although it does not comply with Definition~\ref{def-family-comma-free}.

 \begin{proposition}\label{prop-commafreeEnough} 
For every $m\ge 2$, there exist $k\geq 2$ and a comma-free code  $X\subset \{0,1\}^{k,k}$ such that $|X|\ge m^k$, with $k \in O(\lg m)$.

\end{proposition}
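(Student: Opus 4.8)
The plan is to apply the lower bound of Proposition~\ref{prop-commafree-family} and check that, for a suitable choice of $k$, the quantity $2^{k^2-k}/(k+1)^{2\sqrt k}$ dominates $m^k$, while at the same time $k$ stays within $O(\lg m)$. The guiding intuition is that the numerator grows like $2^{k^2}$ (doubly exponentially in $k$), the correction factor $(k+1)^{2\sqrt k}$ is only quasi-polynomial in $k$ and hence negligible, whereas the target $m^k = 2^{k\lg m}$ is only singly exponential in $k$. So even a modest $k$ — roughly proportional to $\lg m$ — suffices.

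First I would fix, for a given $m\ge 2$, a prime $k$ with $k \ge c\,\lg m$ for a small absolute constant $c$ (say $c=2$), which is possible by Bertrand's postulate while keeping $k\in O(\lg m)$; the primality is needed so that Proposition~\ref{prop-commafree-family} applies. Then it suffices to verify the inequality
\[
\frac{2^{k^2-k}}{(k+1)^{2\sqrt k}} \;\ge\; m^k .
\]
Taking base-$2$ logarithms, this is equivalent to
\[
k^2 - k - 2\sqrt k \,\lg(k+1) \;\ge\; k\lg m ,
\]
i.e. $k - 1 - \tfrac{2\lg(k+1)}{\sqrt k} \ge \lg m$. Since $\tfrac{2\lg(k+1)}{\sqrt k}\to 0$, for all sufficiently large $k$ the left-hand side is at least $k/2$, so the inequality holds as soon as $k \ge 2\lg m + O(1)$. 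Choosing $k$ as above (prime, $\ge 2\lg m + C$ for the appropriate constant $C$) makes the bound valid, and then the comma-free picture code $X\subset\{0,1\}^{k,k}$ furnished by Proposition~\ref{prop-commafree-family} satisfies $|X|\ge m^k$, as required. Finally $k\in O(\lg m)$ by construction.

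The only mildly delicate point is bookkeeping the constants: Proposition~\ref{prop-commafree-family} requires $k$ prime, so one cannot simply set $k=\lceil 2\lg m\rceil$ but must round up to the next prime, which by Bertrand's postulate costs at most a factor $2$ and preserves $k\in O(\lg m)$; and one must pick the constant in "$k\ge 2\lg m + C$" large enough to absorb the subtracted term $1 + \tfrac{2\lg(k+1)}{\sqrt k}$, which is uniformly bounded. Neither of these is a real obstacle — the slack between a doubly-exponential numerator and a singly-exponential target is enormous — so the argument is essentially a one-line estimate once Proposition~\ref{prop-commafree-family} is in hand.
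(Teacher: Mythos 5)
Your proposal is correct and follows essentially the same route as the paper: pick a prime $k=\Theta(\lg m)$ via Bertrand's postulate, invoke the lower bound $2^{k^2-k}/(k+1)^{2\sqrt{k}}$ of Proposition~\ref{prop-commafree-family}, and compare exponents to see that $k\ge 2\lg m + O(1)$ suffices. The only cosmetic difference is how the correction term is absorbed (you bound $2\lg(k+1)/\sqrt{k}$ by a constant, the paper trades it against an extra $k$), which does not change the argument.
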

\proof
 Let $k$ be any prime number between $4\lg m$ and $8 \lg m$, which must exist by the Bertrand-Chebyshev theorem (see e.g.~\cite{HardyWright}, Chapter 22). Hence, $k$ is in $\mathcal{O}(\lg m)$. 
\par\noindent
From Proposition~\ref{prop-commafree-family}, there exists a comma-free code $X\subseteq \{0,1\}^{k,k}$ with cardinality
$$|X|\ge \frac{2^{k^2-k}}{{(k+1)}^{2\sqrt{k}}}.$$ 
Since 
$(k+1)^{2\sqrt{k}}= 2^{\log(k+1)^{2\sqrt{k}} }$ we obtain:
\begin{equation*}
|X| \,\geq\, 2^{k^2-k-2\sqrt{k} \log(k+1) }.
\end{equation*}
It can also be immediately derived that $|X| \,\geq\, 2^{k^2-2k}$,
since $k>2\sqrt{k}\log(k+1)$
when $k\ge 4\lg m \ge 4$. 
Hence, $k^2-2k$ bits are enough to represent all  code-pictures of $X$, 
 while we need $\lg(m^k)= k \lg m$ bits to define $m^k$.
Therefore, 
$$|X|\ge m^k\Rightarrow {k^2-2k}\ge {k \lg m} \iff
k\ge 2+\lg m.$$
\par\noindent
Since $k \ge 4\lg m$, we have that $|X|\ge m^k$, since $4\lg m\ge 2+\lg m$ for all $m\ge 2$.
\qed

\subsection{Strict local testability of encoded pictures}\label{ssect-SLTofEncodedPict}
The next proposition states in 2D the already mentioned SLT property of 1D comma-free 
codes~\cite{Restivo1974,HashiguchiHonda1976} further developed in~\cite{DeLucaRestivo1980}.
\begin{proposition}\label{prop-X-k-local}
Let $X \subseteq \Lambda^{k,k}$ be a comma-free picture code on words. The language $ X^{++}$ is $2k$-SLT.
\end{proposition}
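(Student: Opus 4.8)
The plan is to show that $X^{++}$ is $2k$-SLT by exhibiting an explicit tile set $\tset_{2k}\subseteq(\Lambda\cup\{\#\})^{2k,2k}$ such that $L(\tset_{2k})=X^{++}$, where $\tset_{2k}$ consists precisely of those $(2k)\times(2k)$ subpictures that occur in bordered pictures $\dhat{p}$ for $p\in X^{++}$. One inclusion, $X^{++}\subseteq L(\tset_{2k})$, is immediate by construction: every picture of $X^{++}$ has all its $(2k)$-tiles in $\tset_{2k}$ by definition of the set. The substance is the converse: if $p\in\Lambda^{++}$ and every $(2k)$-tile of $\dhat{p}$ lies in $\tset_{2k}$, then $p\in X^{++}$.

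For the hard direction I would first argue that $p$ has both dimensions divisible by $k$. This follows by a boundary argument: the $(2k)$-tiles touching the $\#$-border of $\dhat p$ record how far the true content extends before hitting $\#$'s, and since in genuine elements of $X^{++}$ the content always ends exactly on a multiple of $k$, the tiles appearing near the border force the same on $p$; a tile witnessing content of width not a multiple of $k$ simply never occurs in any $\dhat q$ with $q\in X^{++}$, hence is not in $\tset_{2k}$. Once $|p|_{row}$ and $|p|_{col}$ are multiples of $k$, the $k$-tessellation $\kblock{p}{k}$ is well-defined, and it remains to prove (i) every block in $\kblock{p}{k}$ belongs to $X$, and (ii) the arrangement of these blocks is consistent, i.e.\ $p\in X^{++}=\varphi(\Lambda^{++})$. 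Here the comma-free property does the real work: a $(2k)$-tile of $\dhat p$ aligned with the grid contains a $2\times 2$ array of blocks from $\kblock{p}{k}$; since this tile appears in some $\dhat q$ with $q\in X^{++}$, that $2\times 2$ array equals a picture in $X^{2,2}$, so all four of its blocks are code-pictures in $X$. Sliding a $(2k)$-window over $\dhat p$ in both directions, and using that consecutive windows overlap in a $k$-wide strip, shows every block of $\kblock{p}{k}$ is in $X$ and that horizontally/vertically adjacent blocks are compatible. Comma-freedom guarantees that a misaligned $k\times k$ subpicture inside such a tile is never itself in $X$, which is what prevents a picture of $X^{++}$ from being "re-tessellated" in a shifted way and keeps the decomposition forced and unique.

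The final step is to promote "every $k$-block is in $X$ and adjacent blocks agree" to "$p\in X^{++}$". Because $\varphi$ is a picture morphism and $X=\varphi(\Lambda)$ with the blocks of $p$ all lying in $X$, one reads off a picture $\gamma\in\Lambda^{++}$, of size $(|p|_{row}/k,\,|p|_{col}/k)$, by replacing each $k\times k$ block by its unique $\varphi$-preimage letter; then $\varphi(\gamma)=p$ by Definition~\ref{def:pictureMorphism}, so $p\in X^{++}$. Injectivity of $\varphi$ on $\Gamma$ (noted after Definition~\ref{def:pictureMorphism}) makes the preimage letters well-defined, though for the membership claim we only need surjectivity onto $X$.

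The main obstacle I anticipate is the boundary bookkeeping: making precise, via the tile set near the $\#$-frame, that $|p|_{row}$ and $|p|_{col}$ must be multiples of $k$, and that the tessellation induced by the tiles is the "correct" grid-aligned one rather than a shifted pseudo-tessellation. This is exactly where the comma-free hypothesis is indispensable — without it a $k\times k$ subpicture straddling two adjacent code-pictures could itself be a code-picture, the decomposition would no longer be forced, and spurious pictures not in $X^{++}$ could slip into $L(\tset_{2k})$. The interior argument, by contrast, is a routine sliding-window consistency check once the grid alignment is established, essentially transcribing the classical 1D argument of~\cite{Restivo1974,HashiguchiHonda1976} into two dimensions, one direction at a time.
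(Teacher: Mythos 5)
Your overall plan is the paper's: you take the tile set $\ktile{\widehat{X^{++}}}{2k}$, the inclusion $X^{++}\subseteq L(\cdot)$ is immediate, and the hard direction is to force the grid-aligned $k$-tessellation of $p$ to consist of code-pictures, with comma-freeness ruling out shifted tessellations and a sliding/induction argument spreading this over the picture. However, two of your intermediate claims do not hold as stated, and they sit exactly where the real work is. First, the divisibility-by-$k$ step: for a picture wider (or taller) than $2k-1$, no single $2k$-tile touching the $\#$-frame can ``witness'' that a side length is not a multiple of $k$, because a tile meeting only one border carries no information about its global offset; tiles with one $\#$ column at the right and arbitrary code-word suffix material to its left do occur in genuine bordered pictures of $X^{++}$, so non-divisibility is not locally detectable at the boundary and must instead fall out of a global propagation. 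Second, and more seriously, the inference ``a grid-aligned $2k$-tile of $p$ occurs in some bordered $q\in X^{++}$, hence it equals a picture of $X^{2,2}$'' is a non sequitur: that occurrence inside $q$ may be misaligned with $q$'s own tessellation, in which case the tile is just a shifted window of $q$ and need not decompose into four code-pictures at all. Comma-freeness cannot be applied to it until you already know that some $k\times k$ quarter of the tile lies in $X$.

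The paper closes exactly this gap by anchoring at the north-west corner: the tile of $\widehat{p}$ containing the $\#$ corner can occur in a bordered $q\in X^{++}$ only at $q$'s corner, which forces the NW $k\times k$ block of $p$ to be in $X$; then the key claim --- if a $2k$-tile of some $z\in X^{++}$ has its NW quarter in $X$, comma-freeness forces that quarter onto $z$'s grid, so the tile lies in $X^{2,2}$ --- allows an induction that propagates east and south, and divisibility of the side lengths emerges when this propagation reaches the other borders. Your sliding-window sentence and your remark that comma-freeness prevents re-tessellation show you have the right tool, but your write-up never establishes the anchor nor the implication in the needed direction (known quarter in $X$ $\Rightarrow$ aligned occurrence), so the alignment you rely on is assumed rather than proved. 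Two minor points: no ``compatibility'' between adjacent blocks is needed, since any rectangular assembly of code-pictures belongs to $X^{++}$; and the preimage letters live in the alphabet $\Gamma$ with $X=\varphi(\Gamma)$, not in $\Lambda$.
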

\proof
We show that $L\left(\ktile{\widehat {X^{++}} }{2k}\right)=X^{++}$ whence the statement. 
\par\noindent
The left to right inclusion is obvious since if $p \in X^{++}$ then $\ktile{\widehat p}{2k}\subseteq \ktile{\widehat {X^{++}} }{2k}$.
\par\noindent
We prove the right to left inclusion.
Let $p \in L\left(\ktile{\widehat { X^{++}}}{2k}\right)$, and let $q\in\ktile{X^{++}}{2k}$ be a $2k$-tile of a (non-bordered) picture $z$ of size $(kr, kc)$.
We claim that if $q$ is such that the subpicture $q_{nw}=q_{\subd 1 1 {k} {k}}$ is in $X$, 
then $q \in X^{2,2}$ (i.e., it is tessellated by four code-pictures).
\par\noindent 
Since $X$ is comma free, if $q_{nw}\in X$, then $q_{nw}$ must coincide with one of the subpictures in the $k$-tessellation of $z$,
 otherwise $q_{nw}$ would be an internal subpicture of a $2k$-tile in $X^{2,2}$, against Definition~\ref{def-commafree}. Therefore, $q \in X^{2,2}$.
\par\noindent
Consider now the $(k+1)$-tile of $\widehat p$ positioned at the north-west corner, which has the form:
\[
\mytileDynamic{\#}{\#^{\obar k}}{\#^{\ominus k}}{x_{11}} \quad \text{ for some } x_{11} \in X.
\]
By the above claim, a $2k$-tile having $x_{11}$ in the north-west position must be in $X^{2,2}$.
We enlarge the north-west code-picture $x_{11}$ of $p$ towards east and south, into a $(2k+1, 2k+1)$ picture, that must have the following form:
\begin{equation}\label{eq-tedious_induction}
\myLargeTile{\#}{\#^{\obar k}}{\#^{\obar k}}{\#^{\ominus k}}{x_{11}}{x_{12}}{\#^{\ominus k}}{x_{21}}{x_{22}} \qquad \text{ with } x_{i,j}\in X 
\end{equation}
i.e., also $\mytileDynamic{x_{11}}{x_{12}}{x_{21}}{x_{22}}\in X^{2,2}$.
A simple but tedious induction would permit to enlarge the picture in Eq. ~\eqref{eq-tedious_induction}, thus proving that $p$ is in $X^{++}$.
\qed
\medskip

\par
At last we consider a local language (defined by a set of 2-tiles) and we encode each symbol using a comma-free picture code. 
The following theorem states that the resulting language is SLT (an analogous property for words is stated in~\cite{HashiguchiHonda1976}).
\begin{theorem}\label{th-SLTCode}
Let $\tset\subseteq \Gamma^{2,2}$ be a set of 2-tiles defining the local language $L(\tset )$ and let $X\subseteq\Lambda^{k,k}$ be a comma-free picture code such that $|X|=|\Gamma|$. 
The encoding $\codeX{L(\tset)}$ is a $2k$-SLT language. 
\end{theorem}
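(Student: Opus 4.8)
The plan is to combine the two facts already established: that $X^{++}$ is $2k$-SLT (Proposition~\ref{prop-X-k-local}), and that $L(\tset)$ is defined by a set of $2$-tiles over $\Gamma$. The goal is to produce a set $\mathsf{T}_{2k}$ of $2k$-tiles over $\Lambda$ so that $L(\mathsf{T}_{2k})=\codeX{L(\tset)}$. The natural candidate is $\mathsf{T}_{2k}=\ktile{\widehat{\codeX{L(\tset)}}}{2k}$, i.e. the set of all $2k$-tiles actually occurring in bordered encoded pictures; the left-to-right inclusion $\codeX{L(\tset)}\subseteq L(\mathsf{T}_{2k})$ is then immediate, and the work is the converse.

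First I would observe that a $2k$-tile aligned with the $k$-grid of an encoded picture ``sees'' a $(3,3)$ (or $(2,2)$, depending on how one counts overlap) block of pre-image symbols from $\Gamma$, and hence in particular records all the $2$-tiles of $\widehat{\gamma}$ in a $2\times 2$ neighbourhood; conversely, knowing the pre-image $2$-tiles around a cell determines, via $\codeX{-}$, exactly which $2k$-tiles can occur there. So the strategy is: given $p\in L(\mathsf{T}_{2k})$, first use the comma-free property — exactly as in the proof of Proposition~\ref{prop-X-k-local} — to show that $p$ is $k$-tessellated by code-pictures of $X$, so $p=\codeX{\gamma}$ for a unique $\gamma\in\Gamma^{++}$ (using that $X$ is a code and $\codeX{-}$ is injective on $\Gamma^{++}$). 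Then I would show that every $2$-tile of $\widehat{\gamma}$ lies in $\tset$: each such $2$-tile, together with the border symbols, is reflected in some $2k$-tile of $\widehat{p}$, which by assumption belongs to $\mathsf{T}_{2k}=\ktile{\widehat{\codeX{L(\tset)}}}{2k}$, hence came from a genuine encoded picture whose pre-image is in $L(\tset)$ and therefore respects $\tset$ locally. Decoding that $2k$-tile back through $\codeX{-}$ recovers the offending $2$-tile and certifies it is in $\tset$. Hence $\gamma\in L(\tset)$ and $p=\codeX{\gamma}\in\codeX{L(\tset)}$.

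Two technical points need care. The border is the first: since $\codeX{-}$ maps single symbols to $k\times k$ blocks, the reserved symbol $\#$ and the thin frame of $\widehat{\gamma}$ correspond to a thickness-$k$ frame of $\#$'s around $\codeX{\gamma}$, i.e. to $\dhat{\codeX{\gamma}}$ rather than $\widehat{\codeX{\gamma}}$; I would either phrase the whole argument in terms of $\dhat{\cdot}$ (as Definition~\ref{def:SLT} does) or fix the correspondence between the $\#$-frame of $\widehat{\gamma}$ and the $\#$-frame of the encoded picture once and for all, so that border $2$-tiles of $\widehat{\gamma}$ are faithfully recorded by border $2k$-tiles. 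The second point, and the main obstacle, is the alignment/tessellation step for \emph{arbitrary} $p\in L(\mathsf{T}_{2k})$: one must rule out that some $2k$-tile placed in a grid-misaligned position is nonetheless in $\mathsf{T}_{2k}$ for spurious reasons. This is precisely where comma-freeness of $X$ is essential, and the argument mirrors Proposition~\ref{prop-X-k-local}: a $2k$-tile of $\widehat{p}$ whose north-west $k\times k$ corner is anchored at the frame forces that corner to be a code-picture of $X$ sitting on the grid, and then a straightforward but tedious induction propagates the grid-alignment east and south across all of $p$. Once alignment is in hand, the decoding-and-checking part is routine, so I expect the bulk of the proof to reuse Proposition~\ref{prop-X-k-local} almost verbatim for the tessellation, with the genuinely new content being the bookkeeping that the $\tset$-constraints on $\gamma$ are faithfully encoded into, and recoverable from, the $2k$-tile set of the image language.
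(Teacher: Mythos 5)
Your proposal is correct in substance and runs on the same two engines as the paper -- Proposition~\ref{prop-X-k-local} to force the $k$-tessellation, and comma-freeness plus injectivity of $\codeX{-}$ to decode aligned $2k$-tiles back to $2$-tiles -- but it packages them around a different tile set. The paper does not use the canonical set $\ktile{\widehat{\codeX{L(\tset)}}}{2k}$: it defines $\tsetM_{2k}=\ktile{\widehat {X^{++}} }{2k}-\codeX{\overline{\tset}}$, where $\overline{\tset}=\Gamma^{2,2}-\tset$ is the set of forbidden $2$-tiles, and proves both inclusions; there the inclusion $\codeX{L(\tset)}\subseteq L(\tsetM_{2k})$ is the one needing comma-freeness (a $2k$-tile of a genuine encoding lying in $\codeX{\overline{\tset}}$ would be forced onto the grid and would decode, by injectivity, to a tile that is simultaneously allowed and forbidden), while the converse inclusion uses Proposition~\ref{prop-X-k-local} and the fact that aligned tiles avoid $\codeX{\overline{\tset}}$. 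With your canonical tile set the first inclusion becomes trivial, but the second costs one extra invocation of comma-freeness that your sketch leaves implicit: when you take a grid-aligned $2k$-tile $\xi\in X^{2,2}$ of $p=\codeX{\gamma}$ and a witness occurrence of $\xi$ in some $\widehat{\codeX{\gamma'}}$ with $\gamma'\in L(\tset)$, you must argue (again as in Proposition~\ref{prop-X-k-local}: the north-west $k\times k$ block of $\xi$ is in $X$, so it cannot sit off-grid) that this occurrence is itself aligned with the $k$-tessellation of $\codeX{\gamma'}$; only then does ``decoding that $2k$-tile'' produce a $2$-tile of $\gamma'$, which by injectivity of $\codeX{-}$ equals your offending tile and certifies it lies in $\tset$. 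That missing step is a one-liner, so the gap is minor; and your explicit treatment of the border via a thickness-$k$ frame of $\#$'s is, if anything, more careful than the paper's proof, which does not discuss border tiles at all.
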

\proof Let $\overline{\tset}$ be the complement of $\tset$, i.e, 
$\overline{\tset} = \Gamma^{2,2}-\tset$, which can be interpreted as the set of ``forbidden" $2$-tiles of $L(\tset)$.
Let $\tsetM_{2k}= \ktile{\widehat {X^{++}} }{2k}-\codeX{\overline{\tset}}$.
To prove the thesis we claim:
\[
L(\tsetM_{2k}) = \codeX{L(\tset)}\,.
\]
\par\noindent
First, we prove the right to left inclusion.
Let $p \in \codeX{L(\tset)}\subseteq X^{++}$, hence there exists $q \in \Gamma^{++}$ such that $p = \codeX{q}$. 
If picture $q$ has size $(r,c)$, with $r,c\ge 1$, then 
$p$ has size $(kr,kc)$; each element of the $k$-tessellation of $p$ can be denoted as $x_{i,j}=\codeX{q_{i,j}}$. 
\par\noindent
By contradiction, assume that $p\not\in L(\tsetM_{2k})$; hence, there is a $2k$-tile $\rho\in\ktile{\widehat p}{2k}$ such that $\rho \not\in \tsetM_{2k}$.
Since $p \in X^{++}$, it must be $\rho\in \ktile{\widehat{ X^{++}}}{2k}$. Moreover, by definition of $\tsetM_{2k}$, 
$\rho \in \codeX{\overline{\tset}}$. Therefore, $\rho= \codeX{\overline{\theta}}$ for some $\overline{\theta}\in \overline{\tset}$, thus $\rho \in X^{2,2}$. 
\par\noindent
Since $X$ is a comma-free picture code, no subpicture in $X$ of $\rho$ can be an internal subpicture of the $2k$-tiles in $X^{2,2}$ of $p$, hence
$\rho=\mytileDynamic{x_{i,j}}{x_{i,j+1}}{x_{i+1,j}}{x_{i+1,j+1}}$ for some $i,j$. It follows that $\rho= \codeX{\theta}$ for 
$\theta = \mytileDynamic{p_{i,j}}{p_{i,j+1}}{p_{i+1,j}}{p_{i+1,j+1}}\in \tset$. 
Since $\codeX{}$ is one-to-one, $\rho$ cannot also be equal to $\codeX{\overline{\theta}}$ for $\overline{\theta}\neq\theta$, a contradiction.
\smallskip
\par\noindent
Next we prove the left to right inclusion.
Let $p\in L(\tsetM_{2k})$. 
Since, by definition of $\tsetM_{2k}$, $p\in X^{++}$ and, by Proposition~\ref{prop-X-k-local}, $X^{++}$ is $2k$-SLT, 
we have that $p$ has size $(kr,kc)$, with the $k$-tessellation of $p$ defined by the subpictures $ x_{i,j}\in X$. 
\par\noindent
Since $\codeX{}$ is a bijection from $\Gamma$ to $X$, there exists one, and only one, symbol in $\Gamma$, denoted as $\gamma_{i,j} \in \Gamma$, such that 
$x_{i,j}=\codeX{\gamma_{i,j}}$. 
Therefore, we can define a picture $q$ such that $q_{i,j}= \gamma_{i,j}$, with $p = \codeX{q},x_{i,j} = \codeX{\gamma_{i,j}}$. 
\par\noindent
Consider a $2k$-tile $\xi \in \tsetM_{2k}\cap X^{2,2}$, denoted by $\xi=\mytileDynamic{x_{i,j}}{x_{i,j+1}}{x_{i+1,j}}{x_{i+1,j+1}}$.
\par\noindent 
Since $\xi\not\in\codeX{\overline{\tset}}$, it must be $\xi\in\codeX{\tset}$, i.e, there is $\theta \in \tset$ such that
$ \xi= \codeX{\theta}$, with $\theta = \mytileDynamic{\gamma_{i,j}}{\gamma_{i,j+1}}{\gamma_{i+1,j}}{\gamma_{i+1,j+1}}\in \tset$. 
Therefore, all the tiles in $q$ are in $\theta$, hence $q \in L(\tset)$.
\qed

\section{Main result}\label{s-main} 

Before we present the main result that any recognizable language is the projection of an SLT language having alphabetic ratio two, 
 we show that for some language in \rec~a ratio smaller than two does not suffice. This negative statement reproduces in 2D the statement and the proof for regular word languages in ~\cite{DBLP:journals/ijfcs/Crespi-ReghizziP12}, Theorem 5.
\subsection{The minimal alphabetic ratio}
\begin{theorem}[minimal alphabetic ratio]
There exists a TS recognizable languag  $R$ over an alphabet $\Sigma$ such that for every $k$-tiling system $(\Sigma, \Gamma, \tset_k, \pi)$ such that $R=\pi(L(\tset _k))$, the alphabetic ratio is $\frac{|\Gamma|}{|\Sigma|}\ge 2$.
\end{theorem}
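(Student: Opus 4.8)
The plan is to exhibit a concrete language $R$ over a small alphabet for which no tiling system (with tiles of any size $k$) can have alphabetic ratio strictly below $2$, mirroring the word-language argument of~\cite{DBLP:journals/ijfcs/Crespi-ReghizziP12}, Theorem 5. A natural candidate is a one-row (word) language already known to witness the bound in 1D, for instance a language over a binary terminal alphabet $\Sigma=\{a,b\}$ whose minimal DFA has a large state set relative to what a single letter of a small local alphabet can ``remember'' across a $k$-tile window. Concretely, I would take something like $R=\{\,a^{n}b a^{n}\mid n\ge 1\,\}^{+}$ or a suitably chosen finite-index family $R_m$ (a family parametrized by $m$, so that the ratio is forced up as $m$ grows, and then fix one member that already forces ratio $\ge 2$); the point is to pick $R$ so that counting arguments on pre-images give a clean contradiction when $|\Gamma| < 2|\Sigma|$, i.e. when $|\Gamma| \le 2|\Sigma|-1$.

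First I would reduce to one-row pictures: since $R$ will be a word language viewed as one-row pictures, a $k$-tiling system recognizing it amounts (by the correspondence noted after Definition~\ref{def:REC}, extended to $k$-tiles via Corollary~\ref{cor-kAsTwoTiles} and Theorem~\ref{thm-klocal}) to a $k$-SLT word language $L$ over $\Gamma$ with $R=\pi(L)$. Then I would run a pigeonhole/pumping argument on pre-images: take a long word $w\in R$, its pre-image $z\in L$, and look at the sequence of length-$k$ factors of $z$. The key combinatorial step is to show that if $|\Gamma|$ is too small, two ``configurations'' (think: a length-$(k-1)$ overlap window together with the terminal letter produced) must repeat in positions that are incompatible, so one can cut-and-paste $z$ to obtain a pre-image of a word not in $R$ while keeping every $k$-tile inside the allowed set $\tset_k$ — exactly the spirit of the proof of Proposition~\ref{prop-recHierarchy}, but now pushed to force the constant $2$ rather than just an unbounded hierarchy.

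The main obstacle is getting the constant to be exactly $2$ rather than merely ``some constant $>1$'': the naive pumping argument only shows the ratio is bounded below by something that grows with the language, which recovers Proposition~\ref{prop-recHierarchy} but not the sharp bound. To pin down $2$, I expect one needs the refined argument from~\cite{DBLP:journals/ijfcs/Crespi-ReghizziP12}: choose $R$ to be (the one-row encoding of) a language whose pre-images must encode, in each block of $k$ consecutive local symbols, at least one ``bit'' of auxiliary information beyond the terminal symbol itself — for example a parity or a marker that distinguishes two interleaved copies — and then show that a local alphabet of size $2|\Sigma|-1$ literally cannot carry one terminal symbol plus one bit in each position simultaneously, because some terminal symbol would be represented by a single local symbol and hence could not locally record the bit, allowing a swap that changes membership. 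I would formalize this by partitioning $\Gamma$ according to its projection to $\Sigma$, observing that if $|\Gamma|\le 2|\Sigma|-1$ then some fiber $\pi^{-1}(c)$ is a singleton, and then exhibiting a word of $R$ whose every pre-image forces a nontrivial choice precisely at an occurrence of that letter $c$, deriving the contradiction. The rest — verifying that the cut-and-paste preserves all $k$-tiles including the bordered ones in $\dhat{z}$, and that the resulting word is genuinely outside $R$ — is routine once the configuration to be repeated is correctly identified.
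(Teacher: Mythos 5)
Your closing paragraph does contain the right combinatorial skeleton --- partition $\Gamma$ into the fibers $\pi^{-1}(c)$, observe that $|\Gamma|\le 2|\Sigma|-1$ forces some fiber to be a singleton, and derive a contradiction from the sublanguage of $R$ written over that single letter --- and this is in fact the skeleton of the paper's own argument. But as it stands the proposal has genuine gaps. First, your only concrete candidate, $R=\{a^{n}ba^{n}\mid n\ge 1\}^{+}$, is not regular (intersect it with $a^{+}ba^{+}$), hence not TS recognizable as a one-row picture language, so it cannot serve as the witness; the fallback ``suitably chosen finite-index family $R_m$'' is never specified, and fixing ``one member'' of a family whose ratio grows with $m$ does not by itself give the constant $2$. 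Second, the step that actually carries the theorem --- proving that, for \emph{every} $k$ simultaneously, a singleton fiber cannot recognize the relevant unary sublanguage --- is only gestured at (``a parity or a marker''); the pumping/cut-and-paste you describe is, as you yourself note, the argument of Proposition~\ref{prop-recHierarchy}, which yields an unbounded hierarchy but not the sharp bound. Third, the reduction from $k$-tiling systems on one-row pictures to $k$-SLT word languages (with the thickened $\#$-border) is asserted, not argued. So what you have is a plan that defers its crucial counting to~\cite{DBLP:journals/ijfcs/Crespi-ReghizziP12}, not a proof.

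For comparison, the paper's proof stays genuinely two-dimensional and is very short: let $R_a$ be the set of all square pictures over $\{a\}$ of size at least $(2,2)$ and take $R=\bigcup_{a\in\Sigma}R_a$. Over a one-letter local alphabet, a large all-$\gamma$ square and a large all-$\gamma$ non-square rectangle have exactly the same set of ($k$-)tiles, so each $R_a$ forces $|\pi^{-1}(a)|\ge 2$; since the fibers are pairwise disjoint, $|\Gamma|\ge 2|\Sigma|$. If you prefer to salvage your one-dimensional route, replace your candidate by, say, $\bigcup_{a\in\Sigma}\{a^{2n}\mid n\ge 1\}$ viewed as one-row pictures, and prove that even length is not definable by $k$-tiles over a single local letter for any $k$; that is precisely the missing singleton-fiber contradiction, playing the role that ``square versus rectangle'' plays in the paper.
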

\proof
For a generic letter $a$, let $R_a$ be the language of all square pictures over $\{a\}$, of size at least $(2,2)$. 
It is obvious that $R_a$ can only be recognized by tiling systems having a local alphabet $\Gamma$ of cardinality at least $2$. In fact, if $|\Gamma|=1$, then 
a non-square (rectangular) picture and a square picture can be covered by the same set of tiles. 
\par\noindent
Let $\Sigma = \{ b,c\}$; we prove the thesis for $R=R_b \cup R_c$. If $|\Gamma|< 4$, then consider two pictures: $p' \in R_b, p'' \in R_c$. 
Let $ \beta, \gamma \in \Gamma^{++}$ be their respective pre-images. 
Since $p'$ only includes symbol $b$, every symbol of $\beta$ must be projected to $b$; similarly, every symbol of $\gamma$ must be projected to $c$. 
Since $|\Gamma|< 4$ (e.g. $|\Gamma|=3$) and the symbols in $\beta$ must be different from the symbols in $\gamma$, one of the two pictures, say, $\beta$, 
must be composed of just one type of symbol (i.e., it is on a unary alphabet), but we already noticed that each $R_a$ requires two local symbols.
The generalization to an alphabet $\Sigma$ of larger cardinality is immediate, by considering $R= \bigcup_{a\in\Sigma} R_a$.
\qed
\medskip
\par
The above theorem leaves open the possibility that the alphabetic ratio two may suffice for all recognizable languages. This is proved in Section~\ref{ssec-MET}.
\par

\subsection{Padded picture languages}
In later proofs it is convenient to adjust the picture height and width to be a multiple of the same integer $k\geq 2$, in order to apply a $k$ tessellation. 
To this end, we introduce a transformation, called padding, that sets a given picture into the north-west corner of a sufficiently larger picture having both sides mutiple of $k$. 
The transformation respectively appends to the east and to the south side of the picture some columns and rows, filled with a new letter not present in the original alphabet. 
\par
More precisely, let $R \subseteq \Sigma^{++}$ be in \rec~, and let $k\geq 2$. Intuitively, we define a language 
$R^{(k)}\subseteq (\Sigma \cup \{\$ \})^{++}$, where $\$ \notin \Sigma$, obtained 
by concatenating vertically and then horizontally 
each picture of $R$ with two rectangular pictures in $\{\$\}^{++}$, of minimal size, 
such that the resulting picture has size $(m, n)$, where both $m$ and $n$ are multiple of $k$.
The reader may look at Figures~\ref{fig-tesselation},~\ref{fig-padded} for two padded pictures where $k=3$. 
The formal definition follows.

\begin{definition}[Padded language]\label{def-padded}
Let $R \subseteq \Sigma^{++}$ be in \rec~and let $k\geq 2$. 
	Let $ V_k, H_k \subseteq \{\$ \}^{++}$ be the languages such that:
	\[
	 V_k =\left\{ \{\$\}^{n, h} \mid n>0,1\le h\le k\right\}\; \text{ and }\; H_k =\left\{ \{\$\}^{h, n} \mid n>0,1 \le h\le k\right\}.
	\]
	Then the \emph{padded language}, denoted by ${R}^{(k)}$, over the alphabet $\Sigma_\$ = \Sigma \cup
	\{\$\}$ is:
	\begin{equation}\label{eq:paddedLanguage2}
		{R}^{(k)\,} =\,
		\begin{array}{|c|c|}
			\hline
			R & \multirow{2}{*}{$ H_k$} \\ \cline{1-1}
			 V_k &
			\\ \hline
		\end{array} \,\cap \left( \left( \Sigma_\$ \right)^{k,k} \right)^{++}.
	\end{equation}
\end{definition}
\medskip
	Notice that the definition of  padded language is such that every picture has always at least one padded row and one padded column, 
	and at most $k$ padded rows and $k$ padded columns. It is easy to see that both $\left|{R}^{(k)\,} \right|_{row}$ and $\left|{R}^{(k)\,} \right|_{col}$ are multiple of $k$.

\par
	Looking again at Figure~\ref{fig-padded}, the original picture has size $(6,12)$, therefore the horizontal and the vertical padded borders have thickness $3$; in Figure~\ref{fig-tesselation}, since the original pictures has size $(7,14)$,
the horizontal border has thickness $2$ and the vertical border has thickness $1$, so that the padded picture has size $(9,15)$.

\begin{figure}
\setlength{\tabcolsep}{10pt}
\renewcommand{\arraystretch}{1.0}
\medskip
\begin{center}
\scalebox{0.9}{
$
\begin{array}{ |ccc: ccc: ccc: ccc: ccc|} 
\hline
 a &a &a &a &a &a &a &a &a &a &a &a &\$ & \$ &\$\\
 a &a & a &a &a &a&a &a&a & a & a &a &\$ & \$ &\$
\\
 a &a &a & a & a& a &a&a &a &a &a &a &\$ & \$ &\$
\\ \hdashline
 a &a &a &a&a& a &a &a &a &a& a &a & \$ & \$ &\$
\\
 a &a &a &a &a&a&a &a &a &a &a & a &\$ & \$ &\$
\\
 a &a &a &a &a &a&a &a &a &a &a & a &\$ & \$ &\$
\\ \hdashline
 \$ & \$& \$ & \$& \$ &\$& \$&\$& \$ & \$ & \$ &\$ & \$ & \$ &\$
\\ \$ & \$& \$ & \$& \$ &\$& \$&\$& \$ & \$ & \$ &\$ & \$ & \$ &\$
\\ \$ & \$ & \$ & \$ & \$ &\$& \$&\$& \$ & \$ & \$ &\$& \$ & \$ &\$
\\
\hline
\end{array} 
$ 
}
\end{center}
\caption{The 3-padded picture of $a^{6,12}$ in language $R$ of Example~\ref{ex:k-localLanguages}.}\label{fig-padded} 
\end{figure}

Since \rec~is closed with respect to concatenations and intersection, given $R$ in \rec, it follows from Eq.~\eqref{eq:paddedLanguage2} that also $R^{(k)}$ is in \rec , and
 there is a standard construction of a TS for $R^{(k)}$ that we do not use because it is not optimal, in the following sense. 
Such a TS uses a large local alphabet, with cardinality of the order of $|\Gamma|\cdot k^2$, where $\Gamma$ is the local alphabet of a TS for $R$. 
\par
On the other hand, the following theorem shows, by means of a more focused construction, that a smaller local alphabet suffices. 
This will be important in the proof of the main result. 
\begin{theorem}\label{thm-size-kSLT-padded}
If language $R\subseteq \Sigma^{++}$ is defined by a TS with local alphabet $\Gamma$, for all $k\geq 2$ the padded language $R^{(k)}\subseteq (\Sigma \cup \{\$ \})^{++}$ can be defined by a TS with a local alphabet of size 
	$|\Gamma| + k+1$. 
\end{theorem}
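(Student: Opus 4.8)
The idea is to take a tiling system $(\Sigma,\Gamma,\tset,\pi)$ for $R$ and extend it minimally to recognize the padded language $R^{(k)}$. The padded picture consists of a copy of a picture of $R$ in the north-west, a vertical $\$$-strip of thickness between $1$ and $k$ on the east, a horizontal $\$$-strip of thickness between $1$ and $k$ on the south, and a $\$$-filled rectangle in the south-east corner. I would design the new local alphabet as $\Gamma' = \Gamma \,\cup\, \{\$_0,\$_1,\dots,\$_{k-1}\}$, i.e.\ adding $k$ fresh ``counter'' symbols, plus possibly one extra separator symbol, for a total of $|\Gamma|+k+1$. The role of the $\$_i$ symbols is to act as a bounded counter, written in the padded strips, that certifies that the thickness of the padding lies in the range $1,\dots,k$ \emph{and} forces the overall side lengths to be multiples of $k$. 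The projection $\pi'$ extends $\pi$ by sending every $\$_i \mapsto \$$ (and the separator, if used, also to $\$$).

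\textbf{Key steps, in order.} (1) First I would write out the geometry of $R^{(k)}$ explicitly as in Definition~\ref{def-padded}: an $m\times n$ picture, $m,n$ multiples of $k$, with the $R$-part occupying rows $1,\dots,m'$ and columns $1,\dots,n'$ where $1\le m-m'\le k$ and $1\le n-n'\le k$. (2) Then I would specify the pre-image: on the $\Gamma$-part use an ordinary pre-image of $\pi$; on the east padding strip fill each row with $\$_0\$_1\cdots\$_{t-1}$ reading left to right, where $t$ is the strip thickness; symmetrically on the south strip fill each column with $\$_0,\dots,\$_{t-1}$ top to bottom; and on the south-east rectangle use a consistent scheme (e.g.\ a fresh separator symbol, or a diagonal pattern) so the two counters don't interfere. (3) Next I would define the tile set $\tset'_2 \subseteq (\Gamma'\cup\{\#\})^{2,2}$ as: all tiles of $\tset$ (the interior of the $R$-part); tiles realizing the counter pattern $\$_i$ next to $\$_{i+1}$ horizontally in the east strip and vertically in the south strip; boundary tiles where $\Gamma$ meets $\$_0$ (the counter must restart at $0$ exactly at the $R$-boundary) and where the last counter symbol $\$_{t-1}$ meets the border $\#$; and the south-east corner tiles. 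The crucial constraint is that the counter is only allowed to reach $\$_{k-1}$ and then must hit $\#$, which bounds the thickness by $k$; and since a full picture of $R$ gets an east strip whose width together with $n'$ is a multiple of $k$ — this is where I need the counter to encode alignment. (4) Finally I would prove $\pi'(L(\tset'_2)) = R^{(k)}$ by the usual two inclusions: every padded picture admits such a pre-image (easy, construct it), and conversely any picture tiled by $\tset'_2$ has the forced structure — the $\$_i$-region is a rectangle abutting east and south, its thickness is in $1,\dots,k$, and the side-length-multiple-of-$k$ condition holds, so its projection lies in $R^{(k)}$.

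\textbf{The main obstacle.} The delicate point is getting the side lengths to be \emph{multiples of $k$} rather than merely bounding the padding thickness between $1$ and $k$. Bounding the thickness alone is easy with a $0,\dots,k-1$ counter that restarts at the $R$-border and must terminate at $\#$; but $m = m' + (\text{thickness})$ being a multiple of $k$ is a global divisibility constraint that a $2$-tile system must enforce using only local information. The resolution is that the counter must not restart freely at $0$ at the $R$-boundary: instead, the value of the counter at the top of the south strip (or the left of the east strip) must be synchronized with $m' \bmod k$. This means the counter symbols must also appear, in a thin diagonal or a propagated residue, \emph{inside} the $R$-part — or, more cleanly, one reads the residue off the existing $\#$-border of the $R$-pre-image by having the counter "wrap around": the strip of thickness $t$ carries values that, concatenated with the implicit length of the $R$-side, close up mod $k$. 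I would handle this by letting the south padding strip's column-counter start not at $\$_0$ but at $\$_{(-m')\bmod k}$, forced by a tile that reads the junction between the last $R$-row and the first $\$$-row together with the column index residue — which itself must have been threaded through the $R$-part. This threading is the part that genuinely uses extra alphabet budget, and the whole construction must be balanced so the total stays at $|\Gamma|+k+1$ and not $|\Gamma|+2k$ or worse; I expect the clean solution is to thread the residue only along the boundary rows/columns of the $R$-part (where $\#$-adjacency already distinguishes them) so no new symbols beyond the $k$ counters and one separator are needed.
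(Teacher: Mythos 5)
Your overall skeleton (add about $k+1$ fresh symbols used only in the padding, extend the projection by sending them to $\$$, add tiles for the padding pattern and the junctions, then prove the two inclusions) matches the paper's strategy, but the mechanism you chose for the divisibility constraint contains a genuine gap. You orient your counter \emph{across the thickness} of each strip ($\$_0\$_1\cdots\$_{t-1}$ spanning the east strip, and likewise down the south strip), which only bounds the thickness by $k$; to get the total side lengths divisible by $k$ you are then forced to synchronize the counter's starting value with $m'\bmod k$ and $n'\bmod k$, and your proposed resolution is to ``thread'' this residue along the boundary rows/columns of the $R$-part without spending new symbols. That step fails: in a $2$-tiling system the only information that propagates along a row is what is written in the pixels themselves, so a row over the unmodified alphabet $\Gamma$ cannot carry a modulo-$k$ residue — enforcing it would require enriching the boundary cells of the pre-image to something like $\Gamma\times\{0,\dots,k-1\}$ (these cells must still project to $\Sigma$, not $\$$, so they cannot be replaced by counter symbols), which destroys the $|\Gamma|+k+1$ budget. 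So as written the construction either does not enforce the multiple-of-$k$ condition or does not meet the alphabet bound.

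The missing idea, which is how the paper's proof works, is that no synchronization with the $R$-part is needed at all: since the padding has thickness at least one on both sides, the \emph{entire} bottom row and the \emph{entire} rightmost column of the padded picture lie inside the padding. The paper's local alphabet adds $\Delta=\{\flat,1,\dots,k\}$ (exactly $k+1$ symbols) and lays the cyclic counter $1\,2\cdots k$ repeatedly along these two full outer edges (the interior of the padding blocks being filled with $\flat$, arranged as the blocks $p^{(i,k)}$ and $p^{(k,i)}$); border tiles force the counter to start at $1$ and end at $k$ against the $\#$ frame, which makes the total height and width multiples of $k$ directly, measured along edges that never touch $\Gamma$. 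The thickness bounds $1\le t\le k$ are imposed separately (in the paper, by intersecting with languages whose padding parts have at most $k$ rows, respectively columns). If you reorient your counter this way — along the outer edge of the whole padded picture rather than across the strip — your construction goes through and the residue-threading problem disappears.
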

\proof
 Let $ (\Sigma,\Gamma',\tset',\pi')$ be a TS recognizing $R$. We construct the TS 
$(\Sigma_{\$}, \Gamma, \tset,  \pi)$ that recognizes $R^{(k)}$.
\par\noindent
Let $\Delta=\{\flat, 1, \dots, k\}$ be a new alphabet, disjoint from $\Gamma$. 
Define the pictures $p^{(i,k)}\in \Delta^{++}$, of size $(i,k)$ with $1\le i\le k$, such that every pixel is $\flat$, except for the rightmost column and the bottom row;  the rightmost column is 
 $k-i+1 \ominus k-i \ominus \dots\ominus k$, 
 and the bottom row is $1 2 \dots k$. 
\par\noindent For instance $p^{(4,5)}$ is the picture: 
\begin{center}
\scalebox{0.8}{ 
$
\begin{array}{|ccccc|}
			\hline
			\flat&\flat&\flat&\flat&2 \\
 \flat&\flat&\flat&\flat&3 \\
			\flat&\flat&\flat&\flat&4 \\
			1&2&3&4&5\\
\hline
		\end{array}
		$
}
\end{center}
and $p^{(2,5)}$ is the picture: 
\begin{center}
\scalebox{0.8}{ 
$
\begin{array}{|ccccc|}
			\hline
			\flat&\flat&\flat&\flat&4 \\
			1&2&3&4&5\\
\hline
		\end{array}
		$
}.	
\end{center}

\par\noindent		
We define the  sets:
\[
\begin{array}{ll}
 \Delta_H &= \left\{ \Delta^{\ominus i}\mid 1\le i \le k\right\}^{\obar +} 
 \\
 \Delta_V &= \left\{ \Delta^{\obar i}\mid 1\le i \le k\right\}^{\ominus +} 
\\
H &= \bigcup_{1\le i \le k} \{p^{(i,k)}\}
\\
V&=\bigcup_{1\le i \le k} \{p^{(k,i)}\}.
\end{array}
\]
The set $\Delta_H$ is composed of pictures over the padding alphabet $\Delta$ having a number of rows in the interval $1, \dots, k$ and any number of columns. Symmetrically for $\Delta_V$. The set $H$ is a set of rectangular pictures of the form $p^{(i,k)}$, i.e., having a number $i$ of rows in the interval $1, \dots, k$ and $k$ columns.
\par\noindent
Let $\Gamma= \Gamma' \cup \Delta$ and let $L\subset \Gamma^{++}$ be the language defined by the formula:
\[
L= \, \begin{array}{|c|c|}
			\hline
			L(\tset ) & \multirow{2}{*}{$V^{\ominus+}$} \\ \cline{1-1}
			\Delta_H &
			\\ \hline
		\end{array} \;\cap \;
		\begin{array}{|c|c|}
			\hline
			L(\tset ) & \Delta_V\\ \hline
			\multicolumn{2}{|c|}{H^{\obar +}} 
			\\ \hline 
		\end{array}.
\]
Language $L$ is such that its bottom row and its leftmost column are words in $(1\, 2 \dots k)^+$, hence every picture of $L$ has both height and length multiple of $k$.
\par\noindent
By extending the projection $\pi'$ from $\Gamma'\to \Sigma $ to $\Gamma \to  \Sigma_{\$}$, 
with $\pi(a) =\pi'(a)$ for every $a \in \Sigma$ and  $\pi(\flat)=\pi(1) = \dots = \pi(k)=\$$, we obviously obtain that $\pi(L)=R^{(k)}$ and 
it remains to prove that $L$ is a local language, but we pause to show a helpful example.
\par 
Figure~\ref{fig-local-padding} illustrates the language $L$, applied as pre-image of the padded language $R^{(3)}$ associated with the language $R$ of Example~\ref{ex:k-localLanguages}.
It shows two pictures $p_0$ and $p_1$ over the alphabet $ \Gamma= \Gamma'\cup \Delta = \{n,\ssearrow, \nnearrow\} \cup \{1, 2, 3, \flat\}$ 
which is then mapped onto $\{a \cup \$\}$ to obtain the padded language. Picture $p_0$ has a number of "unpadded" rows multiple of 3, 
while in $p_1$ the number of unpadded rows is equal to $1 \Mod 3$. 
The projection 
 $\pi:\Gamma \to \{a , \$\}$ is such that $\pi(c)=a$ for $c\in \{n,\ssearrow, \nnearrow\}$ and $\pi(c)=\$$ for $c\in \{1, 2, 3, \flat\}$. 
 The images under $\pi$ of $p_0$ and $p_1$ are the pictures shown in Figure~\ref{fig-padded}. 

 \begin{figure}

\setlength{\tabcolsep}{9pt}
\begin{center}
\renewcommand{\arraystretch}{1.0}
\scalebox{0.9}{\begin{tabular}{c}
$
p_0 = \,
\begin{array}{| ccc: ccc: ccc: ccc: ccc|} \hline
 \ssearrow & n & n & n & n & n & n & n & n &n&n& \nnearrow &\flat & \flat &1\\
 n & \ssearrow &n & n & n & n& n & n& n &n & \nnearrow & n &\flat & \flat &2
\\
 n & n & \ssearrow &n & n&n & n&n&n& \nnearrow & n & n &1 & 2 &3
\\ \hdashline
 n & n & n & \ssearrow& n&n &n&n& \nnearrow & n&n & n & \flat & \flat &1
\\
 n & n & n & n &\ssearrow& n&n& \nnearrow & n & n & n &n &\flat & \flat &2
\\
 n & n & n & n & n &\ssearrow&\nnearrow &n& n & n & n &n &1 & 2 &3
\\ \hdashline
 \flat & \flat& 1 & \flat& \flat &1& \flat&\flat& 1 & \flat & \flat &1 & \flat & \flat &1
\\ \flat & \flat& 2 & \flat& \flat &2& \flat&\flat& 2 & \flat & \flat &2 & \flat & \flat &2
\\ 1 & 2 & 3 & 1 & 2 &3& 1&2& 3 & 1 & 2 &3& 1 & 2 &3
\\
\hline
\end{array} 
$ 
\\
\\
$
p_1 = \,
\begin{array}{| ccc: ccc: ccc: ccc: ccc|} \hline
 \ssearrow & n & n & n & n & n & n & n & n & n & n &n&n& \nnearrow &1\\
 n & \ssearrow &n & n & n & n & n& n & n& n &n &n& \nnearrow & n &2
\\
 n & n & \ssearrow &n & n & n & n&n & n&n&n& \nnearrow & n & n &3
\\ \hdashline
 n & n & n & \ssearrow&n & n & n&n &n&n& \nnearrow & n&n & n & 1
\\
 n & n & n & n &\ssearrow&n &n& n&n& \nnearrow & n & n & n &n &2
\\
 n & n & n & n & n &\ssearrow&n& n&\nnearrow &n& n & n & n &n &3
\\ \hdashline
 n & n & n & n &n& n &\ssearrow& \nnearrow &n& n & n & n &n &n &1
\\ \flat & \flat& 2 & \flat& \flat &2& \flat&\flat& 2 & \flat & \flat &2 & \flat & \flat &2
\\ 1 & 2 & 3 & 1 & 2 &3& 1&2& 3 & 1 & 2 &3& 1 & 2 &3
\\
\hline
\end{array}
$
\end{tabular}
}
\end{center}
 \caption{Two (pre-image) pictures, $p_0$ and $p_1$, in the local language $L$ of the TS defining the padded language $R^{(3)}$ 
 associated with the language $R$ of Example~\ref{ex:k-localLanguages}. The TS is defined in the proof of Theorem~\ref{thm-size-kSLT-padded}.}\label{fig-local-padding}
\end{figure}

To prove the thesis, we construct the tiles that define  the language  $L$.
The following set $\tset $ of tiles is  composed of   four disjoint parts:
For conciseness we call \emph{border tile} a tile that includes a $\sharp$ pixel.
\begin{description}
	\item[tiles over $\Gamma'$: ] all tiles of $\tset ' \cap (\Gamma')^{2,2}$ are in $\tset $.
	
	\item[border tiles over $\Gamma' \cup \{\sharp\}$: ] all border tiles of $\tset '$ 
	having neither the east border nor the south border are also in $\tset $. 
	E.g., a tile $\mytile{\#}{\#}{\gamma_1}{\gamma_2}$ is in $\tset $ but not a tile $\mytile{\#}{\gamma_1}{\#}{\#}$, where $\gamma_1, \gamma_2 \in \Gamma'$.
	
	\item[border tiles over $\{\# \}\cup \Delta$ :] 
	the following tiles are in $\tset $:
	\[
\begin{array}{l}
	\left\{\mytile{\#}{1}{\#}{\#}, \mytile{\#}{\#}{1}{\#},\mytile{k}{\#}{\#}{\#},\mytile{\#}{\flat}{\#}{\flat}, \mytile{\flat}{\#}{\flat}{\#} \mytile{\#}{\flat}{\#}{1}, \mytile{\flat}{\flat}{\flat}{\flat}\right\}
	\\ \cup 
	\\ 
	\left\{ \mytile{i}{\#}{j}{\#}, \mytile{\flat}{i}{\flat}{j},\mytile{i}{\flat}{j}{\flat}, \mytile{i}{j}{\#}{\#},\mytile{\flat}{\flat}{i}{j} \mid 1\le i<k, j=i+1\right\}
\end{array}
\]
	
	\item[border tiles over $\Gamma' \cup\{\sharp\}\cup \Delta$ :]	
	for  all $\gamma, \gamma_1, \gamma_2\in \Gamma$, 
\begin{raggedright}
$\mytile{\#}{\gamma}{\#}{\#} \in \tset'\Rightarrow \left\{\mytile{\#}{\gamma}{\#}{\flat}, \mytile{\#}{\gamma}{\#}{1}\right\} \subseteq \tset$. 

$\mytile{\#}{\#}{\gamma}{\#} \in \tset'\Rightarrow \left\{\mytile{\#}{\#}{\gamma}{\flat}, \mytile{\#}{\#}{\gamma}{1}\right\}\subseteq \tset$.

$\mytile{\gamma}{\#}{\#}{\#} \in \tset'\Rightarrow \left\{\mytile{\gamma}{\#}{\flat}{\#}, \mytile{\gamma}{\#}{k}{\#}\right\}\subseteq \tset$.

$\mytile{\gamma_1}{\gamma_2}{\#}{\#} \in \tset' \Rightarrow
\left\{\mytile{\gamma_1}{\gamma_2}{\flat}{\flat}, \mytile{\gamma_1}{\gamma_2}{i}{j}\mid 1\le i\le k, j=1+\, i  \Mod k \right\}\subseteq \tset$.

\nopagebreak
$\mytile{\gamma_1}{\#}{\gamma_2}{\#} \in \tset'\Rightarrow
\left\{\mytile{\gamma_1}{\flat}{\gamma_2}{\flat} , \mytile{\gamma_1}{i}{\gamma_2}{j} \mid 1\le i\le k, j=1+\, i  \Mod k \right\}\subseteq \tset$.
\end{raggedright}
\end{description}
For brevity we omit the trivial proof that $L=L(\tset )$.
\qed

\begin{figure}
$$ \begin{array}{cccc}
\begin{array}{| ccc|} \hline
	\ssearrow & n & n\\
	n & \ssearrow &n \\
	n & n & \ssearrow \\ \hline
\end{array} &
\begin{array}{| ccc|} \hline
	n & n & n\\
	n & n &n \\
	n & n & n \\ \hline
	\end{array} &
		\begin{array}{| ccc|} \hline
			n & n & \nnearrow\\
			n & \nnearrow &n \\
			\nnearrow & n & n\\ \hline
		\end{array} &
		\begin{array}{| ccc|} \hline
			\flat&\flat & 1\\
			\flat & \flat&2 \\
			1 & 2 & 3\\ \hline
		\end{array}
\end{array}	
$$
\caption{The 3-tiles in $\kblock{p_0}{3}$ for the picture $p_0$ of Figure~\ref{fig-local-padding}.}\label{fig-3tiles}
\end{figure}

\subsection{ Extended Medvedev's Theorem for pictures}\label{ssec-MET}
By the padding technique, we  transform a tiling recognizable
 language $R \subseteq \Sigma^{++} $ into a language 
 $R^{(k)} \subseteq (\Sigma_{\$})^{++}$ 
 whose pictures have both sides multiples of $k$. 
$R^{(k)}$ is recognized by a tiling system $(\Sigma_{\$} , \Gamma, \tset, \pi)$,
 i.e., $R^{(k)} = \pi (L(T))$. 
 The pictures of $L(T)$ can be tessellated by subpictures belonging to the set $\kblock{L(T)}{k}$. 
This means that the elements of $L(T)$ may be also seen as pictures over 
the alphabet $\kblock{L(T)}{k}$. 
\par
A central observation is that, when one 
assembles the elements of $\kblock{L(T)}{k}$ in order to obtain the pictures 
of $L(T)$, the correctness of the assembly  only depends on the pixels located in the ``periphery'' of 
such elements. This motivates the introduction of the concept of frame, to formalize the idea of periphery. We associate to each picture $r \in \kblock{L(T)}{k}$ 
a pair $(f(r), \pi(r))$, where $f(r)$ is the frame of $r$ 
(see definition below) and $\pi(r)$ is the projection of r 
on the terminal alphabet $\Sigma$. 
If we denote by $B_k$ the set of 
such pairs for all $r$ in $\kblock{L(T)}{k}$, the language $R^{(k)}$ 
can be expressed as the projection of a local language over 
the alphabet $B_k$. This is formalized by the following Lemma~\ref{lemma-main}.
 
\par
For any picture $p \in \Gamma^{k,k}$, define the \emph{frame}, denoted by $f(p)$, as the quadruple of words: 
\begin{equation}
\label{eq:frame}
f(p) = (n_p, e_p, s_p, w_p), \quad n_p, e_p, s_p, w_p \in \Gamma^k 
\end{equation}
such that $n_p$ is the subpicture $p_{(1,1 ; k, 1)}$ (north row), $e_p$ is $p_{(k,1 : k,k)}$ (east column), and similarly $s_p, w_p$ are respectively the south row and west column. 
(The four words are not independent since each corner of $p$ is shared by two of the words.)
\par\noindent An example is in Figure~\ref{fig:frame}.

\begin{figure}
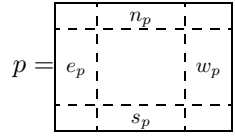

	\begin{center} $p =$\scalebox{0.8}{$ 
			\begin{array}{|c:ccc:c|}				
	\hline &&n_p&&\\\cdashline{1-5}
	&&&& \\
		 e_p&& && w_p \\
		 &&&& \\
				 \cdashline{1-5}
				&&s_p & & \\ \hline
			\end{array}
			$
		}
	\end{center}
	\caption{The frame $f(p)$ of a picture $p\in \Gamma^{k,k}$. It is composed by the four words $(n_p, e_p, s_p, w_p)$ (which each corner being shared by two words).}\label{fig:frame}
\end{figure}

 \begin{lemma}\label{lemma-main}
		Let $k\ge 2$ and, for any $R\subseteq \Sigma^{++}$ in \rec, let $R^{(k)}$ be its padded language 
		recognized by the tiling system $(\Sigma_\$, \Gamma, \tset,  \pi)$. There exist a finite alphabet 
		$B_k\subseteq \Gamma^{4k}\times \Sigma_\$^{\,k,k}$, a set $M$ of 2-tiles over $B_k \cup \{\#\}$, and a morphism $\pi_k:B_k^{++}\to\Sigma_\$^{\,++}$ such that $R^{(k)}=\pi_k(L(M))$.
	\end{lemma}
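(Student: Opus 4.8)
The plan is to make the informal description in the paragraph preceding the lemma fully precise. First I would define the alphabet $B_k$ concretely: for each $r\in\kblock{L(\tset)}{k}$ we take the pair $(f(r),\pi(r))$, where $f(r)=(n_r,e_r,s_r,w_r)\in\Gamma^{4k}$ is the frame as in Equation~\eqref{eq:frame} and $\pi(r)\in\Sigma_\$^{\,k,k}$ is the pixelwise projection; then $B_k$ is the (finite) set of all such pairs. The projection $\pi_k:B_k^{++}\to\Sigma_\$^{\,++}$ is the morphism that sends each letter $b=(f(r),\pi(r))$ to the $k\times k$ picture $\pi(r)$; since all images are isometric $k\times k$ pictures, $\pi_k$ is a well-defined picture morphism by Definition~\ref{def:pictureMorphism}. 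The point of carrying the frame component is exactly that $B_k$ separates blocks that project identically but cannot be composed in the same way, so that a set of $2$-tiles over $B_k\cup\{\#\}$ can enforce legal assembly.

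Next I would define the tile set $M$. A $2$-tile over $B_k\cup\{\#\}$ is determined by up to four letters of $B_k$, i.e. by up to four frames; the idea is to put a tile in $M$ precisely when the corresponding overlap of frames is ``consistent with $L(\tset)$''. Concretely: a horizontal adjacency $\mytileDynamic{b}{b'}{b''}{b'''}$ of four blocks $b=(f(r),\cdot),\dots$ is allowed iff the east column of $r$ equals the west column of the right neighbour, the south row of $r$ equals the north row of the block below, the appropriate corner letters agree, AND every $2$-tile of $\widehat\Gamma$ straddling the seam between adjacent blocks lies in $\tset$ (this last check only needs the two columns, resp. rows, adjacent to the seam, which are recorded in the frames). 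The border tiles of $M$ (those containing $\#$) are defined analogously, requiring that the exposed north/west rows match $\#^k$-strips and the interior seam with the frame of the block is $\tset$-legal, and also — crucially — that the block's own interior is consistent, which for an isolated corner or edge block of $\widehat{L(\tset)}$ is handled by only admitting frames $f(r)$ that come from genuine $r\in\kblock{L(\tset)}{k}$; since $B_k$ is built from such $r$, this is automatic.

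Then I would prove $R^{(k)}=\pi_k(L(M))$ by the two inclusions. For $R^{(k)}\subseteq\pi_k(L(M))$: take $p\in L(\tset)$ with $\pi(p)\in R^{(k)}$ (such $p$ exists since $R^{(k)}=\pi(L(\tset))$); because every picture of $L(\tset)$ has both sides multiple of $k$ (this is where padding is used), $p$ has a well-defined $k$-tessellation into blocks $r_{i,j}\in\kblock{L(\tset)}{k}$; mapping block $(i,j)$ to the letter $(f(r_{i,j}),\pi(r_{i,j}))\in B_k$ gives a picture $\beta$ over $B_k$; one checks that $\ktile{\widehat\beta}{2}\subseteq M$ because each $2$-tile of $\widehat\beta$ records an overlap of frames whose $\tset$-legality follows from $\ktile{\widehat p}{2}\subseteq\tset$; hence $\beta\in L(M)$ and $\pi_k(\beta)=\pi(p)$. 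For the converse: take $\beta\in L(M)$; reassembling the $k\times k$ images of its letters yields a picture $p'$ over $\Gamma$ of size multiple of $k$; the frame-matching conditions encoded in $M$ guarantee that $\ktile{\widehat{p'}}{2}\subseteq\tset$ — the tiles wholly inside one block are legal because that block came from some $r\in\kblock{L(\tset)}{k}$, the tiles crossing a seam are legal by the $M$-conditions on adjacent frames, and the border tiles likewise — so $p'\in L(\tset)$ and $\pi_k(\beta)=\pi(p')\in R^{(k)}$.

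The main obstacle is the bookkeeping at the block seams and at the outer border: one must verify that the information stored in the frames $f(r)$ — only the four peripheral words of each $k\times k$ block — is sufficient to decide $\tset$-legality of every $2$-tile that is not strictly interior to a single block, and that no spurious reassembly is possible. This requires a careful case analysis of where a $2$-tile can sit relative to the $k$-grid (interior, vertical seam, horizontal seam, crossing a corner of four blocks, and the analogous border cases). I would organize this as a short sequence of claims, exactly paralleling the four families of border tiles in the proof of Theorem~\ref{thm-size-kSLT-padded}, and then state that the routine verification is omitted for brevity, as is done there.
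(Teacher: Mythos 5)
Your construction is essentially the paper's: the alphabet $B_k=\left\{\langle f(r),\pi(r)\rangle \mid r\in\kblock{L(\tset)}{k}\right\}$, the morphism $\pi_k$ sending each letter to its second component, a set of $2$-tiles over $B_k\cup\{\#\}$ whose legality is decided by the frames alone, and the two inclusions (which the paper itself only sketches). There is, however, one concrete flaw in your definition of $M$: the clauses requiring that ``the east column of $r$ equals the west column of the right neighbour'', that the south row equal the north row of the block below, and that ``corner letters agree''. The $k$-tessellation decomposes the pre-image into pairwise \emph{disjoint} $k\times k$ blocks, so the east column of a block and the west column of its right neighbour are two distinct columns of $p\in L(\tset)$ and in general differ (in Example~\ref{ex:k-localLanguages}, for instance, the arrow symbols sit at different heights in adjacent blocks). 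Such equality constraints would be appropriate only for overlapping windows, not for a tessellation; with them, $M$ is strictly too small and the inclusion $R^{(k)}\subseteq\pi_k(L(M))$ fails. Indeed your own justification of that inclusion --- that legality of the $2$-tiles of $\widehat\beta$ follows from $\ktile{\widehat p}{2}\subseteq\tset$ --- can never yield those equalities, which signals the inconsistency.

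The repair is simply to drop the equality clauses and keep only the condition you state last, namely that every $2$-tile over $\Gamma\cup\{\#\}$ straddling a seam (or the outer border) lies in $\tset$; this is exactly the paper's definition, e.g.\ an internal tile $\mytile{x}{y}{z}{t}$ is admitted iff $\tile{(s_x\obar s_y)\ominus(n_z\obar n_t)}\subseteq\tset$ and $\tile{(e_x\ominus e_z)\obar(w_y\ominus w_t)}\subseteq\tset$, with border and corner tiles handled analogously through the frame words and the corner pixels. With that correction the rest of your argument is sound, provided you make explicit one point in the converse inclusion: a letter of $B_k$ records only a frame and a projection, so to build the pre-image $p'$ over $\Gamma$ you must choose, for each letter, some representative $r\in\kblock{L(\tset)}{k}$ having that frame and that projection (such $r$ exists by definition of $B_k$); the $2$-tiles interior to each representative are then in $\tset$ because $r$ is a block of a picture of $L(\tset)$, while the seam and border tiles are in $\tset$ by the conditions defining $M$, and $\pi(p')=\pi_k(\beta)$ pixelwise.
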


	\proof 
 Consider the set ${\kblock{L(\tset )}{k}}$ obtained by the $k$-tessellation of the elements of $L(\tset )$. 
We define a new alphabet $B_k$ that for any element $r \in {\kblock{L(\tset )}{k}} $ includes the pair $\langle f(r),\pi(r)\rangle$ as a symbol, i.e., 
$B_k = \left\{\langle f(r),\pi(r)\rangle \mid r \in \kblock{L(\tset )}{k}\right\}$. 
\par\noindent
It is convenient to denote by $Q_k$ the set of the frames of the pictures in ${\kblock{L(\tset )}{k}}$: 
\begin{equation}\label{eq-Qk}
Q_k = \left\{f(r) \mid r \in \kblock{L(\tset )}{k}\right\} \subseteq \Gamma^{4k}; \quad \text{ therefore } |Q_k|\le |\Gamma|^{4k}.
\end{equation}
		Let $\varphi$ be the mapping associating each element $x=\langle f(r),\pi(r)\rangle \in B_k$ with its first component, i.e., 
	$\varphi(x)=f(r)$ (which is a frame). In the sequel, when no confusion can arise, $\varphi(x)$ is simply denoted by 
	$(n_x,e_x,s_x,w_x)$. 
	\par\noindent
	Let $\pi_k:B_k^{++}\to\Sigma_\$^{++}$ be the morphism defined by associating 
	each element $x=\langle f(r),\pi(r)\rangle \in B_k$ with its second component: 
	$\pi_k(x)=\pi(r)$.

\begin{figure}
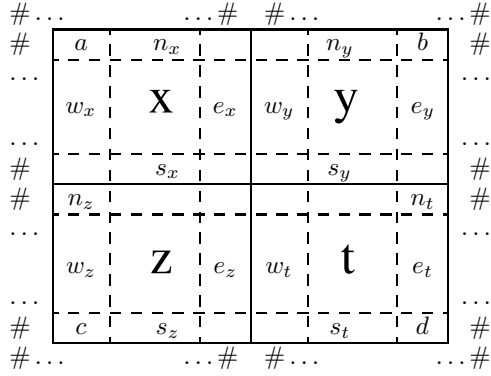

\begin{center}\scalebox{0.95}{$
		\begin{array}{l|c:cc:c|c:cc:c|r}
		\multicolumn{2}{l}{\#\dots }&& \multicolumn{2}{r}{\dots \#}&\multicolumn{3}{l}{\#\dots }& \multicolumn{2}{r}{\dots \#}\\ 
		\cline{2-9}
		\#&a&&n_x& & &n_y&&b&\# \\ \cdashline{2-9}
		\dots&&&&& & &&&\dots \\ 
		 & w_x&& \scalebox{2}{x\ } & e_x &w_y& \scalebox{2}{\  y} &&e_y& \\
\dots&&&&& & &&&\dots 
		\\ \cdashline{2-9}
		\#&& &s_x & &&s_y& &&\# 
		\\ \cline{2-9}
		\#&n_z&& & &&&&n_t&\# \\ \cdashline{2-9}
		\dots&&&&& & &&&\dots \\ 
		&w_z&& \scalebox{2}{z\ } & e_z &w_t&\, \scalebox{2}{\ t} &&e_t& \\
 \dots&&&&& & &&&\dots 
		\\ \cdashline{2-9}
		\#&c& &s_z& & &s_t&& d&\# \\\cline{2-9}
		\multicolumn{2}{l}{\#\dots }&& \multicolumn{2}{r}{\dots \#}&\multicolumn{3}{l}{\#\dots }& \multicolumn{2}{r}{\dots \#}\\
		\end{array}
		$
	}
\end{center}
\caption{A bordered picture of size $(2k,2k)$. It is composed by four $k$-tiles $x,y,z,t$ whose frames (e.g., $n_x,e_x, s_x, w_x$ for subpicture $x$) are evidenced by dashed lines. 
The symbols $a,b,c,d$ in the four corners of the picture are also evidenced.}\label{fig:four-k-tiles}
\end{figure}

\par\noindent
The definition of the set $M$ of 2-tiles, over the alphabet 
 $B_k \cup \{\#\}$, 
that we now introduce, translates the constraints on the adjacency in $L(\tset )$ of the elements of 
${\kblock{L(\tset )}{k}}$ in terms of the elements of $B_k$. 
The idea is that such an  adjacency is determined only by the tiles 
of $\tset $ that overlap the \emph{frames} (i.e. the periphery) of two adjacent $k$-tiles. 
\par\noindent
Now, we define the set $\tsetM_2\subseteq (B_k \cup \{\#\})^{2,2}$ of 2-tiles over the alphabet $B_k \cup \{\#\}$, distinguishing between 
{\em internal}, {\em border} and {\em corner} tiles.
\par\noindent
The bordered picture of size $(2k,2k)$, composed of four $k$-tiles $x, y, z, t$  
shown in Fig.~\ref{fig:four-k-tiles} may clarify the notation used below. 

\par\noindent
Let $x,y,z,t$ be in $B_k$.
\paragraph{Internal tiles}
 The 2-tile $\mytile{x}{y}{z}{t}\in \tsetM_2$ if the subpictures of size $(2,2)$ identified by the neighboring frames of the $k$-tiles $x,y,z,t$ are in $\tset $.
This is formalized by requiring that
\[ 
\begin{array}{l}
\tile{(s_x \obar s_y)\ominus (n_z\obar n_t)} \subseteq T \text{ and } 
\\ 
 \tile{(e_x \ominus e_z)\obar (w_y\ominus w_t)} \subseteq T.
\end{array}
\]
\paragraph{Border tiles}
\begin{description}[itemsep=2pt]
 \item $\mytile{\#}{\#}{x}{y}\in \tsetM_2$ iff $ \tile{\#^{2k\obar}\ominus(n_x \obar n_y)}\in \tset$; 
 \item $\mytile{\#}{x}{\#}{z}\in \tsetM_2$ iff ${\tile{\#^{2k\ominus}\obar (w_x \ominus w_z)}\in \tset}$;
 \item $\mytile{z}{t}{\#}{\#}\in \tsetM_2$ iff $\tile{(s_z \obar s_t)\ominus \#^{2k\obar}}\in \tset$;
 \item $\mytile{x}{\#}{z}{\#}\in \tsetM_2$ iff $\tile{(e_x \ominus e_z)\obar \#^{2k\ominus}}\in \tset$.

 \end{description}

\paragraph{Corner tiles} 
With referecnce to the  picture of Figure~\ref{fig:four-k-tiles}, let $a$ be the first symbol of $n_x$, $b$ the last symbol of $n_y$, $c$ the first symbol of $s_z$, and $d$ the last symbol of $s_t$.
\begin{center}{
\begin{description}[itemsep=2pt]
 \item $\mytile{\#}{\#}{\#}{x}\in \tsetM_2$ iff $\mytile{\#}{\#}{\#}{a}\in \tset$; \qquad $\mytile{\#}{\#}{y}{\#}\in \tsetM_2$ iff $\mytile{\#}{\#}{b}{\#}\in \tset$;
 \item $\mytile{\#}{z}{\#}{\#}\in \tsetM_2$ iff $\mytile{\#}{c}{\#}{\#}\in \tset$; \qquad $\mytile{t}{\#}{\#}{\#}\in \tsetM_2$ iff $\mytile{d}{\#}{\#}{\#}\in \tset$.
 \end{description}
}
\end{center}

\par\noindent
 Let $L(\tsetM_2)\subseteq B_k^{++}$ be the local language defined by the tile set $\tsetM_2$. From the above construction, one derives that
$
 R=\pi_k(L(\tsetM_2)).
$
\qed
\par
\medskip
Figure~\ref{fig-M} shows some examples of tiles in $\tsetM_2$, composed of four symbols in $B_3$ obtained from picture $p_0$ of Figure~\ref{fig-local-padding}. 
Each symbol in $B_3$ is represented as a picture of size $(3,3)$ where the only pixel not belonging to the frame is replaced by a white space. An example of picture in $L(\tsetM_2)$ is shown in Figure~\ref{fig-L-M}. 

\newcommand{\PreserveBackslash}[1]{\let\temp=\\#1\let\\=\temp}
\newcolumntype{C}[1]{>{\PreserveBackslash\centering}p{#1}}
\newcolumntype{R}[1]{>{\PreserveBackslash\raggedleft}p{#1}}
\newcolumntype{L}[1]{>{\PreserveBackslash\raggedright}p{#1}}

\begin{figure}
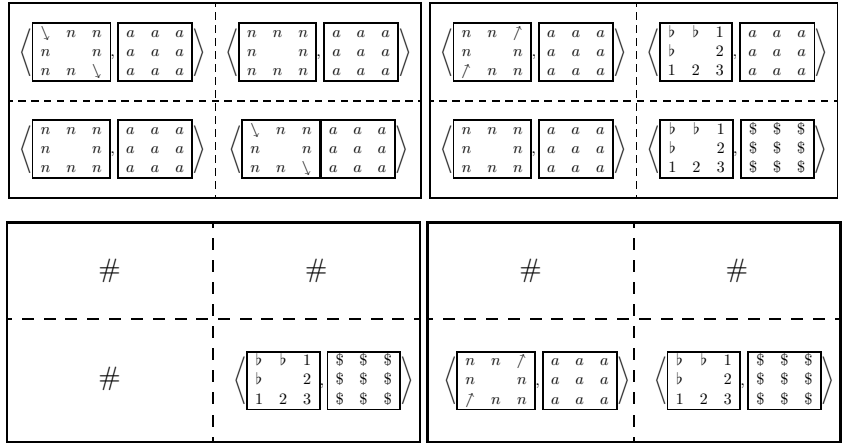
 

	\begin{center}\scalebox{0.6}{	\begin{tabular}{|c:c|} \hline &\\
			$\left \langle\begin{array}{|ccc|} \hline
				\ssearrow & n & n \\
				n & &n \\
				n & n & \ssearrow\\ \hline
			\end{array}, 
			\begin{array}{|ccc|} \hline
				a & a & a \\
				a & a&a \\
				a & a & a \\ \hline
			\end{array} 
			\right\rangle$ &
			$\left \langle\begin{array}{|ccc|} \hline
				n & n & n \\
				n & &n \\
				n & n & n\\ \hline
			\end{array},
			\begin{array}{|ccc|} \hline
				a & a & a \\
				a & a&a \\
				a & a & a \\ \hline
			\end{array} 
			\right\rangle$ \\ & \\ \hdashline & \\
			$\left \langle\begin{array}{|ccc|} \hline
				n & n & n \\
				n & &n \\
				n & n & n\\ \hline
			\end{array},
			\begin{array}{|ccc|} \hline
				a & a & a \\
				a & a&a \\
				a & a & a \\ \hline
			\end{array} 
			\right\rangle$ &
			$\left \langle\begin{array}{|ccc|} \hline
				\ssearrow & n & n \\
				n & &n \\
				n & n & \ssearrow\\ \hline
			\end{array} 
			\begin{array}{|ccc|} \hline
				a & a & a \\
				a & a&a \\
				a & a & a \\ \hline
			\end{array} 
			\right\rangle$ \\ & \\ \hline
					\end{tabular}	
			}\scalebox{0.6}{
		\begin{tabular}{|c:c|} \hline &\\
			$\left \langle\begin{array}{|ccc|} \hline
				n & n & \nnearrow \\
				n & &n \\
				\nnearrow & n & n\\ \hline
			\end{array}, 
			\begin{array}{|ccc|} \hline
				a & a & a \\
				a & a&a \\
				a & a & a \\ \hline
			\end{array} 
			\right\rangle$ &
			$\left \langle\begin{array}{| ccc|} \hline
				\flat&\flat & 1\\
				\flat & &2 \\
				1 & 2 & 3\\ \hline
			\end{array}, 
			\begin{array}{|ccc|} \hline
				a & a & a \\
				a & a&a \\
				a & a & a \\ \hline
			\end{array} 
			\right\rangle$ \\ & \\ \hdashline & \\
			$\left \langle\begin{array}{|ccc|} \hline
				n & n & n \\
				n & &n \\
				n & n & n\\ \hline
			\end{array},
			\begin{array}{|ccc|} \hline
				a & a & a \\
				a & a&a \\
				a & a & a \\ \hline
			\end{array} 
			\right\rangle$ &
			$	\left \langle\begin{array}{| ccc|} \hline
				\flat&\flat & 1\\
				\flat & &2 \\
				1 & 2 & 3\\ \hline
			\end{array}, 
			\begin{array}{|ccc|} \hline
				\$ & \$ & 
				\$ \\
				\$ & \$&\$\\
				\$ & \$ & \$ \\ \hline
			\end{array} 
			\right\rangle$ \\ & \\ \hline
		\end{tabular}	
	}
	\end{center}
\begin{center}
\begin{tabular}{|C{2.28cm}:C{2.28cm}|} \hline &\\
			 $\#$ & $\#$\\ & \\ \hdashline & \\
			 $\#$ &
		\scalebox{0.6}{	$\left \langle\begin{array}{| ccc|} \hline
				\flat&\flat & 1\\
				\flat & &2 \\
				1 & 2 & 3\\ \hline
			\end{array}, 
			\begin{array}{|ccc|} \hline
				\$ & \$ & 
				\$ \\
				\$ & \$&\$\\
				\$ & \$ & \$ \\ \hline
			\end{array} 
			\right\rangle$ }\\ & \\ \hline 
		\end{tabular}~\begin{tabular}{|C{2.28cm}:C{2.28cm}|} \hline &\\
			 $\#$ & $\#$\\ & \\ \hdashline & \\
			\scalebox{0.6}{$\left \langle\begin{array}{|ccc|} \hline
				n & n & \nnearrow \\
				n & &n \\
				\nnearrow & n & n\\ \hline
			\end{array}, 
			\begin{array}{|ccc|} \hline
				a & a & a \\
				a & a&a \\
				a & a & a \\ \hline
			\end{array} 
			\right\rangle$} &
		\scalebox{0.6}{	$\left \langle\begin{array}{| ccc|} \hline
				\flat&\flat & 1\\
				\flat & &2 \\
				1 & 2 & 3\\ \hline
			\end{array}, 
			\begin{array}{|ccc|} \hline
				\$ & \$ & 
				\$ \\
				\$ & \$&\$\\
				\$ & \$ & \$ \\ \hline
			\end{array} 
			\right\rangle$ }\\ & \\ \hline 
		\end{tabular}
\end{center}

	\caption{Two internal tiles (top), and one corner tile and one border tile(bottom) in $\tsetM_2$, each one composed of symbols in $B_k$, obtained from the tesselation of picture $p_0$ of Figure~\ref{fig-local-padding} according to Lemma~\ref{lemma-main}.}\label{fig-M}
\end{figure}

 \begin{figure}
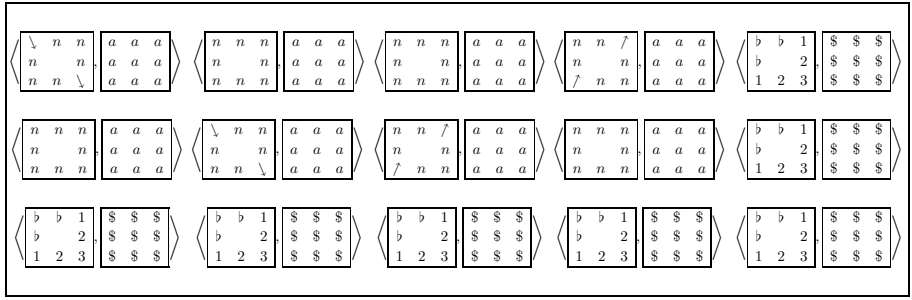

\setlength{\tabcolsep}{1pt} 
\renewcommand{\arraystretch}{1.1}
\begin{center} 
\scalebox{0.80}{
	 	\begin{tabular}{|ccccc|} \hline &&&&\\
		\scalebox{0.7}{$\left \langle\begin{array}{|ccc|} \hline
				\ssearrow & n & n \\
				n & &n \\
				n & n & \ssearrow\\ \hline
			\end{array}, 
		\begin{array}{|ccc|} \hline
		a & a & a \\
		a & a&a \\
		a & a & a \\ \hline 
		\end{array} 
		\right\rangle$ }&
		\scalebox{0.7}{$\left \langle\begin{array}{|ccc|} \hline
			n & n & n \\
			n & &n \\
			n & n & n\\ \hline
		\end{array},
		\begin{array}{|ccc|} \hline
			a & a & a \\
			a & a&a \\
			a & a & a \\ \hline
		\end{array} 
		\right\rangle$} 
		& \scalebox{0.7}{$\left \langle\begin{array}{|ccc|} \hline
				n & n & n \\
				n & &n \\
				n & n & n\\ \hline
			\end{array}, 
			\begin{array}{|ccc|} \hline
				a & a & a \\
				a & a&a \\
				a & a & a \\ \hline
			\end{array} 
			\right\rangle$}
 & \scalebox{0.7}{$\left \langle\begin{array}{|ccc|} \hline
				n & n & \nnearrow \\
				n & &n \\
				\nnearrow & n & n\\ \hline
			\end{array}, 
			\begin{array}{|ccc|} \hline
				a & a & a \\
				a & a&a \\
				a & a & a \\ \hline
			\end{array} 
			\right\rangle$}
 & \scalebox{0.7}{
				$\left \langle\begin{array}{| ccc|} \hline
				\flat&\flat & 1\\
				\flat & &2 \\
				1 & 2 & 3\\ \hline
			\end{array}, 
			\begin{array}{|ccc|} \hline
				\$ & \$ & 
				\$ \\
				\$ & \$&\$\\
				\$ & \$ & \$ \\ \hline
			\end{array} 
			\right\rangle$ }
		
		\\ &&&&\\
		\scalebox{0.7}{$\left \langle\begin{array}{|ccc|} \hline
			n & n & n \\
			n & &n \\
			n & n & n\\ \hline
		\end{array},
		\begin{array}{|ccc|} \hline
			a & a & a \\
			a & a&a \\
			a & a & a \\ \hline
		\end{array} 
		\right\rangle$} &
		\scalebox{0.7}{$\left \langle\begin{array}{|ccc|} \hline
			\ssearrow & n & n \\
			n & &n \\
			n & n & \ssearrow\\ \hline
		\end{array}, 
		\begin{array}{|ccc|} \hline
			a & a & a \\
			a & a&a \\
			a & a & a \\ \hline
		\end{array} 
		\right\rangle$ 
		} 
		& \scalebox{0.7}{$\left \langle\begin{array}{|ccc|} \hline
				n & n & \nnearrow \\
				n & &n \\
				\nnearrow & n & n\\ \hline
			\end{array}, 
			\begin{array}{|ccc|} \hline
				a & a & a \\
				a & a&a \\
				a & a & a \\ \hline
			\end{array} 
			\right\rangle$}
			& \scalebox{0.7}{$\left \langle\begin{array}{|ccc|} \hline
				n & n & n \\
				n & &n \\
				n & n & n\\ \hline
			\end{array}, 
			\begin{array}{|ccc|} \hline
				a & a & a \\
				a & a&a \\
				a & a & a \\ \hline
			\end{array} 
			\right\rangle$}
 & \scalebox{0.7}{
				$\left \langle\begin{array}{| ccc|} \hline
				\flat&\flat & 1\\
				\flat & &2 \\
				1 & 2 & 3\\ \hline
			\end{array}, 
			\begin{array}{|ccc|} \hline
				\$ & \$ & 
				\$ \\
				\$ & \$&\$\\
				\$ & \$ & \$ \\ \hline
			\end{array} 
			\right\rangle$ }
		\\ &&&&\\
		\scalebox{0.7}{
				$\left \langle\begin{array}{| ccc|} \hline
				\flat&\flat & 1\\
				\flat & &2 \\
				1 & 2 & 3\\ \hline
			\end{array}, 
			\begin{array}{|ccc|} \hline
				\$ & \$ & 
				\$ \\
				\$ & \$&\$\\
				\$ & \$ & \$ \\ \hline
			\end{array} 
			\right\rangle$ }
				& \scalebox{0.7}{
				$\left \langle\begin{array}{| ccc|} \hline
				\flat&\flat & 1\\
				\flat & &2 \\
				1 & 2 & 3\\ \hline
			\end{array}, 
			\begin{array}{|ccc|} \hline
				\$ & \$ & 
				\$ \\
				\$ & \$&\$\\
				\$ & \$ & \$ \\ \hline
			\end{array} 
			\right\rangle$ 
			
			} & \scalebox{0.7}{
				$\left \langle\begin{array}{| ccc|} \hline
				\flat&\flat & 1\\
				\flat & &2 \\
				1 & 2 & 3\\ \hline
			\end{array}, 
			\begin{array}{|ccc|} \hline
				\$ & \$ & 
				\$ \\
				\$ & \$&\$\\
				\$ & \$ & \$ \\ \hline
			\end{array} 
			\right\rangle$ }
			& \scalebox{0.7}{
				$\left \langle\begin{array}{| ccc|} \hline
				\flat&\flat & 1\\
				\flat & &2 \\
				1 & 2 & 3\\ \hline
			\end{array}, 
			\begin{array}{|ccc|} \hline
				\$ & \$ & 
				\$ \\
				\$ & \$&\$\\
				\$ & \$ & \$ \\ \hline
			\end{array} 
			\right\rangle$ }
			& \scalebox{0.7}{
				$\left \langle\begin{array}{| ccc|} \hline
				\flat&\flat & 1\\
				\flat & &2 \\
				1 & 2 & 3\\ \hline
			\end{array}, 
			\begin{array}{|ccc|} \hline
				\$ & \$ & 
				\$ \\
				\$ & \$&\$\\
				\$ & \$ & \$ \\ \hline
			\end{array} 
			\right\rangle$ }
			\\&&&& \\ \hline 
		\end{tabular}

		}
		
		 \caption{A picture $q_0$ in the local language $L(\tsetM_2)$ as defined in Lemma~\ref{lemma-main}, corresponding to picture $p_0$ of Figure~\ref{fig-local-padding}. The picture has size $(3,5)$, while $p_0$ has size $(9,15)$.}
\label{fig-L-M}
\end{center}

\end{figure}

\paragraph{Comma-free picture codes with two components}
For every $k\ge 2$, we define a comma-free picture code $Z\subseteq (\{0,1\}\times \Sigma_\$)^{k,k}$ to be the composition of a comma-free binary picture code $X\subseteq \{0,1\}^{k,k}$ with a set $W$ of pictures
	in $\Sigma_\$^{k,k}$: 
	\begin{equation}\label{eq:composition}
		Z= X \otimes W.
	\end{equation}
	where the operator $\otimes$ merges two isometric pictures into one, symbol by symbol.
	\medskip\par\noindent
	For instance, if $u = \mytile{1}{0}{0}{0}$ and $y= \mytile{a}{b}{b}{a}$, then $u \otimes y = 
	\mytileDynamic{\langle 1,a\rangle}{\langle 0,b\rangle}{\langle 0,b\rangle}{\langle 0,a\rangle}$.
	
	\par\noindent
	The operator can be immediately extended to a pair of sets of isometric pictures. 
	\par\noindent

\begin{figure}
\begin{center}
\scalebox{0.7}{$
	\begin{array}{cccc}
	{B_3} & \pi_3(B_3) & X_3 & Z_3 \\
\left \langle\begin{array}{|ccc|} \hline
	\ssearrow & n & n \\
	n & &n \\
	n & n & \ssearrow\\ \hline
\end{array}, 
\begin{array}{|ccc|} \hline
	a & a & a \\
	a & a&a \\
	a & a & a \\ \hline
\end{array} 
\right\rangle
& 	 	\begin{array}{| ccc|} \hline
			a & a & a\\
			a & a &a \\
			a & a & a \\ \hline
		\end{array} &
	
		\begin{array}{|ccc|} \hline
			0 & 0 & 0 
			\\
			1 & 1&0 
			\\
			1 & 1 &0 
			\\ \hline
		\end{array}&
		\begin{array}{|ccc|} \hline
			(0,a) & (0,a) & (0,a) 
			\\
			(1,a) & (1,a) &(0,a) 
			\\
			(1,a) & (1,a) &(0,a) 
			\\ \hline
		\end{array} 
		\\ \\
		\left \langle\begin{array}{|ccc|} \hline
			n & n & n \\
			n & &n \\
			n & n & n\\ \hline
		\end{array}, 
		\begin{array}{|ccc|} \hline
			a & a & a \\
			a & a&a \\
			a & a & a \\ \hline
		\end{array} 
		\right\rangle& 
		\begin{array}{| ccc|} \hline
		a & a & a\\
		a & a &a \\
		a & a & a \\ \hline
	\end{array}& 
		\begin{array}{|ccc|} \hline
			0 & 0 & 1 
			\\
			1 & 1&0 
			\\
			1 & 1 &0 
			\\ \hline
		\end{array}&
		\begin{array}{|ccc|} \hline
			(0,a) & (0,a) & (1,a) 
			\\
			(1,a) & (1,a) &(0,a) 
			\\
			(1,a) & (1,a) &(0,a) 
			\\ \hline
		\end{array} 
		\\ \\
			\left \langle\begin{array}{|ccc|} \hline
			n & n & \nnearrow \\
			n & &n \\
			\nnearrow & n & n\\ \hline
		\end{array}, 
		\begin{array}{|ccc|} \hline
			a & a & a \\
			a & a&a \\
			a & a & a \\ \hline
		\end{array} 
		\right\rangle& 
		\begin{array}{| ccc|} \hline
			a & a & a\\
			a & a &a \\
			a & a & a \\ \hline
		\end{array} &	
		\begin{array}{|ccc|} \hline
			0 & 1 & 0 
			\\
			1 & 1&0 
			\\
			1 & 1 &0 
			\\ \hline
		\end{array}&
		\begin{array}{|ccc|} \hline
			(0,a) & (1,a) & (0,a) 
			\\
			(1,a) & (1,a) &(0,a) 
			\\
			(1,a) & (1,a) &(0,a) 
			\\ \hline
		\end{array} 
		\\ \\ 
			\left \langle\begin{array}{| ccc|} \hline
				\flat&\flat & 1\\
				\flat & &2 \\
				1 & 2 & 3\\ \hline
			\end{array}, 
		\begin{array}{| ccc|} \hline
			\$ & \$ & \$\\
			\$ & \$ &\$ \\
			\$ & \$ & \$ \\ \hline
		\end{array} 
		\right\rangle& 
		\begin{array}{| ccc|} \hline
		\$ & \$ & \$\\
		\$ & \$ &\$ \\
		\$ & \$ & \$ \\ \hline
	\end{array}& 
		\begin{array}{|ccc|} \hline
			0 & 1 & 1 
			\\
			1 & 1&0 
			\\
			1 & 1 &0 
			\\ \hline
		\end{array}&
		\begin{array}{|ccc|} \hline 
			(0,\$) & (1,\$) & (1,\$)
			\\
			(1,\$) & (1,\$)&(0,\$) 
			\\
			(1,\$)& (1,\$) &(0,\$) 
			\\ \hline
		\end{array} 
	\end{array}
	$
}\end{center}
\caption{Definition of $Z_3\subseteq X_3 \otimes \Sigma_\$^{3,3}$ 
	by applying the comma-free picture code $X_3$ 	of Example~\ref{ex-commafree-family} to the image under the morphism $\pi_3$ 
	of the four elements of $B_3$ obtained from the 3-tiles of Figure~\ref{fig-3tiles}.
}\label{fig-Z3}
\end{figure}

A key point in the proof of next lemma is that, 
as a consequence of Proposition~\ref{prop-commafree-family},
 for a sufficiently large integer $k$, there exists a comma-free code 
 $X \subset \{0,1\}^{(k,k)}$ having cardinality greater than that of 
the set of ``frames'' of elements of $\kblock{L(\tset )}{k}$. This allows to 
encode the elements of the alphabet $B_k$ , defined in Lemma 1, 
by a comma-free code { $Z \subset \left(\{0,1\}\times \Sigma \right)^{(k,k)}$}, and then,
 according to Theorem 2, to express the language $R^{(k)}$ as the
 projection of a $2k$-SLT language.

 \begin{lemma}\label{lemma-main2}
		For any $R\subseteq \Sigma^{++}$ in \rec, 
		there exist 
		$k\ge 2$, a $2k$-SLT language $L$ over an alphabet $\Lambda$ 
		and a projection $\rho:\Lambda\to\Sigma$, with $|\Lambda|=2+2\cdot |\Sigma|$, such that the padded language $R^{(k)}$ can be expressed as $R^{(k)}=\rho(L)$. 
		Moreover, if $n$ is the size of the local alphabet of a tiling system recognizing $R$, then the value $k$ is $\mathcal{O}(\lg n $).
		\end{lemma}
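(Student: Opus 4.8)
The plan is to chain together, in the order announced before the statement, the padding theorem (Theorem~\ref{thm-size-kSLT-padded}), the frame construction (Lemma~\ref{lemma-main}), the numerosity bounds (Propositions~\ref{prop-commafree-family} and~\ref{prop-commafreeEnough}), and the testability-of-encodings theorem (Theorem~\ref{th-SLTCode}). Fix a TS $(\Sigma,\Gamma_0,\tset_0,\pi_0)$ recognizing $R$ with $|\Gamma_0|=n$, and let $k\ge 2$ be a parameter to be pinned down later. By Theorem~\ref{thm-size-kSLT-padded} the padded language $R^{(k)}\subseteq\Sigma_{\$}^{++}$ is recognized by a TS $(\Sigma_{\$},\Gamma,\tset,\pi)$ with $|\Gamma|=n+k+1$, and Lemma~\ref{lemma-main} then provides a finite alphabet $B_k=\{\langle f(r),\pi(r)\rangle\mid r\in\kblock{L(\tset)}{k}\}$, a set $M$ of $2$-tiles over $B_k\cup\{\#\}$, and a morphism $\pi_k\colon B_k^{++}\to\Sigma_{\$}^{++}$ with $R^{(k)}=\pi_k(L(M))$. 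Writing $Q_k=\{f(r)\mid r\in\kblock{L(\tset)}{k}\}\subseteq\Gamma^{4k}$ for the set of frames occurring in the $k$-tessellation, Equation~\eqref{eq-Qk} gives the crude bound $|Q_k|\le|\Gamma|^{4k}=(n+k+1)^{4k}$.

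The second step fixes $k$ so that a binary comma-free picture code can hold all of $Q_k$ while staying small. Apply Proposition~\ref{prop-commafreeEnough} with, say, $m=n^{5}$ (for $n\ge 2$; $n=1$ is trivial): it yields a (prime) $k\in\mathcal{O}(\lg m)=\mathcal{O}(\lg n)$ and a comma-free picture code $X\subseteq\{0,1\}^{k,k}$ with $|X|\ge m^{k}=n^{5k}$. Since any $k\in\mathcal{O}(\lg n)$ is eventually dominated by $n^{1/4}$, one has $n^{5}\ge(n+k+1)^{4}$, hence $|X|\ge n^{5k}\ge(n+k+1)^{4k}\ge|Q_k|$ for all but finitely many $n$, the remaining small values being settled by taking a sufficiently large constant $k$. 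Fix once and for all an injection $g\colon Q_k\hookrightarrow X$.

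Third, I assemble the two-layer comma-free code and invoke Theorem~\ref{th-SLTCode}. Define a morphism $\code{Z}{\,\cdot\,}\colon B_k^{++}\to(\{0,1\}\times\Sigma_{\$})^{++}$ on the generators of $B_k^{++}$ by $\code{Z}{\langle f(r),\pi(r)\rangle}=g(f(r))\otimes\pi(r)$. This is one-to-one on $B_k$: two distinct elements of $B_k$ differ either in their frame --- and then, $g$ being injective, in the binary layer of the codes --- or in their $\Sigma_{\$}$-picture, and then in the second layer; by the remark following Definition~\ref{def:pictureMorphism} the induced morphism on $B_k^{++}$ is itself one-to-one. Set $Z=\code{Z}{B_k}\subseteq X\otimes\Sigma_{\$}^{\,k,k}$, so that $|Z|=|B_k|$; moreover $Z$ is comma-free, because projecting any $p\in Z^{2,2}$ onto its binary layer produces a picture in $X^{2,2}$, and an internal subpicture of $p$ lying in $Z$ would project to an internal subpicture in $X$ of that picture, contradicting the comma-freeness of $X$ (Definition~\ref{def-commafree}). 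Since $L(M)$ is a local language over $B_k$ and $|Z|=|B_k|$, Theorem~\ref{th-SLTCode} yields that $L:=\code{Z}{L(M)}$ is a $2k$-SLT language; it is a language over $\Lambda:=\{0,1\}\times\Sigma_{\$}$, with $|\Lambda|=2\,|\Sigma_{\$}|=2+2\,|\Sigma|$.

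Finally I check $R^{(k)}=\rho(L)$ for the letter-to-letter projection $\rho\colon\Lambda\to\Sigma_{\$}$ defined by $\rho(\langle b,a\rangle)=a$. Applied to a code-picture $\code{Z}{\langle f(r),\pi(r)\rangle}=g(f(r))\otimes\pi(r)$, the projection $\rho$ deletes the binary layer and returns exactly $\pi(r)=\pi_k(\langle f(r),\pi(r)\rangle)$; thus $\rho\circ\code{Z}{\,\cdot\,}$ and $\pi_k$ are picture morphisms agreeing on every generator of $B_k^{++}$, hence they coincide, and $\rho(L)=\rho(\code{Z}{L(M)})=\pi_k(L(M))=R^{(k)}$. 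The bound $k\in\mathcal{O}(\lg n)$ was secured in the second step. I expect the genuine difficulty to be concentrated precisely there: one must certify that the lower bound of Proposition~\ref{prop-commafree-family} still outgrows the frame count $(n+k+1)^{4k}$, which is itself exponential in $k$, while $k$ stays merely logarithmic in $n$. The margin is thin --- both $\lg|X|$ and $\lg|Q_k|$ are of order $(\lg n)^2$ --- so the constant multiplying $\lg n$ in the admissible range for $k$ must be chosen with some care, and the degenerate small-$n$ instances dispatched separately.
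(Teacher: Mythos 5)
Your proof is correct and follows the paper's own route: the same chain Theorem~\ref{thm-size-kSLT-padded} $\to$ Lemma~\ref{lemma-main} $\to$ numerosity of comma-free codes $\to$ Theorem~\ref{th-SLTCode} $\to$ projection $\rho$, with $\Lambda=\{0,1\}\times\Sigma_\$$. The only genuine divergence is in how you defuse the circularity that $|\Gamma|=n+k+1$ itself depends on $k$ when you need $|X|\ge|Q_k|$. The paper splits $Q_k$ into $\widetilde Q_k=Q_k\cap(\Gamma-\Delta)^{4k}$, whose count $n^{4k}$ is independent of $k$, and the padding-containing frames, which it argues need fewer than $\lg|\Gamma|^{4k}$ bits because the $\Delta$-portion is determined by two indices $i,j$, and which never clash with $\widetilde Q_k$ since their $\Sigma_\$$-projections differ; this lets it apply Proposition~\ref{prop-commafreeEnough} with $m=|\Gamma'|^4$ exactly. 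You instead over-provision with $m=n^5$ and observe that $k\in\mathcal{O}(\lg n)$ makes $(n+k+1)^{4k}\le n^{5k}$ for all large $n$ --- a cruder but simpler argument that buys you a globally injective $g:Q_k\hookrightarrow X$ and so avoids the paper's case analysis entirely. Two loose ends to tighten: (i) your ``finitely many small $n$'' fallback (including $n=1$, which is not literally trivial since the padded alphabet still has $k+2$ letters) cannot be covered by Proposition~\ref{prop-commafreeEnough} as stated, because there $k$ is tied to $\mathcal{O}(\lg m)$; instead invoke Proposition~\ref{prop-commafree-family} directly for a fixed large prime $k$, noting that $2^{k^2-k}/(k+1)^{2\sqrt k}$ eventually exceeds $(n+k+1)^{4k}$ for fixed $n$; (ii) it is worth keeping, as you did, the explicit verification that $Z=\code{Z}{B_k}$ is comma-free via projection onto the binary layer and that $\rho\circ\code{Z}{\cdot}=\pi_k$ on generators --- both facts are used but left implicit in the paper, so spelling them out is a small improvement rather than a deviation.
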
 
		\proof
\par\noindent
Define as in the proof of Lemma~\ref{lemma-main}  the finite alphabet $B_k\subseteq \Gamma^{4k}\times \Sigma_\$^{k,k}$, the local set $\tsetM_2$ of 2-tiles over $B_k \cup \{\#\}$ and the morphism $\pi_k:B_k^{++}\to\Sigma_\$^{++}$.
Hence, $R^{(k)}=\pi_k(L(\tsetM_2))$.
Consider again the set of frames $Q_k$  defined in Eq.~\eqref{eq-Qk} in the proof of Lemma~\ref{lemma-main}.

\par\noindent
 We  need to show that, for a sufficiently large integer $k$,
 there exist a comma-free picture code 
 $X\subseteq \{0,1\}^{k,k}$ such that we can define a code $Z$ in $X \otimes \pi_k(B_k)$, with $|Z|\ge |\pi_k(B_k)|$, i.e., we can associate a distinct picture code in $Z$ with each $k$-tile in $\pi_k(B_k)$. 
 \par\noindent
 Since $|Z|>|X|$, it is enough to show that $|X| \ge |Q_k|$, i.e., different code-pictures in $X$ can be assigned to $k$-tiles having different frames. 
 In fact, by definition of $B_k$, if two different $k$-tiles $\beta,\beta'\in B_k$ have the same frame, then it must be $\pi_k(\beta) \neq \pi_k(\beta')$, 
 hence by assigning the same picture code $x$ to both $\beta,\beta'$, we obtain $x\otimes \pi_k(\beta) \, \neq \, x\otimes \pi_k(\beta')$. 
 \par\noindent
Let the tiling system $(\Sigma_\$, \Gamma, \tset,  \pi)$ recognizing $R^{(k)}$ be defined as in the proof of Theorem~\ref{thm-size-kSLT-padded}. 
Thus, $\Gamma=\Gamma'\cup \Delta$, where $\Delta=\{\flat, 1, \dots, k\}$ is an alphabet disjoint from $\Gamma'$.
\par\noindent
We first show that there exists $X$ with enough code-pictures to encode all elements of the restriction $\widetilde Q_k=Q_k\cap (\Gamma -\Delta)^{4k}$, 
i.e., the set of frames that have no padding symbol; therefore, $Z$ can encode all elements of $B_k$ whose frames are in $\widetilde Q_k$.

\par\noindent
The cardinality of $\widetilde Q_k$ is at most $|\Gamma|^{4k}$. By posing $m= |\Gamma|^4$, we have $|\widetilde Q_k|\le m^k$. 
By Proposition~\ref{prop-commafreeEnough}, it follows that there exist $k \in \mathcal{O}(\lg m)=\mathcal{O}(\lg |\Gamma|)$ and 
a comma-free picture code $X \subseteq \{0,1\}^{k,k}$, such that $|X|\ge m^k\ge |\widetilde Q_k|$. 
\par\noindent
If we define $Z = X \otimes \Sigma^{k,k}$, then $|Z|\ge |X| \ge |\widetilde Q_k|$.
\medskip
\par\noindent
It remains to show that the relation $|Z|\ge |Q_k|$ also holds when extending $Z$ to $\Sigma_\$$, namely for $Z\subseteq X \otimes \Sigma_\$^{k,k}$. 
We claim that $X$ can be used to encode the remaining frames in $Q_k - \widetilde Q_k$. 
Given a frame $\beta$ in $Q_k - \widetilde Q_k$, it is enough to consider the number $i$ of rows such that $e_\beta \in \Gamma^i \Delta^{k-i}$, the number $j$ such that $w_\beta \in \Delta^{k-j}\Gamma^j$, 
and the portion of the frame of $\beta$ over $\Gamma$. 
Therefore, we need $2 \lg k$ bits for the two numbers $i,j$ and $\lg |\Gamma^i\Gamma^k\Gamma^j|\le \lg|\Gamma|^{3k}$ bits for the portion of the frame over $\Gamma$. 
\par\noindent
Hence, the total number of bits needed is $2\lg k+\lg|\Gamma^{3k}|<\lg |\Gamma^{4k}|$ ($\Gamma$ is not unary). 
The claim follows since a $k$-tile $\beta \in Q_k - \widetilde Q_k$ and a $k$-tile $\beta'\in \widetilde Q_k$ are always such that $\pi_k(\beta)\neq\pi_k(\beta')$.

\medskip
\par\noindent
To encode the elements of $Q_k$, 
we consider the morphism $\codeX{\; }: Q_k \to X$
that associates each element $\gamma\in Q_k$ with a different code-picture $\codeX{\gamma}$.
\par\noindent
The comma-free picture code $Z$ in Eq.~\eqref{eq:composition} is then defined by morphism $\code{Z}{\;}: B_k^{++}\to (\{0,1\}\times \Sigma_\$)^{++}$ associating 
each $x = (\gamma,p)$ with the code-picture $\code{Z}{x}= \codeX{\gamma}\otimes p$. An example is in Figure~\ref{fig-Z3}, which shows the definition of a comma-free picture code $Z_3$ by application of a comma-free picture code $X_3\subseteq \{0,1\}^{3,3}$ to the image under $\pi_3$ of elements of the alphabet $B_3$, obtained from the $3$-tiles of Figure~\ref{fig-3tiles}.
\par\noindent
We set the alphabet $\Lambda$ of the statement to $\{0,1\}\times \Sigma_\$.$ By Theorem~\ref{th-SLTCode}, the language
$\code{Z}{L(\tsetM_2)}\subseteq \Lambda^{++}$ is $2k$-SLT. 
{ Let $\tsetM_{2k}$ denote the set of $2k$-tiles over the alphabet $\Lambda\cup\{\#\}$ defining the language $\code{Z}{L(\tsetM_2)}$. 
An example of a picture in $L(\tsetM_{2k})$ is given in Figure~\ref{fig-6-tiles}.
\medskip
\par\noindent
Let $\rho: \Lambda\to\Sigma_\$$ be the projection of each element $(b,a)\in \{0,1\}\times \Sigma_\$$ to its second component $a$. 
We prove that 
 $\rho\left(\code{Z}{L(\tsetM_2)}\right)=R^{(k)}.$
\par\noindent
 If $p\in R^{(k)}$, with $p$ of size $(kr,kc)$ with $r,c\ge 1$, then there exists a picture $q \in L(\tset )$ such that $\pi(q)=p$. Let $\widetilde q = \code{Z}{q}$. By definition of $Z$, 
 $\rho(\widetilde q) = \pi(q)=p$, hence $p=\rho(\code{Z}{q}) \in \rho\left(\code{Z}{L(\tsetM_2)}\right)$.
 \par\noindent If $p\in \rho\left(\code{Z}{L(\tsetM_2)\right)}$, with $p$ of size $(kr,kc)$ for some $r,c\ge 1$, then 
 there is $q\in L(\tsetM_2)$ such that $p = \rho(\code{Z}{q})$. By definition of $\rho$ and $\pi$, we have $\pi(q)=p$, i.e., 
 $p \in R^{(k)}$.
 \qed

\begin{figure}
\setlength{\tabcolsep}{1pt} 
\renewcommand{\arraystretch}{1.1}
\begin{center} 
\scalebox{0.7}{
	 	\begin{tabular}{|c:c:c:c:c|}\hline 
		$\begin{array}{ccc} 
			(0,a) & (0,a) & (0,a) 
			\\
			(1,a) & (1,a) &(0,a) 
			\\
			(1,a) & (1,a) &(0,a) \\
		\end{array}$ &
		$\begin{array}{ccc} 
			(0,a) & (0,a) & (1,a) 
			\\
			(1,a) & (1,a) &(0,a) 
			\\
			(1,a) & (1,a) &(0,a) 
			\\ 
		\end{array}$ 
		& $\begin{array}{ccc} 
			(0,a) & (0,a) & (1,a) 
			\\
			(1,a) & (1,a) &(0,a) 
			\\
			(1,a) & (1,a) &(0,a) 
			\\ 
		\end{array}$
 & $\begin{array}{ccc} 
			(0,a) & (1,a) & (0,a) 
			\\
			(1,a) & (1,a) &(0,a) 
			\\
			(1,a) & (1,a) &(0,a) \\
		\end{array}$
 & $\begin{array}{ccc} 
			(0,\$) & (1,\$) & (1,\$)
			\\
			(1,\$) & (1,\$)&(0,\$) 
			\\
			(1,\$)& (1,\$) &(0,\$) 
			\\ 
		\end{array} $
		
		\\ \hdashline 
		$\begin{array}{ccc} 
			(0,a) & (0,a) & (1,a) 
			\\
			(1,a) & (1,a) &(0,a) 
			\\
			(1,a) & (1,a) &(0,a) 
			\\ 
		\end{array}$ &
		$\begin{array}{ccc} 
			(0,a) & (0,a) & (0,a) 
			\\
			(1,a) & (1,a) &(0,a) 
			\\
			(1,a) & (1,a) &(0,a) \\
		\end{array}$
		& $\begin{array}{ccc} 
			(0,a) & (1,a) & (0,a) 
			\\
			(1,a) & (1,a) &(0,a) 
			\\
			(1,a) & (1,a) &(0,a) 
			\\ 
		\end{array}$
			& $\begin{array}{ccc} 
			(0,a) & (0,a) & (1,a) 
			\\
			(1,a) & (1,a) &(0,a) 
			\\
			(1,a) & (1,a) &(0,a) 
			\\ 
		\end{array}$
 & $\begin{array}{ccc} 
			(0,\$) & (1,\$) & (1,\$)
			\\
			(1,\$) & (1,\$)&(0,\$) 
			\\
			(1,\$)& (1,\$) &(0,\$) 
			\\ 
		\end{array} $
		\\ \hdashline $\begin{array}{ccc} 
			(0,\$) & (1,\$) & (1,\$)
			\\
			(1,\$) & (1,\$)&(0,\$) 
			\\
			(1,\$)& (1,\$) &(0,\$) 
			\\ 
		\end{array} $
				& $\begin{array}{ccc} 
			(0,\$) & (1,\$) & (1,\$)
			\\
			(1,\$) & (1,\$)&(0,\$) 
			\\
			(1,\$)& (1,\$) &(0,\$) 
		\end{array} $
		& $\begin{array}{ccc} 
			(0,\$) & (1,\$) & (1,\$)
			\\
			(1,\$) & (1,\$)&(0,\$) 
			\\
			(1,\$)& (1,\$) &(0,\$) 
			\\ 
		\end{array} $
			& $\begin{array}{ccc} 
			(0,\$) & (1,\$) & (1,\$)
			\\
			(1,\$) & (1,\$)&(0,\$) 
			\\
			(1,\$)& (1,\$) &(0,\$) 
			\\ 
		\end{array} $
			& $\begin{array}{ccc} 
			(0,\$) & (1,\$) & (1,\$)
			\\
			(1,\$) & (1,\$)&(0,\$) 
			\\
			(1,\$)& (1,\$) &(0,\$) 
			\\ 
		\end{array} $
			\\ \hline
		\end{tabular}	
		}
		
		 \caption{The picture $\code{Z}{q_0}$ in the language $L(M_6)$, where $q_0$ is the picture of Figure~\ref{fig-L-M}. The picture has size $(9,15)$ and its projection to $\Sigma$ is picture $p_0$ of Figure~\ref{fig-local-padding}, i.e., $\rho(\code{Z}{q_0})=p_0$.}
\label{fig-6-tiles}
\end{center}

\end{figure}

\medskip
\par
Next, we prove the final result.
\begin{theorem}\label{thm-main}
	For any $R\subseteq \Sigma^{++}$ in \rec, 
	there exist 
	$k\ge 2$ and a $2k$-tiling system with alphabetic ratio 2, recognizing $R$. 
Moreover, if $n$ is the size of the local alphabet of a tiling system recognizing $R$, then the value $k$ is $\mathcal{O}(\lg n$).
\end{theorem}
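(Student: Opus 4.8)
\emph{Proof plan.} The statement is obtained from Lemma~\ref{lemma-main2} by eliminating the padding. Let $k$ be the value given by that lemma, so $k\ge 2$ and $k = \mathcal{O}(\lg n)$; it produces a $2k$-SLT language $L = L(\tsetM_{2k})$ over $\Lambda = \{0,1\}\times\Sigma_\$$ and the projection $\rho:\Lambda\to\Sigma_\$$, $(b,a)\mapsto a$, with $R^{(k)} = \rho(L)$. Thus $(\Sigma_\$,\Lambda,\tsetM_{2k},\rho)$ is already a $2k$-tiling system recognizing $R^{(k)}$, of alphabetic ratio $|\Lambda|/|\Sigma_\$| = (2|\Sigma|+2)/(|\Sigma|+1) = 2$. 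Recall also that $L = \code{Z}{L(\tsetM_2)}$ with $Z\subseteq X\otimes\Sigma_\$^{k,k}$, so that in any pre-image in $L$ a pixel projecting into $\Sigma$ lies in the sub-alphabet $\Lambda' := \{0,1\}\times\Sigma$, of size $|\Lambda'| = 2|\Sigma|$; this sub-alphabet is what yields alphabetic ratio $2$ over $\Sigma$ itself.

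I would build a $2k$-tiling system $(\Sigma,\Lambda',M'_{2k},\rho')$ for $R$ with $\rho' = \rho|_{\Lambda'}$. For $p\in R$ of size $(m_0,n_0)$, Definition~\ref{def-padded} gives a picture $p'\in R^{(k)}$ whose north-west $m_0\times n_0$ sub-picture is $p$ and whose remaining (at most $k$) rows and columns carry only $\$$; pick a pre-image $q\in L$ of $p'$. Then $\bar q := q_{\subd{1}{1}{m_0}{n_0}}$ has all pixels over $\Lambda'$ and $\rho'(\bar q) = p$. Let $M'_{2k}$ be the set of all $2k$-tiles of $\dhat{\bar q}$, taken over all $p\in R$, where the east and south borders of $\dhat{\bar q}$ produced by the cut carry $\#$. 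This makes $R\subseteq\rho'(L(M'_{2k}))$ immediate, with alphabetic ratio $|\Lambda'|/|\Sigma| = 2$ and tile size $2k$.

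The substance of the proof is the reverse inclusion $\rho'(L(M'_{2k}))\subseteq R$. Given $s$ over $\Lambda'$ with all $2k$-tiles of $\dhat{s}$ in $M'_{2k}$, I would reconstruct a picture $\tilde s \in L$ by gluing onto $s$, to the east and south, the \emph{canonical} encoded padding. Two facts drive this. \textbf{(i)} By Theorem~\ref{thm-size-kSLT-padded}, the padding part of any pre-image of a picture of $R^{(k)}$ has a fixed, position-independent shape --- a block of $\flat$'s flanked by the counters $1,\dots,k$, together with a fixed finite set of ``mixed'' boundary $k$-blocks where content meets padding --- confined to the last $\le k$ rows and columns, hence completely determined and visible inside any $2k\times 2k$ window abutting the south-east corner. \textbf{(ii)} Since $Z$ is a comma-free picture code, by Proposition~\ref{prop-X-k-local} and Theorem~\ref{th-SLTCode} the $k$-tessellation of any picture tiled by $\tsetM_{2k}$ is forced to align with the $k$-grid, so that $\tsetM_{2k}$-membership of a $2k$-tile reduces to the $\tsetM_2$-legality of the quadruple of adjacent $B_k$-symbols it covers. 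Combining \textbf{(i)} and \textbf{(ii)}: $\tilde s$ has both sides multiples of $k$, and every $2k$-tile of $\dhat{\tilde s}$ either lies strictly inside $s$ --- then it is an interior (no-$\#$) tile of $M'_{2k}$, hence a $2k$-tile of some genuine $\dhat{q}$ away from the border, hence in $\tsetM_{2k}$ --- or it meets the padding/border region, where block by block it matches the content/padding, padding/padding and border adjacencies that occur in genuine pre-images (recorded in $M'_{2k}$ via the ``mixed'' tiles), hence again in $\tsetM_{2k}$. Therefore $\tilde s\in L$, so $\rho(\tilde s)\in R^{(k)}$; and since $\rho(\tilde s)$ is $\rho'(s)$ padded canonically with $\$$, Definition~\ref{def-padded} forces $\rho'(s)\in R$. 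Finally $|\Lambda'| = 2|\Sigma|$ gives ratio exactly $2$, the tile size is $2k$, and $k = \mathcal{O}(\lg n)$ is inherited from Lemma~\ref{lemma-main2}.

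I expect the reverse inclusion to be the main obstacle: one must check that the $\#$-capped $2k$-tiles recorded in $M'_{2k}$ carry enough information to re-grow the padding uniquely, i.e. that the bounded content visible behind the new east/south border determines the unique legal continuation with encoded $\$$-symbols. The comma-free rigidity of $Z$ is what rules out a spurious misaligned re-tessellation near that border, and the fully explicit padding pre-image of Theorem~\ref{thm-size-kSLT-padded} is what makes the continuation canonical; the remaining work --- an enumeration of the border and corner $2k$-tiles, including the partial boundary blocks --- is a routine but heavy L-shaped analogue of the handling of the last incomplete block in the word-language version of the result.
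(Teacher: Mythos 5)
Your plan follows the paper's own proof essentially step for step: it starts from Lemma~\ref{lemma-main2}, restricts to the $2|\Sigma|$-letter sub-alphabet $\{0,1\}\times\Sigma$, caps the encoded-\$ fringe with the border symbol $\#$ in the $2k$-tiles (the paper does this by an explicit delete-and-substitute operation on $\tsetM_{2k}$, you by collecting the $2k$-tiles of cut genuine pre-images), and proves the two inclusions exactly as the paper does --- the forward one by cutting a padded pre-image, the reverse one by re-growing the encoded padding using the comma-free alignment of $Z$ and the explicit padding pre-image of Theorem~\ref{thm-size-kSLT-padded}. The reconstruction step you defer is handled at a comparable level of detail in the paper (which reads off the missing width from the code-word prefix visible in the border tiles and then reinstates the deleted tiles), so your proposal is correct and takes essentially the same route.
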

\proof Let $\Lambda, L, \rho$ be defined as in Lemma~\ref{lemma-main2}, and let $\tsetM_{2k}$ be the set of $2k$-tiles defining the $2k$-SLT language $L$.
Let $\Lambda_{\$} \subseteq \Lambda$ be the set 
$\{ (0,\$), (1,\$) \}$, and let $\Theta$ be $\Lambda - \Lambda_{\$}$. Hence, $|\Theta| = 2 |\Sigma_\$|-2= 2|\Sigma|$.
Define the projection $\pi:\Theta\to \Sigma$ as the restriction of $\rho$ induced by the subset $\Theta$ of $\Lambda$.
\par\noindent
 For simplicity, we assume in the following that every picture of $R$ has both horizontal and vertical sizes greater or equal to $2k-1$. The cases when one or both dimensions are smaller are analogous, requiring only to consider a larger border. 
 \par\noindent
 We define a set of $2k$-tiles $\tsetM'$ over alphabet $\Theta$, defining a $2k$-SLT language $L(\tsetM')$, in three steps: 
	\begin{enumerate}
	\item Delete, from the set $\tsetM_{2k}$, the $2k$-tiles $\theta$ having at least one of the two  forms:
	\begin{itemize}
	 \item 
	$\theta = s\obar z \obar z'$ with $s \in \Lambda^{2k,2k-2}, z\in \Lambda_\$^{\ominus 2k} $, $z' \in \Lambda_\$^{\ominus 2k} \cup \#^{\ominus 2k}$ (i.e, having at least two columns containing elements in $\Lambda_\$ $, 
	or one column in $\Lambda_\$$ and one with $\#$ ); 
	\item $\theta = s\ominus z \ominus z'$ with $s \in \Lambda^{2k,2k-2}, z\in \Lambda_\$^{\obar 2k} $, $z' \in \Lambda_\$^{\bar 2k} \cup \#^{\bar 2k}$ (i.e, having at least two rows containing elements in $\Lambda_\$ $, 
	or one row in $\Lambda_\$$ and one row with $\#$ ).
	\end{itemize}

	\item Substitute (in the set obtained after the first step) all the occurrences of the elements in $\Lambda_\$ $ with the symbol \#; 
	\end{enumerate}
{For instance, in step 1 the $2k$-tile 
	\scalebox{0.6}{$
		\begin{array}{|p{1cm}:c:c|}\hline
		& z&z' \\
		\centering\scalebox{2}{s}&&\\
		& &\\ \hline
		\end{array}
		$}
	 with $s \in (\Theta)^{2k,2k-2}, z,z' \in \Lambda_\$^{\ominus 2k} $, is deleted;
	in step 2 the $2k$-tiles: 
\scalebox{0.6}{$
\begin{array}{|p{1cm}:c|}\hline
 & z \\
 \centering\scalebox{2}{t}&\\
 & \\ \hline
 \end{array}
 $} and 
\scalebox{0.6}{$\begin{array}{|p{1cm}:c|}\hline
	& z\\
	\centering\scalebox{2}{u}&\\ \hdashline
	$z''$ & \\ \hline
	\end{array}
	$},
 with $t \in(\Theta)^{2k,2k-1}, z \in \Lambda_\$^{\ominus 2k} $, $u \in (\Theta)^{2k-1,2k-1}$ and $z''\in \Lambda_\$^{\obar 2k}$, are respectively replaced by: 
\scalebox{0.6}{$\begin{array}{|p{1cm}:c|}\hline
 & \# \\
 \centering\scalebox{2}{t}&\dots\\
 & \# \\ \hline
 \end{array}
$} and     
\scalebox{0.6}{$\begin{array}{|p{1cm}:c|}\hline
	& \# \\
	\centering\scalebox{2}{u}&\dots\\ \hdashline
	\# \dots & \# \\ \hline
	\end{array}
	$}
}
\medskip
\par\noindent
From the above construction, one can derive 
$R=\pi(L(\tsetM'))$ as follows. 

\par\noindent 
We first show that $R \subseteq \pi(L(\tsetM'))$. 
Let $p_\Sigma$ be a picture of $R$. By Definition~\ref{def-padded} of padding languages, 
there exists a picture $p\in R^{(k)}$ of the form $(p_{\Sigma}\ominus v_\$)\obar h_\$$ 
for some $v_\$ \in V_k, h_\$ \in H_k$. 
Let $q \in L(\tsetM_{2k})$ be a pre-image pf $p$, 
hence $q = (q_{\Lambda'}\ominus v)\obar h$, with $\pi(q_{\Lambda'})=p_\Sigma, \pi(v)= v_\$, \pi(h)=h_\$ $. 
The $2k$-tiles of $\tsetM_{2k}$ of the form $s \obar \Lambda_\$^{\obar 2k}$, with $s \in \Sigma^{2k-1,2k}$, are replaced in $\tsetM'$ by $2k$-tiles 
of the form $s \obar \#^{\obar 2k}$. 
\par\noindent
A similar argument can be applied to the $2k$-tiles of the form $s \ominus \Lambda_\$^{\ominus 2k}$ for $s\in \Sigma^{2k,2k-1}$ and 
to the $2k$-tiles of the form $(s \obar \Lambda_\$^{\obar 2k-1}) \ominus \Lambda_\$^{\bar 2k}$ for $s\in \Sigma^{2k-1,2k-1}$.
Since $q_{\Lambda'} \in L(\tsetM_{2k})$, all the $2k$-tiles in $\ktile{\#^{\obar +}\ominus(\#^{\ominus +}\obar q_{\Lambda'})}{2k}$ are in $\tsetM_{2k}$, hence also in $\tsetM'$. 
Therefore ${q_{\Lambda'}} \in L(\tsetM')$. Since $\pi(q_{\Lambda'})=p_\Sigma \in R$, it follows that 
$R \subseteq \pi(L(\tsetM'))$. 
\medskip
\par\noindent
We now show that $\pi(L(\tsetM')) \subseteq  R$. 
Let $p_{\Sigma}$ be a picture in $\pi(L(\tsetM'))$ and let $q_{\Lambda'}$ be a pre-image of $p_\Sigma$, hence $q_{\Lambda'}\in L(\tsetM')$. 
By construction of $\tsetM'$, each $2k$-tile $t$ of $\ktile{q_{\Lambda'}}{2k}$  is either in $\tsetM_{2k}$ or has one of the following forms: 
$r \obar \#^{\ominus 2k}$ for $r \in \Sigma^{2k,2k-1}$, or $r \ominus \#^{\obar 2k}$ for $r\in \Sigma^{2k-1,2k}$, or 
$(r \obar \#^{\ominus 2k-1}) \ominus \#^{\obar 2k}$ for $r\in \Sigma^{2k-1,2k-1}$. 
\par\noindent
Let $t\ktile{q_{\Lambda'}}{2k}q_{\Lambda'}$, with $t \not\in  \tsetM_{2k}$. Assume  that $t$ has the following form: 
$r\obar \#^{\ominus 2k}$ for some $r \in \Sigma^{2k,2k-1}$. The other possible forms of $t$ can be dealt with analogously. 
\par\noindent
We assume, without loss of generality, 
that in the comma-free picture code $Z\subseteq X\otimes \Sigma^{k,k}$, $X$  is defined according to  Definition~\ref{def-family-comma-free}; therefore, 
the subpicture $r$ of $t$ must contain a least a row $j$, for some $1\le j \le 2k$,  of the form $sxp$, where $x$ is a (horizontal) code-word and $p,s$ are, respectively, a prefix and  a suffix of horizontal code-words. 
\par\noindent
Notice that $p$ must be shorter than $k$, otherwise $|sxp|=2k-1, |x|=k$. Also, $p\neq \varepsilon$, otherwise $t\in \tsetM_{2k}$. 
Let $|p|=i$, with $1\le i <k$. Since $t$ is in $\tsetM'$, then there is a $2k$-tile $t'\in \tsetM_{2k}-\tsetM'$, such that the same row $j$ 
has the form $xy$ where $y$ is a codeword of the form $ps'$, for some $s' \in \Lambda_\$^{k-i}$. 
Therefore, $t'$ is in $\Lambda^{2k,k+i}\obar\Lambda_\$^{2k,k-i}$; $t'$ was deleted from $\tsetM_{2k}$ when 
defining $\tsetM'$, together with some ``intermediate'' $2k$-tiles in $\Lambda^{2k,k+i+1}\obar\Lambda_\$^{2k,k-i-1}$, $\Lambda^{2k,k+i+2}\obar\Lambda_\$^{2k,k-i-2}$, etc. 
\par\noindent
Hence, it is possible to identify, based on $\tsetM'$ and the definition of comma-free codes, all and only the $2k$-tiles of $\tsetM_{2k}$ which were deleted when defining $\tsetM'$. 
Hence, we can construct 
a picture $q = (q_{\Lambda'}\ominus v)\obar h$, with $\pi(q_{\Lambda'})=p_\Sigma$, for some $v,h \in \Lambda_\$^{+,+}$,
 such that 
$q \in L(\tsetM_{2k})$. Therefore, $\pi(q) \in R^{(K)}$. By definition of padding language, it follows that $p_\Sigma \in R$.
\qed

\section{Conclusion}\label{s-concl}
Our main result (Theorem~\ref{thm-main}) shows that every  recognizable picture language  is the projection with alphabetic ratio 2 of a strictly locally testable language. 
Moreover, if $n$ is the size of the local alphabet of a tiling system recognizing the language, then the order of testability is $\mathcal{O}(\lg n$).
A curious example is that any black-and-white recognizable picture is the  projection of a  strictly locally testable language on a four letter alphabet.

\par
  The proof relies on two novel results having a potential interest of their own:
in Section~\ref{sect-newFamilyCodes} a new family of 2D comma-free codes having a precise numerosity  bound, and  in Section~\ref{ssect-SLTofEncodedPict} the  property that that a picture morphism mapping letters to comma-free code-pictures, transforms local pictures into SLT ones.

\par
This result can be placed next to the similar ones for regular word languages (\emph{v.s.} Section~\ref{sect:introd} and~\cite{DBLP:journals/ijfcs/Crespi-ReghizziP12}) 
and for tree languages~\cite{DBLP:conf/lata/Crespi-Reghizzi21}. 
Altogether, they give evidence that, for three significant language families the same property, that we may call the Extended Medvedev's theorem with alphabetic ratio two, holds. 
Differently said, in the three cases the alphabetic ratio of two is sufficient and necessary to characterize a language as the morphic image of a strictly locally testable language.
\par 
Although the three cases encompass mathematical objects of quite different kinds, all of them satisfy the prerequisite that a (non-extended) Medvedev's theorem holds, which is based on a notion of locality, respectively, for words, for rectangular arrays, 
and for tree graphs. In the future, it would be interesting to check whether other families endowed with the basic Medvedev's theorem also have the above property.
\par
Beyond these cases, some loose resemblance may be seen between our result for \rec~and some studies on 2D cellular automata (a classic one is~\cite{ARSmith1971}) that study the tradeoff of two parameters: the size of the 
cell neighborhood and the size of the state set. Clearly, the first parameter is analogous to the order of $k$-tile, and the second one to the size of the cell alphabet.
However, there are of course fundamental differences between the two models.

\ifarXiv \bibliographystyle{plain}
\else \bibliographystyle{abbrv}
\section*{\refname}
\fi
\bibliography{automatabib}

\end{document}